%% file: aoas-article.tex
\documentclass[aoas, preprint]{imsart}

\RequirePackage{amsthm,amsmath,amsfonts,amssymb}
\RequirePackage[authoryear]{natbib}
\RequirePackage[colorlinks,citecolor=blue,urlcolor=blue]{hyperref}
\RequirePackage{graphicx}

\usepackage{dsfont}
\usepackage{natbib}
\usepackage{cases}
\usepackage{color}
\usepackage{xcolor}
\usepackage{url}
\usepackage{bm}
\usepackage{mathtools}
\usepackage{booktabs}
\usepackage{stmaryrd}
\usepackage{xfrac}
\usepackage{algorithm}
\usepackage[noend]{algorithmic}
\usepackage{thm-restate}
\usepackage{mathrsfs}
\usepackage{graphicx}
\usepackage{subcaption}
\usepackage{tikz}
\usetikzlibrary{trees}
\usetikzlibrary{positioning}
\usepackage{siunitx}
\usepackage{multirow}
\usepackage{xcolor}    % pour \color et \definecolor
\usepackage[normalem]{ulem} % pour \sout
\usepackage{mdframed}  % pour l’environnement mdframed

\startlocaldefs
\theoremstyle{plain}

\newtheorem{theorem}{Theorem}[section]

\theoremstyle{definition}

\newtheorem{proposition}[theorem]{Proposition}

\newtheorem{remark}{Remark}

\endlocaldefs

\makeatletter
\renewcommand*{\@seccntformat}[1]{\csname the#1\endcsname.\quad}
\makeatother

\newcommand{\transp}{\top}
\renewcommand{\leq}{\leqslant}
\renewcommand{\geq}{\geqslant}

\newcommand{\balpha}{\boldsymbol{\alpha}}
\newcommand{\bnu}{\boldsymbol{\nu}}

\newcommand{\by}{\boldsymbol{y}}
\newcommand{\bx}{\boldsymbol{x}}
\newcommand{\ba}{\boldsymbol{a}}
\newcommand{\bz}{\boldsymbol{z}}

\newcommand{\bu}{\boldsymbol{u}}
\newcommand{\bv}{\boldsymbol{v}}
\newcommand{\bs}{\boldsymbol{s}}

\newcommand{\R}{\mathbb{R}}
\newcommand{\I}{\mathcal{I}}
\newcommand{\cA}{\mathcal{A}}

\newcommand{\cC}{\mathcal{C}}

\newcommand{\X}{\mathcal{X}}

\newcommand{\C}{\mathcal{C}}
\newcommand{\D}{\mathcal{D}}
\newcommand{\M}{\mathcal{M}}

\newcommand{\cR}{\mathcal{R}}

\newcommand{\Train}{{\operatorname{Train}}}
\newcommand{\Calib}{{\operatorname{Cal}}}
\newcommand{\Test}{{\operatorname{Test}}}
\newcommand{\err}{{\operatorname{error}}}
\newcommand{\tCalib}{{\widetilde{\operatorname{Cal}}}}

\renewcommand{\epsilon}{\varepsilon}

\newcommand{\E}{\mathbb{E}}
\renewcommand{\Im}{\operatorname{Im}}
\renewcommand{\P}{\mathbb{P}}

\definecolor{darkgreen}{RGB}{0,100,0}

\definecolor{darkblue}{RGB}{25,60,170}
\newcommand{\darkblue}[1]{\textcolor{darkblue}{#1}}

\newcommand{\indic}{\mathbf{1}}
\newcommand{\as}{\qquad \mbox{a.s.}}

\newtheorem{assumption}{Assumption}

\newif\ifrevisions
\revisionstrue      % show revisions
\definecolor{Yannig}{HTML}{7570B3} % violet moyen
\definecolor{Yvenn}{HTML}{E7298A} % magenta/framboise
\definecolor{Bachir}{HTML}{A6761D} % brun doré
\definecolor{JeanMichel}{HTML}{D95F02} % orange brûlé

\newenvironment{authorblock}[1]{%
\begin{mdframed}[
  hidealllines=true,
  leftline=true,
  linewidth=3pt,
  linecolor=#1,
  skipabove=6pt,
  skipbelow=6pt,
  innerleftmargin=8pt,
  innerrightmargin=0pt,
  innertopmargin=2pt,
  innerbottommargin=2pt
]
}{%
\end{mdframed}
}

\begin{document}

\begin{frontmatter}
\title{Probabilistic forecasting of Electric Vehicle charging 
in a time-varying hierarchical infrastructure}
\runtitle{Probabilistic hierarchical forecasting for EV charging demand}
\runauthor{Principato et al.}

\begin{aug}
%%%%%%%%%%%%%%%%%%%%%%%%%%%%%%%%%%%%%%%%%%%%%%%
%% Only one address is permitted per author. %%
%% Only division, organization and e-mail is %%
%% included in the address.                  %%
%% Additional information such as            %%
%% identifying the corresponding author must %%
%% be included in in the Acknowledgments     %%
%% section if necessary.                     %%
%% ORCID can be inserted by command:         %%
%% \orcid{0000-0000-0000-0000}               %%
%%%%%%%%%%%%%%%%%%%%%%%%%%%%%%%%%%%%%%%%%%%%%%%
\author[A,B]{\fnms{Guillaume}~\snm{Principato}\ead[label=e1]{guillaume.principato@universite-paris-saclay.fr}}
\author[A,B]{\fnms{Yvenn}~\snm{Amara-Ouali}\ead[label=e2]{yvenn.amara-ouali@edf.fr}}
\author[A,B]{\fnms{Yannig}~\snm{Goude}\ead[label=e3]{yannig.goude@edf.fr}}
\author[B]{\fnms{Bachir}~\snm{Hamrouche}\ead[label=e4]{bachir.hamrouche@edf.fr}}
\author[A,C]{\fnms{Jean-Michel}~\snm{Poggi}\ead[label=e5]{jean-michel.poggi@u-pariscite.fr}}

%%%%%%%%%%%%%%%%%%%%%%%%%%%%%%%%%%%%%%%%%%%%%%
%% Addresses                                %%
%%%%%%%%%%%%%%%%%%%%%%%%%%%%%%%%%%%%%%%%%%%%%%
\address[A]{Laboratoire de Mathematiques d'Orsay, Université Paris-Saclay, CNRS, INRIA \printead[presep={ ,\ }]{e1}}
\address[C]{Université Paris-Cité \printead[presep={ ,\ }]{e5}}
\address[B]{Electricité de France R\&D \printead[presep={,\ }]{e2,e3,e4}}

\end{aug}

\begin{abstract}
Uncertainty quantification for electric vehicle (EV) 
charging demand is essential to turn the increasing pressure 
of EV adoption on power systems into a source of flexibility. 
We study a large-scale dataset of public charging stations 
in Scotland, where demand exhibits a hierarchical structure, 
organized across multiple aggregation levels, 
from individual charging points to regional and national scales.
The charging infrastructure is continuously evolving, 
with new stations deployed and existing ones occasionally removed,
which calls for forecasting and uncertainty quantification methods 
tailored to this evolving structure.
To address these challenges,
we combine hierarchical forecasting with recent advances in
conformal prediction with missing data. 
The conformal procedures considered produce valid prediction intervals 
at every node of the hierarchy and the algorithm
CP-MNR-Nested$^{\star}$ successfully accounts 
for the time-varying hierarchical structure of the data.
\end{abstract}

\begin{keyword}
\kwd{Electric vehicles}
\kwd{conformal prediction}
\kwd{forecast reconciliation}
\kwd{machine learning}
\end{keyword}

\end{frontmatter}

%%%%%%%%%%%%%%%%%%%%%%%%%%%%%%%%%%%%%%%%%%%%%%
%% Please use \tableofcontents for articles %%
%% with 50 pages and more                   %%
%%%%%%%%%%%%%%%%%%%%%%%%%%%%%%%%%%%%%%%%%%%%%%
%\tableofcontents

\section{Introduction} \label{sec:intro}
    The introduction begins by motivating 
    the need for uncertainty quantification 
    in electric vehicle (EV) charging demand 
    forecasting and then turns to the specific 
    challenges posed by EV charging infrastructures, 
    with data collected from ChargePlace Scotland,
    the EV charging public network in Scotland,
    serving as a study case.
    It concludes with a brief overview of related work 
    and the organization of the paper.

    \smallskip

    % \subsection{Motivation}
    % --- Motivation EDF ---
    The large-scale deployment of electric vehicles 
    is reshaping charging infrastructures 
    and electricity consumption patterns, 
    both in magnitude and in temporal structure, 
    and is expected to accelerate further in 
    the coming years \citep{IEA_EV_2022}. 
    This directly concerns distribution system 
    operators and charging network operators, 
    who must plan for and operate a load that 
    keeps growing in scale, 
    evolves as new charging points are continuously deployed, 
    and must be tracked across multiple levels 
    of aggregation, from individual charge points 
    to entire networks 
    \citep{muratori2018impact, powell2022charging}. 

    Beyond these challenges, it also creates new opportunities 
    to leverage EV charging as a flexible resource 
    for power system operations. 
    In particular, smart charging has been identified
    as a key mechanism to help balance supply and demand 
    in the presence of intermittent renewable energy sources 
    \citep{stogl2024electric, bogdanov2024role}. 
    Realizing this potential, however, 
    requires reliable forecasts of EV charging demand 
    \citep{alaraj2025machine, wang2025adaptive}, 
    which are subsequently used as inputs to optimization 
    and control algorithms 
    \citep{fescioglu2023electric, diazlondono2024comparison}.

    % --- Uncertainty quantification on top of point forecasts ---
    In this context, point forecasts alone are not sufficient, 
    as they only
    provide a partial description of the energy demand, 
    and fail to capture its variability. 
    Decision-making under uncertainty is ubiquitous in 
    power system operations \citep{gneiting2014probabilistic}, 
    and EV charging demand is no exception: 
    optimization procedures must account for the best 
    and worst case scenarios 
    to avoid risky decisions that could lead to grid instability 
    or inefficient use of resources \citep{nutkani2024impact}.
    Probabilistic forecasting has therefore become 
    a standard tool for EV charging demand forecasting
    \citep{khan2023comparison,zheng2023coherent,ostermannhaug2024probabilistic,amara2025quantifying}, 
    with the goal of producing prediction intervals around point forecasts that
    cover the true values with a specified probability.

    % \subsection{EV charging infrastructure}
        \begin{figure}[t]
            \centering
            \begin{tikzpicture}[
                grow=down,
                edge from parent/.style={draw},
                level 1/.style={sibling distance=6cm,level distance=0.75cm},
                level 2/.style={sibling distance=2.5cm,level distance=0.75cm}
                ]
                \node {Scotland}
                    child {node {Glasgow}
                        child {node {Station 386}}
                        child {node {$\cdots$}}
                        child {node {Station 406}}
                    }
                    % child {node {$\cdots$}
                    %     child {node {$\cdots$}}
                    % }
                    child {node {$\cdots$}
                        child [edge from parent/.style={draw=none}] {node {}}
                        child {node {$\cdots$}}
                        child [edge from parent/.style={draw=none}] {node {}}
                    };
            \end{tikzpicture}
            \caption{Hierarchical structure 
            of EV charging energy demand in Scotland.}
            \label{fig:illustration_hierarchy}
        \end{figure}
    % --- Hierarchical structure of EV charging demand ---
    EV charging demand forecasting is inherently
    a multi-resolution problem, as operational decisions 
    are made across a range of spatial and organizational levels, 
    each tied to a distinct market or grid function. 
    For instance, charge point operators need station-level 
    forecasts to schedule maintenance and manage local 
    queuing and on-site pricing;
    retailers and balance responsible parties 
    need regional aggregates to size day-ahead
    and intraday market positions; 
    and transmission system operators need 
    national aggregates to dimension reserve requirements 
    and maintain system balance. 
    This range, from individual charging stations 
    to regional aggregates and national systems, 
    has been studied at various granularities 
    \citep{kim2021forecasting, yi2022electric, ostermannhaug2024probabilistic}. 
    
    The forecasting task across these granularities 
    is structurally interdependent, 
    as the energy demand at a given level of aggregation 
    is the sum of the energy demand at the lower levels. 
    This construction induces a hierarchical structure, 
    which can be exploited to share information across 
    the different levels of aggregation 
    to improve the accuracy of forecasts at each level 
    of the hierarchy. 
    For instance, \citet{khan2023comparison} compare forecasting 
    accuracy across different spatial aggregation levels, 
    from individual charging piles to their sum, 
    and \citet{zhang2023charging} study a similar hierarchy 
    for the city of Beijing using large-scale 
    real-world driving and charging data, 
    both for point forecasting; 
    \citet{vanetten2024largescale} propose a global model 
    for point forecasting across multiple stations 
    in four cities, while \citet{zheng2023coherent} 
    directly exploit the hierarchical structure 
    by enforcing consistency across a network 
    of charging stations in the case of probabilistic forecasting.

    \begin{figure}[t]
    \centering

    % ===================== GAUCHE =====================
    \begin{minipage}[t]{0.25\textwidth}
    \vspace{0pt}
        \centering

    \includegraphics[
        width=\linewidth,
        trim=90 0 90 0,
        clip
    ]{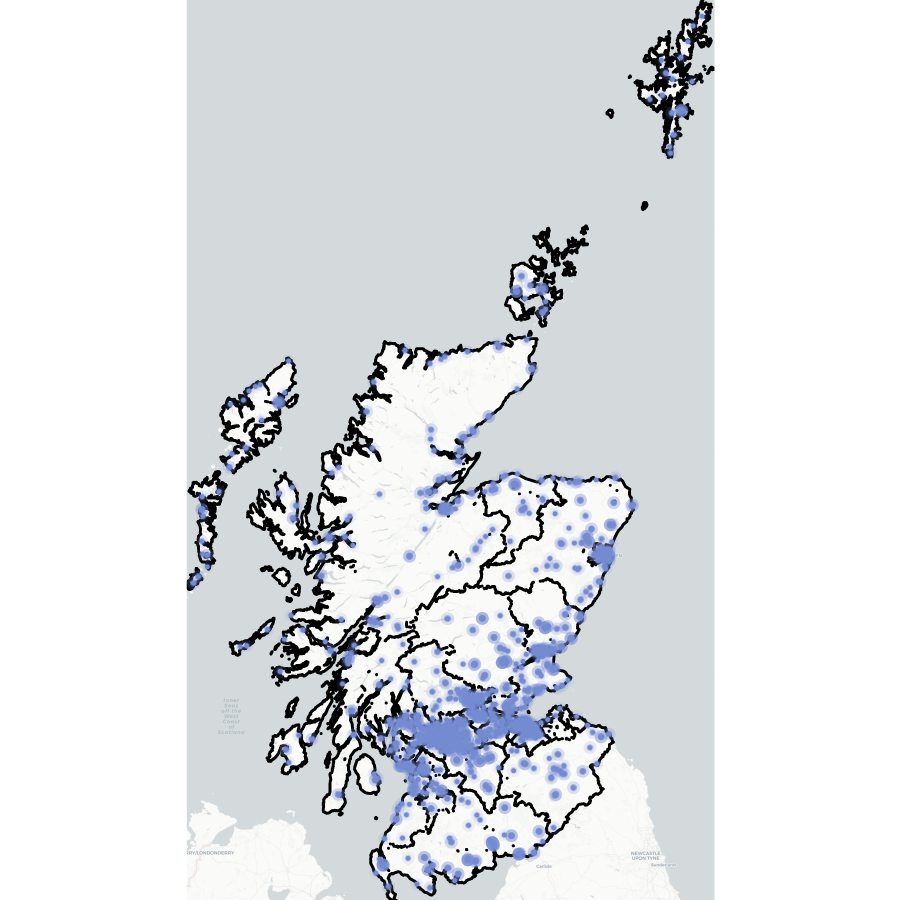}

    \includegraphics[width=\linewidth,
        trim=0 0 0 0,
        clip
    ]{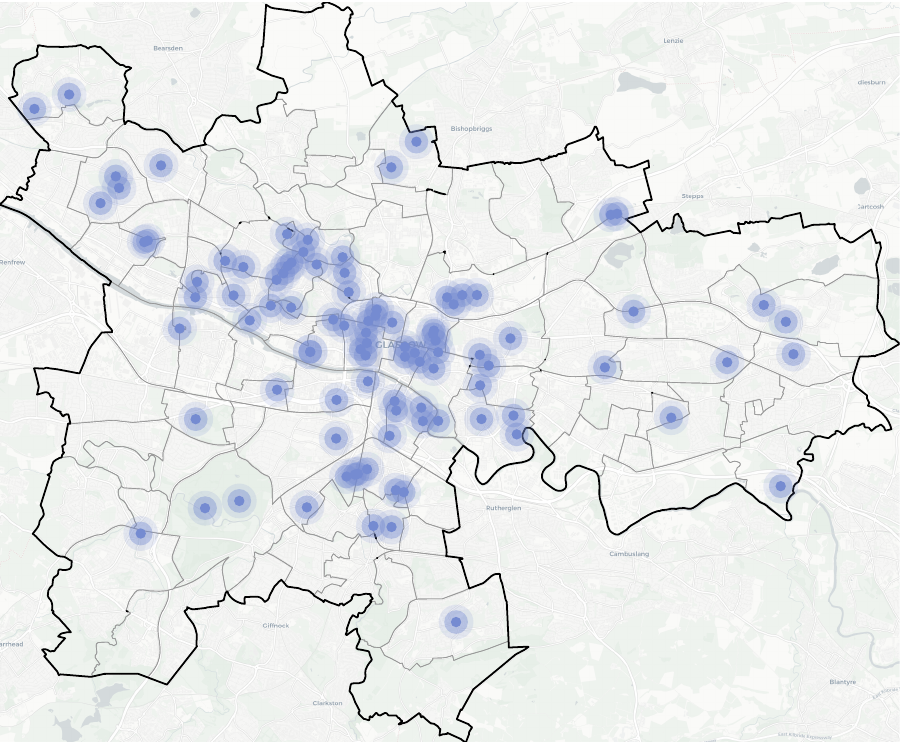}
    
    \end{minipage}
    \hspace{0.25cm}
    % ===================== DROITE =====================
    \begin{minipage}[t]{0.7\textwidth}
    \vspace{0pt}
        \centering

    \includegraphics
    [width=\linewidth,
        trim=5 0 0 0,
        clip
    ]{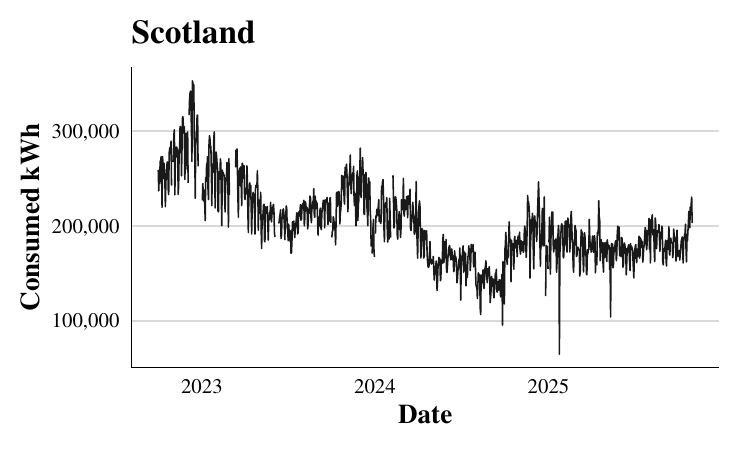}

    \vspace{-0.65cm}

    \includegraphics[width=\linewidth,
        trim=5 0 0 0,
        clip
    ]{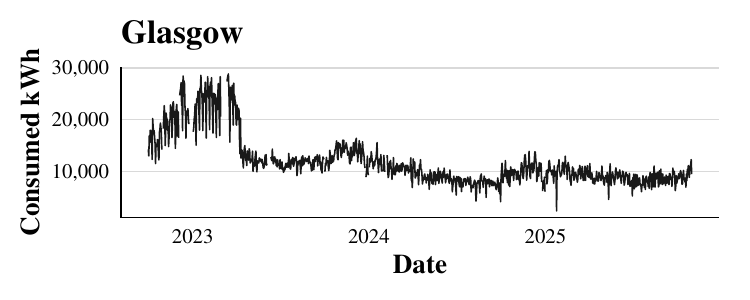}

    \end{minipage}

    \caption{The left panel represents the spatial
    repartition of the charging stations in Scotland (top) 
    and in Glasgow (bottom).
    Stations are represented by blue dots, 
    and the size of the dot is proportional
    to the number of charging points within each station.
    The right panel shows the corresponding time series of daily energy demand.}
    \label{fig:aggregatedseries}    
    \end{figure}

    % --- ChargePlace Scotland dataset is interesting ---
    In this article, we explore a hierarchical structure 
    within the public charging network in Scotland 
    through data from ChargePlace Scotland (see Figure~\ref{fig:illustration_hierarchy}).
    The data, first studied by \citet{davies2024plugging}, 
    exhibit a rich administrative hierarchical structure, 
    with 1398 charging stations distributed 
    across 32 local authorities,
    and a national aggregate at the country level.
    It embodies a central modeling challenge in EV demand forecasting: 
    charging infrastructure evolves over time.
    The adoption of electric vehicles drives 
    the deployment of new charging stations, 
    while existing stations may be temporarily 
    or permanently removed from the network, 
    e.g., due to maintenance or changes in ownership. 
    Since only active stations contribute to aggregate demand,
    it induces a so-called time-varying hierarchical structure.
    These features raise two modeling challenges. 
    The first is to forecast demand for 
    new stations, with limited or no historical data. 
    The second is to account for the current state 
    of the infrastructure, 
    both for point forecasting and 
    for uncertainty quantification.
    The latter being an important source of variability 
    for EV charging demand.

    % --- Additional interesting aspects of the dataset ---
    In addition to these modeling considerations, 
    data from ChargePlace Scotland exhibit 
    several challenging features
    documented by \citet{davies2024plugging}. 
    Despite a relatively high penetration 
    of electric vehicles in Scotland,
    total energy demand at the national level shows 
    a decreasing trend over the period of interest 
    (see the right hand side of Figure~\ref{fig:aggregatedseries}).
    This pattern reflects changes in user 
    behavior driven by external factors,
    including progressive charging tariffs
    introduced as part 
    of policy incentives. 
    It is particularly visible in the city of Glasgow, 
    where the energy demand dropped on April 4th, 2023,
    following the introduction of the new pricing scheme 
    in the public charging network
    (see the right hand side of Figure~\ref{fig:aggregatedseries}).
    The introduction of the tariffs leads 
    to a heterogeneous response across the country, 
    as it depends on the local authority, 
    and even on the ownership 
    of the charging station \citep{davies2024plugging}. 
    These effects add to other sources of non-stationarity 
    inherent to EV charging demand time series, 
    at both aggregated and disaggregated levels.

    % \subsection{Statistical related work}
    % --- Base forecast ---
    Operational forecasting of EV charging demand 
    requires models that not only achieve high predictive accuracy 
    but also satisfy practical constraints 
    such as fast and scalable inference.
    For tabular data and multivariate time series, 
    it is well known that tree ensembles are highly competitive 
    and easy-to-calibrate forecasting models. 
    For instance, frameworks such as Random Forests \citep{breiman2001random}, 
    and XGBoost \citep{chen2015xgboost}, are widely used in machine learning 
    competitions \citep{makridakis2022, bojer2021kaggle} 
    and industrial applications \citep{goude2013local}. 
    Recently, these models have been challenged by deep learning-based approaches.
    In the context of EV charging, \citet{vanetten2024largescale} 
    consider a global model to forecast energy demand across 55 charging stations in Boulder, 
    Palo Alto, London, and Perth, demonstrating the effectiveness of deep learning architectures 
    like Transformers and N-HiTS \citep{challu2023nhits}. 
    Alongside deep learning, pre-trained foundation models have driven significant progress 
    in predictive modeling for both tabular and time series data. 
    Models such as TabPFN \citep{hollmann2025tabpfn} and TabICL \citep{qu2025tabicl} 
    achieve state-of-the-art performance on benchmark tasks like FevBench \citep{shchur2025fev} 
    and TabArena \citep{erickson2026tabarena}. 
    Similar to gradient boosting machines, 
    these foundation models can be readily adapted to time series forecasting 
    by incorporating time-dependent features or lagged variables. 
    In this work, we utilize TabICL due to its open-source availability and high inference efficiency. 
    However, while these automated machine learning frameworks are highly efficient, 
    many applied settings require domain experts to design interpretable models 
    by integrating physical or systemic knowledge directly into the model equations. 
    This ensures a necessary trade-off between predictive performance and interpretability. 
    Generalized Additive Models (\citealp[GAMs]{wood2017generalized}) naturally facilitate this integration, 
    making them highly popular in power system management \citep{pierrot2011short, de2023adaptive, antoniadis2024statistical}. 
    To achieve a complete and up-to-date overview, 
    we propose considering these three types of forecasting models, 
    along with a simple combination of them, at all three levels of the hierarchy (see Section~\ref{sec:predictors}).

    % --- Related work on forecast reconciliation ---
    The aforementioned forecasting methods 
    are tailored for the large number of 
    charging stations in the dataset,
    but they do not account for the 
    hierarchical structure of the data.
    This is a common limitation in the 
    context of hierarchical forecasting,
    which is commonly addressed by 
    producing forecasts that are agnostic 
    to the hierarchical structure,
    and to subsequently adjust them to 
    ensure that they follow the same
    aggregation constraints of the data.
    A well-established framework for this adjustment is 
    forecast reconciliation, introduced by
    \citet{athanasopoulos2009hierarchical,hyndman2011optimal}. 
    The idea is to combine forecasts from different levels
    of aggregation through linear combinations to 
    produce coherent forecasts with 
    respect to the hierarchical structure.
    A natural way to achieve this is through
    a linear projection of the base forecasts 
    onto a so-called coherent subspace \citep{panagiotelis2021forecast,wickramasuriya2019optimal}. 
    Techniques of forecast reconciliation have been applied 
    in the context of EV charging demand forecasting 
    by \citet{zheng2023coherent} to produce probabilistic forecasts
    based on reconciled scenarios. 

    % --- Conformal prediction ---
    Recently, conformal prediction has emerged as a particularly 
    attractive framework 
    to quantify uncertainty around point forecasts,
    as it produces prediction sets with finite-sample 
    and distribution-free coverage guarantees 
    under mild assumptions \citep{vovk2005algorithmic,lei2018distribution,barber2021predictive}.
    Its appeal in the context of EV charging demand forecasting 
    \citep{amara2025quantifying} lies in its flexibility: 
    similar to forecast reconciliation,
    conformal prediction can be applied post-hoc
    to any base predictor, 
    including those obtained from complex machine learning models, 
    without the need for retraining \citep{angelopoulos2023gentle}. 
    This conceptual similarity suggests combining them 
    to produce hierarchy-aware prediction intervals,
    as proposed by \citet{principato2024conformal}.
    In particular, the authors show that forecast reconciliation 
    preserves the validity of conformal prediction intervals,
    while potentially yielding tighter prediction sets,
    thereby reducing uncertainty around the forecasts.
    A recent line of work consists in 
    extending conformal prediction to accommodate 
    non-stationarity in time series settings.
    \citet{gibbs2021adaptive} introduce 
    adaptive conformal inference, that updates
    the conformal sets to account for temporal distribution shifts.
    Adaptive conformal inference has been applied in a variety
    of energy forecasting settings, including
    electricity price forecasting by \citet{zaffran2022adaptive}, 
    and EV charging demand forecasting by \citet{amara2025quantifying}.

    \smallskip

    The rest of the paper is organized as follows. 
    Section~\ref{sec:dataset} provides further details 
    about the dataset and formalizes 
    the forecasting problem.
    The modeling emphasizes the hierarchical structure,
    its temporal evolution, and the specific 
    objectives it induces.
    Section~\ref{sec:preliminaries} is devoted to a
    pedagogical exposition of the main ingredients of the methodology,
    namely forecast reconciliation and conformal prediction
    in the context of hierarchical forecasting.
    Section~\ref{sec:methodology} presents the proposed methodology.
    It describes how to combine forecast reconciliation 
    and conformal prediction in order to account
    for the time-varying hierarchical structure
    when quantifying uncertainty.
    Finally, Section~\ref{sec:experiments} describes the experimental setup and
    summarizes the performance of the different methods on the ChargePlace Scotland data.
        
    \section{EV charging energy demand in Scotland} \label{sec:dataset}
    Data is collected from ChargePlace Scotland, 
    a publicly available record of charging sessions 
    from the largest public EV charging network in Scotland.
    The resulting dataset is referred to as
    the ChargePlace Scotland dataset.
    The code used for the experiments is available on the following~\href{https://github.com/PrincipatoG/Probabilistic-Hierarchical-EV}{GitHub repository}.

    \subsection{The ChargePlace Scotland dataset} \label{sec:data}
    % --- Data collection ---
    The period considered goes from October 2022 to October 2025, 
    extracting session-level information and merging it 
    with spatial and technical characteristics of the charging infrastructure, 
    following \citet{gridcharge2025}.
    The spatial attributes include the geographical coordinates of each charging point,
    the address of the charging station, and the local authority to which it belongs.
    Sessions for which these spatial attributes are missing are excluded from the analysis.
    Energy consumption at the session level is then aggregated to obtain daily energy demand 
    at each location with available geographic information, 
    thereby constructing a panel of daily demand series at the charging-station level.

    % --- Description of the variables ---    
    The charging stations are indexed by $i = 1, \ldots, n$, 
    where $n = 1398$ is the total number of stations.
    For each day $t$, let $y_i^{(t)} \in \R$ be the daily energy demand at the $i$-th station. 
    We consider the natural geographical/administrative partition of the $n$ stations into 
    two levels of aggregation, namely the local authority and the country levels.
    For instance, the local authority of Glasgow City contains 96 charging stations, 
    and its daily energy demand is thus the sum of the daily energy demand of these 96 stations.
    This forms a hierarchical structure, 
    which yields a total of $m = 1431$ nodes (1398 stations, 32 local authorities, and 1 national node).
    The hierarchical structure is illustrated in Figure~\ref{fig:illustration_hierarchy}, 
    with the example of the local authority of Glasgow City.
    This grouping is motivated by 
    the fact that local authorities are the level at which 
    charging infrastructure is planned, funded, 
    and reported by Scottish public bodies 
    (e.g., Transport Scotland, ChargePlace Scotland), 
    making them a natural intermediate level between 
    individual stations and the national system 
    for both operational and policy purposes. 
    This level also coincides with existing administrative 
    and demographic statistics (population, population density, 
    car ownership), which can be 
    used as exogenous covariates. 

    Nevertheless, the proposed approach can be adapted 
    to other contexts (e.g., by considering finer territorial 
    partitions such as districts or quarters).
    Let $\by^{(t)} \in \R^m$ be the vector of daily energy demands 
    over all nodes at day $t$,
    and consider a multivariate regression setting, 
    where the goal is to forecast demand at every level of the hierarchy 
    from a vector of features $\bx^{(t)} \in \X$.
    This can be formalized as a supervised learning problem, 
    where the objective is to learn a function $\widehat{f} : \X \to \R^m$ 
    that is an estimator of the conditional expectation $\E\bigl[\by^{(t)} \mid \bx^{(t)}\bigr]$.
    Vectors of features include weather information, 
    calendar effects, lagged demand, and characteristics of the node of interest 
    (e.g., number of charging points, type of chargers and tariffs indicators),
    that are known at the time of forecasting. 
     
    % The notations introduced in this section are summarized in Table~\ref{tab:notations}.
        \begin{table}[t]
            \centering
            \caption{Basic notations.}
            \label{tab:notations}

            \renewcommand{\arraystretch}{1.25}
            \begin{tabular}{p{0.125\linewidth} p{0.125\linewidth} p{0.60\linewidth}}
            \hline

            \multicolumn{1}{l}{\textbf{Indices}} & \textbf{Domain} & \textbf{Description} \\[0.3em]

            $t$ & Days &time index, on a daily basis \\

            $i$  & $[m]$ & node index, $i=1,\dots,m$: \\
            & & \qquad stations, $i=1,\dots,n$ \\
            & & \qquad local authorities aggregate, $i=n+1,\dots,m-1$ \\
            & & \qquad Scotland aggregate, $i=m$ \\
            
            \hline
            \multicolumn{1}{l}{\textbf{Variables}} & \textbf{Domain} & \textbf{Description} \\[0.3em]
            
            $y_i^{(t)}$  & $\R$ & energy demand at node $i$ and day $t$ \\
            $a_i^{(t)} $ & $\{0,1\}$ & infrastructure activity at node $i$ and day $t$ \\[0.75em]

            $\bx^{(t)} $ & $\X$ & vector of features at day $t$ \\
            $\by^{(t)} $ & $\mathbb{R}^m$ & vector of energy demand at day $t$ \\
            $\ba^{(t)}$ & $\{0,1\}^m$ & vector of infrastructure activity at day $t$ \\[0.75em]

            $H$ & $\{0,1\}^{m \times n}$ & hierarchical structure of the complete data \\
            $H^{(t)}$ & $\{0,1\}^{m \times n}$ & refined hierarchical structure at day $t$, i.e., with respect to $\ba^{(t)}$ \\
            \hline
            % \multicolumn{2}{l}{\textbf{Operators}} \\[0.3em]
            % $\widehat{\cdot}$ & quantities obtained from a base model \\
            % $\widetilde{\cdot}$ & quantities obtained from a reconciled model \\
            % \hline
            % \multicolumn{1}{l}{\textbf{Operators}} & \textbf{Description} \\[0.3em]
            % $[n]$ & set of integers $\{1,\dots,n\}$ \\
            % $|S|$ & cardinality of a set $S$ \\
            % $\indic_A$ & indicator function of an event $A$ \\
            % $\Im(M)$ & image of a matrix $M$ \\
            % $M^{\dagger}$ & Moore-Penrose pseudo-inverse of a matrix $M$ \\
            % \hline
            \end{tabular}

        \end{table}

    % --- Hierarchical structure ---
    To model the hierarchical structure of
    EV charging demand in Scotland, 
    we adopt a standard notation in hierarchical forecasting
    and encode the aggregation constraints 
    by a binary structural matrix $H \in \{0,1\}^{m \times n}$.
    Its entries satisfy 
    $H_{i,j}=1$ if the $j$-th station belongs to the $i$-th node, 
    and $H_{i,j}=0$ otherwise--see the toy-example depicted in~\eqref{eq:example_refinement}. 
    Therefore, for all days $t$, the vector energy demand at day $t$
    belongs to the so-called coherent subspace
    $\Im(H) = \bigl\{H\bu : \bu \in \R^n\bigr\} \subset \R^m$.
    Forecast reconciliation
    exploits this property by projecting 
    unconstrained base forecasts, obtained from arbitrary regression models, 
    onto $\Im(H)$ to enforce coherence
    \citep{athanasopoulos2009hierarchical,hyndman2011optimal}.
    
    \subsection{Modeling the infrastructure activity} \label{sec:hierarchical_structure}

    % --- Time varying hierarchical structure ---
    To capture the temporal evolution of the network, 
    EV charging demand is modeled through 
    a hierarchical structure with a time-varying 
    set of active nodes.
    The energy demand at some locations--especially 
    at the station level--may 
    be zero during a certain amount of time
    due to various foreseeable reasons 
    (see Figure~\ref{fig:glasgow_evolvingstructure}).
    For example, if a station does not exist yet,
    or is temporarily closed due to malfunction, renovation work, act of vandalism, 
    or is definitely closed due to change of ownership or decommissioning.
    These foreseeable events are conceptually 
    distinct from other zeros that may occur 
    due to the absence of demand, which are essentially concentrated
    in stations located in rural areas. 
    In the present work, the goal is to leverage 
    the knowledge of the infrastructure activity
    to improve the accuracy of the forecasts,
    rather than to forecast when a station will be active or not.
    Therefore, is it reasonable to assume that 
    the information of whether a node is active
    (i.e., has non-zero demand) is available at the time of forecasting.
    We thus consider that the infrastructure activity is known at day $t$,
    and focus on forecasting the non-zero coordinates of $\by^{(t)}$.
    This information is represented by binary vectors of infrastructure activity,
    defined for each day $t$ as
    \begin{equation}\label{eq:mask}
        \ba^{(t)} \in \{0,1\}^m \,, \qquad \text{ such that } \qquad 
            \begin{cases}
                a_{i}^{(t)} = 0\,, & \text{ if } y_{i}^{(t)} = 0 \,, \\
                a_{i}^{(t)} = 1\,, & \text{ otherwise} \,.
            \end{cases} 
    \end{equation} 

            \begin{figure}[t]
            \centering

            \begin{subfigure}[b]{0.48\textwidth}
                \centering
                \includegraphics[width=\textwidth]{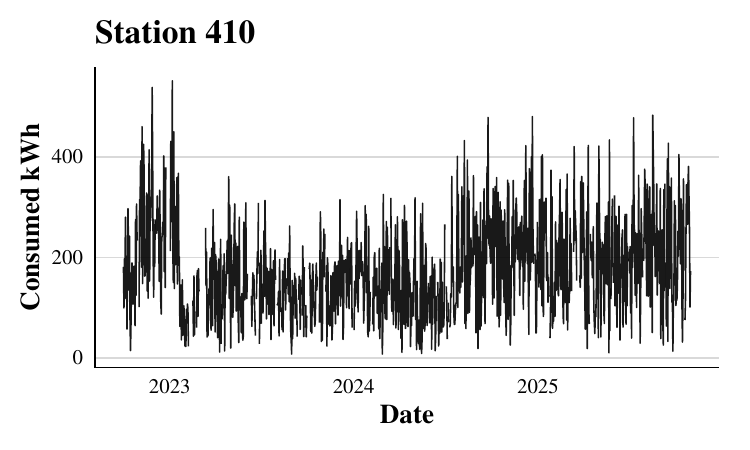}
            \end{subfigure}
            \hfill
            \begin{subfigure}[b]{0.48\textwidth}
                \centering
                \includegraphics[width=\textwidth]{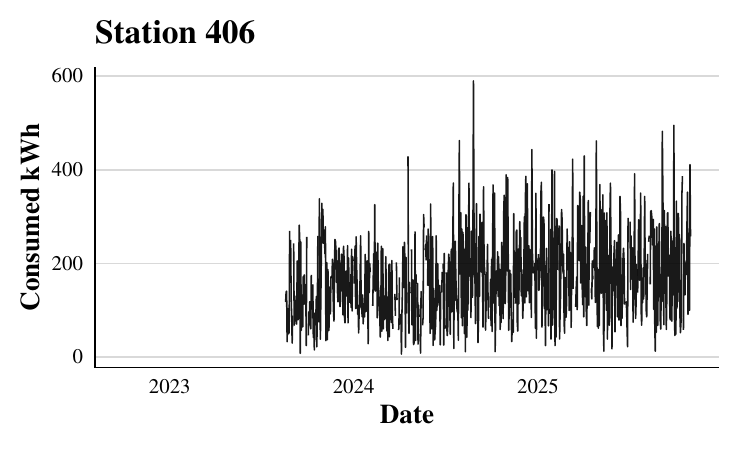}
            \end{subfigure}

            \vspace{-0.35cm}

            \begin{subfigure}[b]{0.48\textwidth}
                \centering
                \includegraphics[width=\textwidth]{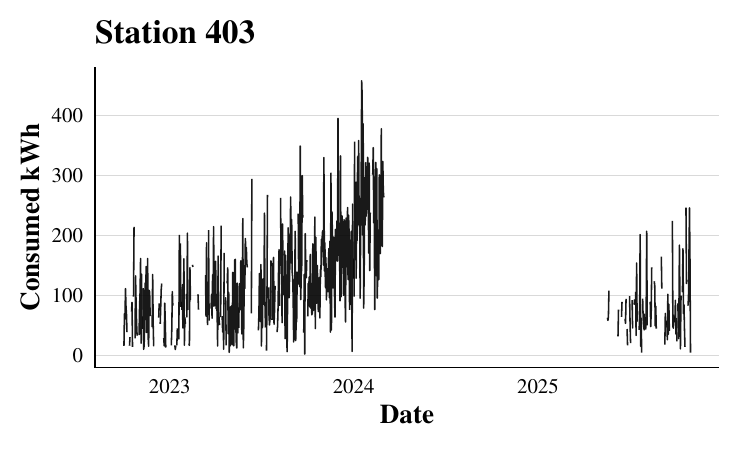}
            \end{subfigure}
            \hfill
            \begin{subfigure}[b]{0.48\textwidth}
                \centering
                \includegraphics[width=\textwidth]{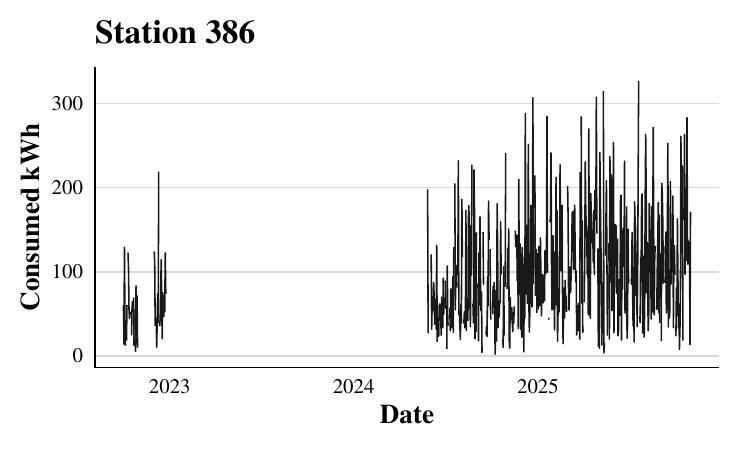}
            \end{subfigure}

            \caption{Time series of daily energy demand at four charging stations 
            in Glasgow City. The figure illustrates three typical node behaviors:
            Station 410 is active throughout the entire period, Station 406 becomes inactive at the end of 2023,
            Stations 403 and 386 both exhibit the effect of predictable temporary closures.
            }
            \label{fig:glasgow_evolvingstructure}
    \end{figure}

    % --- Role of the infrastructure activity ---
    We distinguish the vector of infrastructure activity 
    from the vector of features $\bx^{(t)} \in \X$.
    In particular, infrastructure activity 
    is used only within the learning procedure 
    to exclude inactive nodes 
    from the training data, 
    since these correspond to structural zeros 
    that are not targets of interest and may 
    otherwise introduce bias in the resulting forecasts.
    It is also used post-hoc to refine the hierarchical 
    constraints of the data at day $t$.
    We now formalize this refinement of the hierarchical structure.

    Inactive nodes do not contribute to the aggregate demand,
    which implies that enforcing coherence with respect to the original hierarchy 
    may introduce artificial constraints. 
    This motivates the definition of a
    refined structural matrix and the associated 
    refined coherent subspace for each day $t$.
    The refinement of the coherent subspace removes 
    the inactive components and defines 
    a lower-dimensional coherent subspace 
    that better reflects the operational structure on a given day.
    Precisely, the vector of energy demand
    at day $t$ belong to the potentially 
    smaller coherent subspace $\Im\bigl(H^{(t)}\bigr) \subseteq \Im(H)$, 
    where $H^{(t)}$ is defined as follows:
    \begin{equation}\label{eq:structural_matrix}
        H^{(t)}_{i,j} =     
            \begin{cases}
            0\,, & \text{ if } a_i^{(t)} = 0 \,, \\
            0\,, & \text{ if } \exists i' \in [m] \,, \mbox{ s.t. } \, a_{i'}^{(t)} = 0 \, \mbox{ and } \, H_{i'\!,j} = 1 \,, \\
            H_{i,j}\,, & \text{ otherwise} \,.
            \end{cases}
    \end{equation}

    For example, consider a simplified hierarchy with only three stations and one region.
    Then, if the vector of infrastructure activity at day $t$ is
    $\ba^{(t)} = (1, 0, 1, 1)^{\transp}$, the 
    following refinement of the structural matrix holds.
       \begin{equation} \label{eq:example_refinement}
              H = \begin{bmatrix}
                     1 & 0 & 0 \\
                     0 & 1 & 0 \\
                     0 & 0 & 1 \\
                     1 & 1 & 1
              \end{bmatrix} \qquad \mbox{may be refined into} \qquad 
              H^{(t)} = \begin{bmatrix}
                     1 & 0 & 0 \\
                     0 & 0 & 0 \\
                     0 & 0 & 1 \\
                     1 & 0 & 1
              \end{bmatrix} \,.
       \end{equation} 
    
    \subsection{Coverage objectives} \label{sec:objcov}
    % --- Preamble ---
    Infrastructure activity is a key driver 
    of EV charging demand and must therefore 
    be accounted for when assessing forecast uncertainty. 
    This is addressed by extending the usual marginal 
    coverage requirement to a conditional setting.
    
    % --- Description of UQ ---
    The goal is to produce reliable forecasts of 
    EV charging demand at each node of the hierarchy,
    i.e., for each station, local authority, 
    and the national aggregate,
    on a time period indexed by $t \in \Test$, 
    and to quantify the uncertainty around these forecasts.
    Given a nominal coverage level $1-\alpha = 90\%$, a 
    classic requirement is to produce
    prediction intervals at node $i$ 
    containing the true demand 
    for at least $90\%$ 
    of the days in $\Test$. 
    For instance, if infrastructure activity 
    is properly accounted for (e.g., through forecast reconciliation), 
    there is no uncertainty about the prediction of energy
    demand at an inactive station. 
    Hence, marginal coverage may appear satisfactory 
    simply because inactive periods are trivially covered, 
    while coverage on active periods remains poor.
    Similarly, the uncertainty quantification
    at an active station may depend on the activity 
    of other stations in the hierarchy.
    For example, a temporary closure of neighboring 
    stations may lead to a surge in demand at the station 
    under consideration, resulting in a higher variability 
    of the energy demand. As a consequence, 
    assessing coverage by averaging over all days in $\Test$
    may not be representative of the underlying uncertainty.

    A natural solution is to extend the coverage requirement 
    to a conditional one, where coverage is assessed by averaging 
    over subsets of $\Test$ corresponding to
    each specific configuration of the infrastructure.
    This is infeasible in practice, 
    as the number of possible configurations 
    is large and only few observations are available for each.
    Therefore, we consider two weaker conditional coverage requirements, 
    where coverage is assessed by averaging over subsets of $\Test$
    corresponding to ranges of similar infrastructure activity patterns.
    The first is a local conditioning, 
    where coverage is assessed by averaging 
    over the days for which the node under consideration is active.
    This is the minimal requirement for a meaningful coverage assessment,
    as we are only interested in forecasting
    the energy demand at active nodes. 
    In the rest of the article, 
    this condition is referred to as validity.
    The second is a stronger conditioning,
    where coverage is assessed by averaging over 
    the days for which the node under consideration is active,
    and the number of active nodes in its neighborhood--defined 
    by the local authority--partitions the time
    indices into two subsets: low and high activity regimes.
    The rationale for this additional conditioning 
    is that the activity of the nodes within 
    the same local authority is an important source 
    of variability for the accuracy of the forecasts.
    Note that other conditioning could be considered,
    for instance by changing the definition of the neighborhood,
    or the number of activity regimes.
    This second requirement is referred to as conditional coverage
    or, equivalently, conditional validity.

    \subsection{Efficiency objectives} \label{sec:objeff}
    % --- Efficiency as a additional requirement ---
    Beyond validity, the intervals should also be efficient,
    in the sense that they should be as narrow as possible.
    Indeed, intervals that are too wide are
    valid as they always contain the true value, 
    but will not be representative of the underlying uncertainty
    and will not be useful for decision-making.
    For instance, the prediction interval $[0 \operatorname{kWh}, 10000\operatorname{kWh}]$  
    for the energy demand at a station will
    always contain the true value, but the
    station operator should not prepare 
    for a demand of $10000 \operatorname{kWh}$ that is not realistic.
    Therefore, the efficiency of the prediction intervals 
    is measured by their average length.
    As for validity, the average length is computed 
    only over the days for which the node
    under consideration is active.

    \section{Point and probabilistic forecasting} \label{sec:preliminaries}
    Forecast reconciliation and conformal prediction 
    are two post-hoc frameworks that can be 
    combined to forecast hierarchical data, such as
    EV charging demand and provide the associated prediction intervals.
    We first review the two forecast reconciliation 
    procedures considered in this article, in
    the context of point forecasting,
    and then describe how they can be integrated 
    within conformal prediction.

    \subsection{Two forecast reconciliation procedures}
    \label{sec:forecastreconciliation}
    We use OLS reconciliation 
    (ordinary least squares, \citealp{hyndman2011optimal}) 
    and introduce rOLS (refined OLS),
    its natural extension that accounts 
    for the time-varying hierarchical structure
    modeling introduced in
    Section~\ref{sec:hierarchical_structure}.
    Both procedures perform reconciliation 
    through orthogonal projection:
    OLS projects base forecasts 
    onto the fixed coherent subspace $\Im(H)$,
    whereas rOLS projects them, at day $t$,
    onto the refined coherent subspace 
    $\Im\bigl(H^{(t)}\bigr)$.

    Given a vector of base forecasts
    $\widehat{\by}^{(t)} \in \R^m$, 
    the following closed-form expressions 
    hold for the reconciled forecasts
    obtained from OLS and rOLS, respectively,
    \begin{equation}
        \widetilde{\by}_{\operatorname{OLS}}^{(t)} = H \bigl(H^{\transp} H\bigr)^{-1} H^{\transp} \widehat{\by}^{(t)} \,, 
        \qquad \mbox{and} \qquad
        \widetilde{\by}_{\operatorname{rOLS}}^{(t)} = H^{(t)} \bigl(H^{(t)\transp} H^{(t)}\bigr)^{\dagger} H^{(t)\transp} \widehat{\by}^{(t)} \,,
    \end{equation}
    where $\dagger$ denotes the Moore-Penrose pseudo-inverse.
    
    The impact of using rOLS instead of OLS 
    is illustrated on the
    toy-example, with $n = 3$ stations, $m=4$ nodes,
    the structural matrix given in
    ~\eqref{eq:example_refinement},
    and the infrastructure activity vector 
    $\ba^{(t)} = (1,0,1,1)^{\transp}$ at day $t$.
    The following numerical example 
    of EV charging demand and base forecasts, 
    induces the corresponding OLS and rOLS reconciled forecasts:
    \begin{equation}
        \by^{(t)} = \begin{bmatrix}
                     26 \\
                     0 \\
                     32 \\
                     58
              \end{bmatrix} \,; \qquad 
              \widehat{\by}^{(t)} = \begin{bmatrix}
                     20 \\
                     20 \\
                     30 \\
                     65
              \end{bmatrix} \,; \qquad
              \widetilde{\by}_{\operatorname{OLS}}^{(t)} = \begin{bmatrix}
                     18.75 \\
                     18.75 \\
                     28.75 \\
                     66.25
              \end{bmatrix} \,; \qquad
              \widetilde{\by}_{\operatorname{rOLS}}^{(t)} = \begin{bmatrix}
                     25 \\
                     0 \\
                     35 \\
                     60
              \end{bmatrix} \,.
    \end{equation}

    Apart from the second coordinate, 
    which is zero and thus excluded from the forecasting task, 
    the rOLS reconciled forecasts are closer 
    to the true values than both the base forecasts 
    and the OLS reconciled forecasts.
    Indeed, OLS propagates 
    non-zero value produced 
    by the base forecast at the inactive station.
    It affects the other coordinates and, 
    in particular, leads to an inflated 
    last coordinate relative to the true value.
    By contrast, rOLS constructs a reconciled vector 
    in such a way that none of its components 
    depends on the second coordinate of the base forecast, 
    which is effectively set to zero, 
    and remaining coordinates are adjusted accordingly.

    \subsection{Basics on conformal prediction for hierarchical data}
    \label{sec:hierarchicalconformal}

            \begin{table}[t]
            \centering
            \caption{Advanced notations.}
            \label{tab:notations2}
            \renewcommand{\arraystretch}{1.25}
            \begin{tabular}{p{0.11\linewidth} p{0.44\linewidth} p{0.36\linewidth}}
            \hline
            \multicolumn{1}{l}{\textbf{Functions}} & \textbf{Domain} & \textbf{Description} \\[0.3em]
            
            $\widehat{A}$  & $ \X \times \{0,1\}^m \to \mathbb{R}^m$  & base predictor \\
            $\widetilde{A}$ & $\X \times \{0,1\}^m \to \mathbb{R}^m$ & reconciled predictor \\[0.3em]
            
            $\widehat{R}$ & $\R^m \times \{0,1\}^m \to \R^m$ & reconciliation function \\[0.3em]
            \hline
            \multicolumn{1}{l}{\textbf{Variables}} & \textbf{Formula} & \textbf{Description} \\[0.3em]
            $\widetilde{\by}^{(k)} \in \R^m$ & $\widetilde{A}\bigl(\bx^{(k)}, \ba^{(k)}\bigr) 
            = \widehat{R}\Bigl(\widehat{A}\bigl(\bx^{(k)}, \ba^{(k)}\bigr), \ba^{(k)}\Bigr)$ & reconciled forecasts at day $k$ \\
            $\bs^{(k)} \in \R^m$ & $\bigl|\widetilde{\by}^{\,(k)} - \by^{(k)}\bigr| \,\,\,\, = \,s\Bigl(\bigl(\bx^{(k)},\ba^{(k)}\bigr), \, \by^{(k)}; \widetilde{A}\Bigr) $ & vectorial non-conformity score at day $k$ \\[0.75em]

            $\widetilde{y}_i^{\,(k)} \!\in \R$ & $\Pi_i\bigl(\widetilde{\by}^{\,(k)}\bigr)$ & reconciled forecasts for node $i$ at day $k$ \\[0.15em]
            $s_{i}^{(k)} \in \R$ & $ \Bigl| \widetilde{y}_i^{\,(k)} - y_i^{(k)} \Bigr|$ & non-conformity score for node $i$ at day $k$ \\[0.3em]
            \hline
            \end{tabular}

        \end{table}

    Conformal prediction for hierarchical data 
    \citep{principato2024conformal} is a natural
    framework to quantify uncertainty in the context of
    hierarchical forecasting.
    Originally introduced for fixed hierarchical structures,
    it can easily be adapted 
    to handle time-dependent reconciliation 
    procedures such as rOLS, as described
    in Section~\ref{sec:forecastreconciliation}.
    We recall the main ingredients of the algorithm,
    which we refer to as CP (for Conformal Prediction),
    and introduce the notations that will be used 
    in the subsequent sections.
    The procedure is illustrated in Figure~\ref{fig:diagram}.
    
    CP is a multivariate extension of inductive 
    conformal prediction, also known as split conformal prediction
    \citep{vovk2005algorithmic,lei2018distribution},
    where prediction sets are 
    constructed component-wise and
    the procedure include a reconciliation step.
    The first step of CP consists in 
    splitting the available data into 
    a training and a calibration set,
    indexed by $\Train$
    and $\Calib$, respectively.
    The training set is used to fit a base predictor 
    $\widehat{A} : (\bx, \ba) \to \widehat{\by}$, 
    and a reconciliation function 
    $\widehat{R} : (\widehat{\by}, \ba) \to \widetilde{\by}$, 
    resulting in a reconciled predictor $\widetilde{A}(\bx, \ba) \to \widetilde{\by}$ defined as
    the combination of the base predictor and 
    the reconciliation function:
    \begin{equation}
    \begin{aligned}
    \widetilde{A}:\quad
    &\X \times \{0,1\}^m
    &&\longrightarrow \quad \R^m \,, \\
    &\quad (\bx,\ba)
    &&\longmapsto \quad 
    \widehat{R}\bigl(\widehat{A}(\bx, \ba), \ba\bigr) \,.
    \end{aligned}
    \end{equation}
    The calibration set is then 
    used to compute non-conformity scores,
    that measure how well the reconciled forecasts 
    fit the calibration observations at each node $i \in [m]$.
    We consider the standard non-conformity score 
    function that consists of the
    absolute difference between 
    the reconciled forecasts and the true values,
    i.e.,
    \begin{equation}
    \label{eq:conformity_score_i}
    \begin{aligned}
    \quad \! s_i:\quad
    &\bigl(\X \times \{0,1\}^m\bigr)\times \R^m \times \mathcal{F}
    &&\longrightarrow \quad \R \,, \\
    &\qquad \Bigl((\bx,\ba), \, \by; \, \widetilde{A}\Bigr)
    &&\longmapsto \quad
    \bigl| y_i - \widetilde{y}_i \bigr| =
    \bigl|\Pi_i\bigl(\by-\widetilde{A}(\bx,\ba)\bigr)\bigr| 
    \,,
    \end{aligned}
    \end{equation}
    where $\mathcal{F} = 
    \bigl\{f : \X \times \{0,1\}^m \to \R^m \bigr\}$ is
    the set of predictors, and $\Pi_i$ is
    the operator that extracts 
    the $i$-th component of a vector in $\R^m$.

    Denoting by $s_i^{(k)}$ the non-conformity score
    for node $i$ at day $k \in \Calib$, the
    $(1-\alpha)$--empirical quantile
    of the calibration non-conformity scores is computed,
    up to the usual finite-sample correction, namely,
    \begin{equation}
    \label{eq:quantile_i}
    q_i = \min\Bigl\{ q \in \R : \#\bigl\{ k \in \Calib :
    s_i^{(k)} \leq q \bigr\} \geq (1-\alpha) \bigl(\#\Calib+1\bigr) \Bigr\} \,,
    \end{equation} 
    where the cardinality of a set $S$ is denoted by $\#S$.

    Finally, at each node $i \in [m]$, 
    the resulting conformal set 
    is the prediction interval centered around $\widetilde{y}_i^{\,(t)}$,
    the $i$-th component of the reconciled forecasts 
    for the new observation, and with a radius 
    equal to the empirical quantile $q_i$, i.e.,
    \begin{equation}
    \label{eq:conformal_set_i}
    \forall i \in [m]\,, \qquad
    \C_i\bigl(\bx^{(t)}, \ba^{(t)}\bigr) 
    = \Bigl[\widetilde{y}_i^{\,(t)}\pm q_i\Bigr] \,.
    \end{equation}

    \begin{figure}[t]
        \centering
        \input{aoas-article-tikz.tex}    
        \caption{The CP procedure, adapted from conformal prediction for hierarchical data.
        % Data is split into a training set and a calibration set.
        % A base model is fitted on the training set 
        % to produce a base predictor, which combined with
        % a reconciliation step (that may depend on the training set)
        % yields a reconciled predictor.
        % The latter is used to compute non-conformity scores 
        % on the calibration set, for each node $i \in [N]$ of the hierarchy.
        % Finally, the prediction intervals are constructed at each node $i$
        % as the set of candidate values for which the score 
        % is below an empirical quantile of the calibration scores.
        }
        \label{fig:diagram}
    \end{figure}

    \section{Conformal prediction 
    for time-varying hierarchical structures} \label{sec:methodology}

        % --- Motivation ---
    The conformal procedure described in Section~\ref{sec:hierarchicalconformal}
    encompasses forecast reconciliation methods
    that account for the infrastructure activity,
    but the construction of the prediction intervals 
    is not adapted to the time-varying nature of the hierarchy.
    In particular, the method is designed to produce prediction
    intervals that are--only--valid marginally 
    over all possible configurations 
    of the infrastructure activity for
    the test day $t$.
    This is because the non-conformity scores 
    are computed over the entire calibration set, 
    which includes both active and inactive nodes.
    It is problematic in the case 
    of EV charging demand, 
    as the non-conformity scores
    for inactive nodes stations trivially zero 
    for a reconciled predictor 
    based rOLS, leading to prediction intervals
    that are too small for active nodes that
    have been inactive during the calibration period.

    % --- Outline ---
    This issue is addressed in the subsequent sections
    by proposing three extensions of the CP procedure.
    The first extension, referred to as CP-MNR 
    (for Conformal Prediction 
    with Missing Nodes Reconciliation),
    is designed to produce prediction intervals 
    that are valid conditionally on the node being active,
    a minimal requirement for a meaningful coverage assessment.
    The second extension, CP-MNR-Nested$^\star$,
    is designed to produce prediction intervals 
    that are valid conditionally 
    on the entire configuration of the infrastructure activity,
    a more ambitious goal.
    Finally, the third extension, Adaptive CP-MNR, 
    is designed to produce prediction intervals
    that are valid conditionally on the node being active,
    even when the data exhibit a distribution shift between 
    the calibration and test points.

\subsection{Prediction intervals for active nodes} \label{sec:CP-MNR}
% --- General presentation of the procedure ---
CP-MNR is a simple extension of CP,
designed to produce prediction 
intervals that are valid conditionally on the node being active.

It only differs from the CP procedure in 
that the non-conformity scores are computed
on a subsampled set of calibration indices.
For each node $i \in [m]$,
the subsampled set of calibration indices
is denoted by $\tCalib(i) \subseteq \Calib$
and is defined as the subset of calibration indices
corresponding to observations that share the same 
activity status of the i-th node as the test point, namely,
\begin{equation}
    \forall i \in [m]\,, \qquad \tCalib(i) = \bigl\{k \in \Calib : a_i^{(k)} = a_i^{(t)}\bigr\} \,.
\end{equation}
% --- Construction of the prediction intervals ---
The rest of the procedure remains unchanged 
(i.e., follows the steps described 
in Section~\ref{sec:hierarchicalconformal}).
For each node $i \in [m]$, 
non-conformity scores are 
computed on the restricted calibration set,
empirical quantile at level $1-\alpha$
of these scores is computed, 
and the resulting prediction interval
is defined based on this quantile 
as in~\eqref{eq:conformal_set_i}.

The CP-MNR procedure described above 
is detailed 
in Algorithm~\ref{algo:CP-MNR}.
% --- Mondrian conformal prediction ---
Its construction is reminiscent 
of the Mondrian conformal prediction 
framework \citep{vovk2005algorithmic},
which aims to achieve validity 
conditional on a categorical variable
(here the binary activity of the node 
under consideration).
The theoretical guarantees of 
CP-MNR are therefore similar to those of
Mondrian conformal prediction,
and are stated in the Supplementary 
Material~\ref{app:thCP-MNR}.
The fitting of the predictors 
involved in the first step of Algorithm~\ref{algo:CP-MNR} 
is stated formally in the 
Supplementary Material~\ref{app:implementation:predict-then-reconcile}, 

\begin{algorithm}[t]
	\caption{\label{algo:CP-MNR} CP-MNR at level $1-\alpha$}
	\begin{algorithmic}
    \STATE\textbf{Input:} Dataset $\D = \Bigl\{ \bigl(\bx^{(k)}, \ba^{(k)}, \by^{(k)}\bigr) : k \in \Train \cup \Calib \Bigr\}$; 
        algorithms $\cA$ and $\cR$; 
        test point $\bigl({\bx^{(t)}}, {\ba^{(t)}}\bigr)$.
	\vspace{.25em}
    \STATE 1: Fit the predictors on $\bigl\{\D; \, \Train; \, \cA; \, \cR \bigr\}$ and get $\Bigl\{\widehat{A}(\cdot, \cdot), \, \widehat{R}(\cdot, \cdot), \, \widetilde{A}(\cdot, \cdot) \Bigr\}$  \\
    \vspace{.35em}
    \STATE \textbf{for} $i \in [m]$ \textbf{do} perform Mondrian conformal prediction:
    \vspace{.35em}
    \STATE \qquad 2: Consider a subsampled set of calibration indices
    \quad $\tCalib(i) = \Bigl\{ k \in \Calib : a_i^{(k)} = a_i^{(t)} \Bigr\}$ \\
    \vspace{-.15em}
    \STATE \qquad 3: Compute the non-conformity scores, for $k \in \tCalib(i)$,
    \quad  $s_i^{(k)} = s_i\Bigl(\bigl(\bx^{(k)}, \,\ba^{(k)}\bigr), \, \by^{(k)}; \, \widetilde{A}(\cdot, \cdot)\Bigr)$ \\
    \vspace{.25em}
    \STATE \qquad 4: Compute their $(1-\alpha)$--empirical quantile, defined as \\
    \vspace{.35em}
    \qquad \qquad \qquad $\widetilde{q}_i = \displaystyle{\min\biggl\{ q \in \R : \#\Bigl\{k \in \tCalib(i) : s_i^{(k)} \leq q \Bigr\} \geq (1-\alpha)\bigl(\#\tCalib(i)+1\bigr) \biggr\}}$ \\
    \vspace{-.15em}
    \STATE \qquad 5: Build the prediction interval, \quad $\cC_i\bigl({\bx^{(t)}}, {\ba^{(t)}}\bigr)
    = \Bigl[ \Pi_i\Bigl(\widetilde{A}\bigl(\bx^{(k)}, \ba^{(k)}\bigr)\Bigr) \pm \widetilde{q}_i \Bigr]$ \\
    \vspace{.25em}
    \STATE \textbf{end for} \\
    \STATE \textbf{Output:} Predictions intervals $\Bigl\{\cC_i\bigl({\bx^{(t)}}, {\ba^{(t)}}\bigr): i \in [m] \Bigr\}$.
\end{algorithmic}
\end{algorithm}

\begin{remark} \label{rk:coincides}
    In the case of a node $i \in [m]$ that is always active (e.g., an aggregated node),
    the CP-MNR procedure coincides with the CP procedure of Section~\ref{sec:hierarchicalconformal}.
    In Section~\ref{sec:experiments}, CP is thus omitted
    from the numerical experiments for the aggregated nodes.
\end{remark}

\subsection{Infrastructure-aware prediction intervals}\label{sec:CP-MNR-Nested}
As discussed in Section~\ref{sec:objcov}, 
conditioning on the node being active is 
a minimal but not sufficient requirement
as whole infrastructure activity 
is also an important source of variability
for the accuracy of the forecasts.
In this section, the CP-MNR procedure is 
extended to produce prediction intervals
that are valid conditionally on the entire configuration
of the infrastructure activity.

In principle, infrastructure activity 
vector is also a categorical variable 
so that Mondrian conformal prediction
could be applied.
Nevertheless, this would require to 
partition the calibration set into
$2^m$ subsets, one for each possible 
configuration of the infrastructure activity,
which exceeds the number of available observations 
by many orders of magnitude.
This practical limitation advocates 
for a procedure that accounts for the whole 
infrastructure activity without requiring 
a separate calibration set for each configuration:
CP-MNR-Nested$^\star$.

Before introducing CP-MNR-Nested$^\star$,
for the sake of a more pedagogical exposition,
we first rewrite the conformal step of CP-MNR 
in a equivalent form that emphasizes 
the role of infrastructure activity 
in the construction of the prediction intervals.
For all node $i\in [m]$,
the $i$-th prediction interval output
by Algorithm~\ref{algo:CP-MNR}
can be equivalently written 
as the set of candidate values
whose associated non-conformity score 
is strictly larger to at most a fraction $(1-\alpha)$ 
of the non-conformity scores computed 
on the calibration set, 
up to the usual finite-sample correction, i.e.,
\begin{equation*}
\begin{aligned}
\label{eq:conformal_set_i_equiv}
\C_i\bigl(\bx^{(t)}, \ba^{(t)}\bigr) 
= \Biggl\{ &\Pi_i(\by) \in \R : (1-\alpha )\bigl(\#\tCalib(i)+1\bigr) > \\[-2.ex]
& \,\, \, \#\biggl\{k \in \tCalib(i) : s_i\Bigl(\bigl(\bx^{(k)}, \, \ba^{(k)}\bigr), \, \by^{(k)}; \, \widetilde{A} \Bigr)
< \,s_i\Bigl(\bigl({\bx^{(t)}}, \, {\ba^{(t)}}\bigr), \, \by; \, \widetilde{A}\Bigr) \biggr\} \Biggr\} \,.
\end{aligned}
\end{equation*}
This formulation highlights that
the non-conformity scores computed
on the calibration set only
depend on the infrastructure activity of the test point
through the selection of the calibration indices,
which is a local conditioning on 
the activity of the node under consideration.

% --- Rationale of the methodology ---
The rationale of CP-MNR-Nested$^\star$ is to
enforce a symmetric treatment of calibration 
and test points with respect to the 
infrastructure activity,
by evaluating non-conformity 
scores under a shared activity pattern. 
The resulting procedure is stated in
Algorithm~\ref{algo:CP-MNR-Nested*}, 
and the symmetric treatment of calibration 
and test points is highlighted in blue.
We now describe the three steps of the procedure in more details.

The first step of CP-MNR-Nested$^\star$ is identical to that of CP-MNR, 
where the predictors are fitted on the training set.
In the second step, 
for each calibration index $k \in \Calib$,
a common augmented 
infrastructure activity pattern 
${\widetilde{\ba}^{(k)}} = 
\min \bigl( \ba^{(k)}, {\ba^{(t)}} \bigr)$
is considered,
where the minimum is taken component-wise.
% (for all $i \in [m]$).
It corresponds to
a hypothetical scenario, where the only active nodes 
are those that are active 
in both the calibration and test points.

% --- Nested formulation of the prediction set ---
The third step consists in computing the non-conformity scores
of the augmented calibration points
$\bigl(\bx^{(k)}, {\widetilde{\ba}^{(k)}}, \by^{(k)}\bigr)$
to compare them with those
of the temporary test points 
$\bigl({\bx^{(t)}}, {\widetilde{\ba}^{(k)}}, \by \bigr)$,
for all candidate values $\by \in \R^m$.
This comparison, performed under the same augmented 
infrastructure activity pattern determines 
the cardinality of the following sets: 
\begin{equation}
    \forall \by \in \R^m\,, \qquad \biggl\{ k \in \Calib : s_i\Bigl(\bigl(\bx^{(k)}, \, {\widetilde{\ba}^{(k)}}\bigr), \, \by^{(k)}; \, \widetilde{A} \Bigr) 
    < \, s_i\Bigl(\bigl({\bx^{(t)}}, \, {\widetilde{\ba}^{(k)}}\bigr), \, \by; \, \widetilde{A}\Bigr) \biggr\} \,.
\end{equation}
It is then used to define the prediction intervals,
as in the above reformulation of CP-MNR.

\begin{algorithm}[t]
	\caption{\label{algo:CP-MNR-Nested*} CP--MNR--Nested$^{\star}$ at level $1-\alpha$}
	\begin{algorithmic}
    \STATE\textbf{Input:} 
    Dataset $\D = \Bigl\{ \bigl(\bx^{(k)}, \ba^{(k)}, \by^{(k)}\bigr) : k \in \Train \cup \Calib \Bigr\}$; 
        algorithms $\cA$ and $\cR$; 
        test point $\bigl({\bx^{(t)}}, {\ba^{(t)}}\bigr)$; 
        subsampled set of calibration indices $\widetilde{\Calib} \subseteq \Calib$.
	\STATE 1: Fit the predictors on $\bigl\{\D; \, \Train; \, \cA; \, \cR \bigr\}$ and get $\Bigl\{\widehat{A}(\cdot, \cdot), \, , \widehat{R}(\cdot, \cdot), \, \widetilde{A}(\cdot, \cdot) \Bigr\}$  \\
	\STATE 2: \textbf{for} $k \in \widetilde{\Calib}$ \textbf{do} consider an augmented infrastructure pattern \\
       \vspace{.25em}
        \STATE \qquad \darkblue{${\widetilde{\ba}^{(k)}} = \min \bigl( \ba^{(k)}, {\ba^{(t)}} \bigr)$} \\
       \vspace{.25em}
        \STATE \textbf{end for} \\
    \STATE 3: \textbf{for} $i \in [m]$ \textbf{do} build the prediction set $\cC_i\bigl({\bx^{(t)}}, {\ba^{(t)}}\bigr)$ defined as \\
       \STATE \qquad $\cC_i\bigl({\bx^{(t)}}, {\ba^{(t)}}\bigr) = \Biggl\{ \Pi_i(\by) \in \R : (1-\alpha)\bigl(\#\widetilde{\Calib}+1\bigr) >$ \\
       \vspace{-0.5em}
       \STATE \hspace{10.5em} $\displaystyle{ 
        \#\biggl\{k \in \tCalib : s_i\Bigl(\bigl(\bx^{(k)}, \, \darkblue{\widetilde{\ba}^{(k)}}\bigr), \, \by^{(k)}; \, \widetilde{A}(\cdot, \cdot) \Bigr)
        < \,s_i\Bigl(\bigl({\bx^{(t)}}, \, \darkblue{\widetilde{\ba}^{(k)}}\bigr), \, \by; \, \widetilde{A}(\cdot, \cdot)\Bigr) \biggr\} 
       } \Biggr\}$ \\
        \vspace{.25em}
       \STATE \textbf{end for} \\
    \STATE \textbf{Output:} Predictions sets $\Bigl\{\cC_i\bigl({\bx^{(t)}}, {\ba^{(t)}}\bigr): i \in [m] \Bigr\}$.
\end{algorithmic}
\end{algorithm}

% --- Pub Supplementary Material B ---
%\begin{remark}
This construction does not describe
how to implement the procedure.
The Supplementary Material~\ref{app:implementation:CP-MNR-Nested*} 
is devoted to this purpose:
it explains how to derive the conformal set, 
why a subsampled set of calibration indices
$\widetilde{\Calib} \subseteq \Calib$ 
should be used,
and discusses the nature of the resulting prediction set
as well as the computational cost of the procedure.
%\end{remark}

The construction of CP-MNR-Nested$^\star$ 
is inspired from the missing
data augmentation (MDA) procedure of
\citet{zaffran2023conformal,zaffran2024predictive}
and is actually an extension to our setting
of their algorithm, CP-MDA-Nested$^\star$ 
(\citealp[Algorithm~1]{zaffran2024predictive}).
% --- Pub Supplementary Material A ---
We establish a connection between the two frameworks
in the Supplementary Material~\ref{sec:connection},
and show that the theoretical guarantees of CP-MDA-Nested$^\star$ 
can be extended to CP-MNR-Nested$^\star$
(see the Supplementary Material~\ref{app:thCP-MNR-Nested*}).
Even though some of the assumptions 
that are natural for the former 
are not satisfied in the latter,
the theoretical guarantees of CP-MNR-Nested$^\star$
provide the heuristic behind the construction of the procedure.

\subsection{Online extension of the methodology} \label{sec:adaptive}
% --- Need for a sequential extension ---
The procedures described in Sections~\ref{sec:hierarchicalconformal}, 
\ref{sec:CP-MNR} and~\ref{sec:CP-MNR-Nested} 
(CP, CP-MNR and CP-MNR-Nested$^\star$)
are designed for a single test point.
The underlying assumption behind their construction 
follows the standard conformal prediction framework,
where the main assumption
is that the calibration points and the
test point are drawn independently 
from the same distribution.
In the context of EV charging demand forecasting,
this setting needs to be extended 
to a sequential setting, defined as follows.

% --- Sequential setting ---
At each round $t \in \Test$,
the practitioner observes the covariates and 
the infrastructure activity vector
$\bigl(\bx^{(t)}, \ba^{(t)}\bigr)$,
and output for each node $i$ a prediction set 
$\C_i^{(t)} = \C_i\bigl({\bx^{(t)}}, {\ba^{(t)}}\bigr)$
for the unknown value $y_i^{(t)}$, 
which is then revealed at the end of the round.
In this setting, no stochastic assumption 
is made on the sequence of test points
to account for the temporal dependence and distributional 
shifts that are inherent to EV charging demand data,
as illustrated in Figures~\ref{fig:aggregatedseries} 
and~\ref{fig:glasgow_evolvingstructure}.

% --- Need for an adaptive extension ---
These attributes advocate for an 
online extension of the conformal
procedures through
adaptive conformal inference (ACI, 
\citealp{gibbs2021adaptive}). 
% --- High-level description of the adaptive extension ---
The principle of the adaptive extension is to update the 
conformal intervals sequentially based on the recent miscoverage events.
For each node $i \in [m]$, the fixed nominal level $1-\alpha = 90\%$
used in the conformal step is replaced 
by a node-specific level, updated sequentially over the test period.
If node $i$ is inactive at the current test point, the coverage level is left unchanged.
Otherwise, the coverage level increases if the previous conformal interval 
did not cover the true value, leading to a wider interval, 
and decreases if the previous conformal interval did
cover the true value, leading to a narrower one.

% --- Benefits of the adaptive extension and why only on CP-MNR ---
This strategy can be used to extend each
of the three conformal procedures described 
in the previous sections.
Nevertheless, the coverage level is updated 
only for active nodes, which is consistent 
with the construction of the CP-MNR procedure
that is designed to account for the activity 
of the node under consideration,
but not with the construction of CP and 
of CP-MNR-Nested$^{\star}$.
In particular, this adaptive extension
may hinder the stronger conditional coverage 
guarantee targeted by the 
CP-MNR-Nested$^{\star}$ procedure.
Therefore, only the adaptive extension of CP-MNR is considered,
and is referred to as Adaptive CP-MNR.
Its formal statement and the proof of its theoretical guarantees
are deferred to the Supplementary Material~\ref{app:implementation:Adaptive-CP-MNR}.

\section{Forecasting EV charging energy demand with confidence} \label{sec:experiments}

This section describes the study case of EV charging demand 
forecasting on the ChargePlace Scotland data
and depict how the proposed methodology addresses the 
specific challenges of this application.
We first introduce the experimental setting,
including the rolling-window protocol and the data preprocessing choices.
Then, the most appropriate base predictor and reconciliation strategy 
are selected and fixed for the subsequent experiments.
Finally, the performance of the proposed conformal 
procedures is evaluated and analyzed.

\subsection{Experimental setting}

% --- Sequential treatment ---
The proposed methodology is assessed on a six-month test period, 
denoted by $\Test$, from February 5th to August 18th, 2024.
The period exhibits a significant temporal drift 
in the data distribution (Figure~\ref{fig:aggregatedseries}), 
making it a particularly challenging setting for forecasting. 
As discussed in Section~\ref{sec:adaptive},
the CP, CP-MNR, and CP-MNR-Nested$^{\star}$ procedures 
are designed for a single test point. 
To deploy them sequentially over the entire test period, 
one could in principle retrain the underlying models 
at each time step, at a substantial computational cost, 
or, at the other extreme, keep the models fixed throughout the whole period. 
We adopt an intermediate strategy and update the models every two weeks
to account for the dynamic nature of the data.
Accordingly, the training and calibration periods
are updated using a rolling-window scheme, 
illustrated in Figure~\ref{fig:windows}.

% --- Rolling-window scheme ---
The variability of the performance metrics 
is evaluated by running the experiments 10 times 
with different rolling-window partitions of the historical data.
The resulting mean estimates and standard errors 
are reported throughout this section.
For each experimental run, 
the available data before the first test block 
are split into a training set and a calibration set. 
This partition is then shifted forward by two weeks 
at each subsequent test block.
The calibration set represents 20\% of the historical data 
and is constructed as follows: 
it includes the week preceding the test block 
together with one-week blocks sampled uniformly at random 
from the remaining time indices.
This partitioning affects the prediction intervals 
through a change in the predictors fitted on the training set
and in the calibration points used to compute the non-conformity scores. 

% --- Preprocessing choices ---
Before fitting the predictors, 
the data is preprocessed to remove sessions 
with missing geographical information and, 
across all nodes, dates with anomalous data 
reporting as identified by \citet{davies2024plugging}. 
Subsequently, features are engineered for the regression 
tasks across the three hierarchical levels.
Calendar features encode weekly and yearly cycles 
using day-of-week and day-of-year variables (scaled between 0 and 1). 
Public holidays are recoded and treated as Sundays.
Weather features are collected by requesting data with the \texttt{riem package} \citep{riem}.
Daily maximum temperature and daily average humidity are calculated at each node. 
At the station level, these meteorological features are derived 
using a distance-weighted average from weather stations. 
For regional and national levels, the station level features 
are aggregated using a weighted average based on the number 
of charging points per station.
% To explicitly capture non-linear thermo-sensitivity, 
% a binary indicator for temperatures exceeding 15$^\circ$C 
% is also included in some models that cannot capture it natively. 
With the same weighted average scheme, charger locations 
(longitude and latitude) are considered
both at the station and regional levels.
Equipment features are also included, 
such as the proportion of slow, accelerated, and fast chargers at each node
as well as a dimensioning factor for the number of charging points
and paid charging points,
estimated dynamically using the prior day's number of 
used chargers and the maximum number of used chargers over the previous 30 days.
Charging cost features account for the transition of chargers from 
free to paid described in Section~\ref{sec:intro}.
Historical energy consumption incorporates the prior day's energy, 
the energy consumed on the same day of the previous week, 
and a 1-to-7-day rolling mean.
Days with zero demand are set to NA (i.e., treated as missing values)
so that predictors are fitted exclusively on active nodes. 
To prevent the propagation of these NAs into the lags 
and thus avoid reducing the number of available observations, 
any NA in a lagged variable is imputed using the last available 
observation of the same day type. 
Finally, depending on the model, a shifted logarithmic transformation 
is applied to the target and the lags at the regional and station levels. 
This prevents the model from favoring larger nodes while ensuring 
numerical stability for nodes with near-zero demand.

    \begin{figure}[t]
        \centering
        \includegraphics[width=\textwidth]{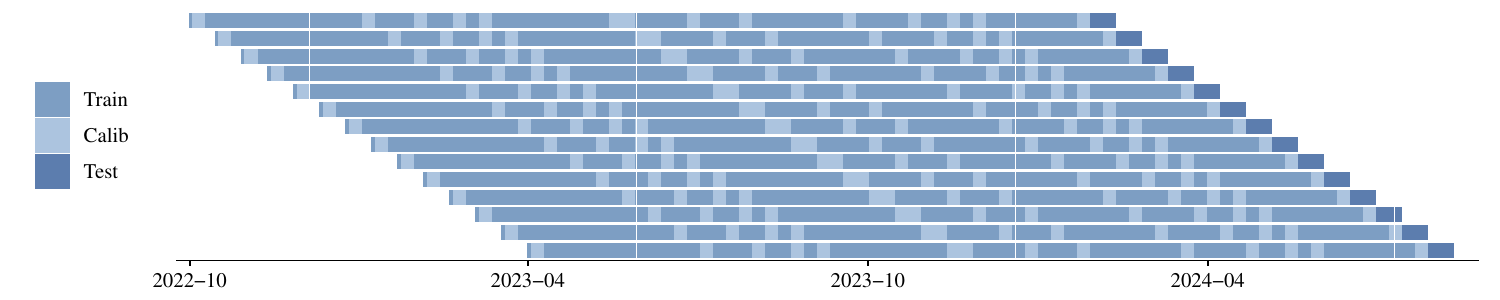}
        \caption{Rolling-window data partitioning. 
        At the first block, the available data is split into a training set $(80\%)$ and a calibration set $(20\%)$. 
        The calibration set of indices is drawn once, combining the week preceding the test period 
        and randomly sampled one-week historical blocks. 
        The same partition is then shifted forward by two weeks across subsequent blocks.}
        \label{fig:windows}
    \end{figure}

\subsection{Choice of the base predictor and reconciliation procedure} 
\label{sec:predictors}
    % --- Base predictor ---
    At all three levels of the hierarchy, 
    we consider a random forest (RF), 
    a generalized additive model (GAM), 
    a (pre-trained) foundation model for time series 
    and tabular data (tabICL), 
    and the combination of these models (Combi)
    that consists of the average of the predictions 
    output by these models. 
    For the regional and station levels, 
    a boosting model (XGB) is added 
    as a base model and included in the average of Combi.
    For the national level, since there is only a single time series, 
    the models exclusively use calendar, weather, and charging cost features, 
    along with lagged energy. 
    Conversely, at the regional and station levels, 
    due to the large number of nodes, a global modeling approach is considered, 
    where a single model is fitted for all nodes within each aggregation level. 
    In this setting, features that enable cross-learning among series, 
    such as infrastructure information and spatial features, are included.
    The GAM is highly efficient for load forecasting, 
    as its additive decomposition and non-linear capabilities 
    are well-suited to model the various effects influencing 
    energy consumption \citep{antoniadis2024statistical}. 
    When applied as a global model, advanced features in GAMs, 
    such as the use of tensor products and combined effects, 
    allow the model to capture the complexity of the data. 
    Furthermore, tree-based models such as RF and XGB are 
    particularly efficient in a global setting, 
    since they naturally group different time series 
    to harness cross-learning.
    TabICL achieves this same global modeling objective 
    through a different approach, relying on cross-attention mechanisms.
    An overview of the related work for the different models 
    is provided in Section~\ref{sec:intro}.

    \begin{remark}
    To keep computation time low and ensure 
    our experiments remain reproducible on a 
    standard laptop for the station level, 
    we restrict the tabICL context to $10^5$ randomly sampled data points 
    and compute a bagging estimate across 5 TabICL models
    of the energy demand by batch of 1024 data points. 
    Note that this approach does not degrade predictive performance; 
    rather, it allows us to effectively manage the memory footprint and computation time.
    \end{remark}

    A base predictor is a multivariate predictor that outputs 
    a vector of forecasts for all nodes in the hierarchy, 
    and can thus be made of any of the above models 
    (e.g., by GAM at the national level and XGB at the regional and station levels).
    For the sake of clarity, the assessment of the 
    probabilistic forecasts are only reported 
    for one base predictor, that consists of Combi
    at each aggregation level,
    and one reconciliation procedure, that is rOLS.
    This choice is motivated by the 
    robustness of this aggregate predictor, 
    and by the modeling rationale of rOLS described 
    in Section~\ref{sec:forecastreconciliation},
    for the reconciliation procedure.
    A posterior analysis of the results 
    in point forecasting confirms this choice,
    as reported in Figure~\ref{fig:rmse}.
    The metric used to assess the point forecast accuracy 
    is the root mean squared error (RMSE),
    computed for each node $i \in [m]$ over the
    active test points. 

    The boxplots report the node-wise RMSEs,
    where the variability of the RMSEs is assessed across 
    the different node of each level in the 10 experimental runs. 
    GAM exhibits the best performance at the national level,
    Combi at the regional level, and tabICL at the station level
    (panel (a) in Figure~\ref{fig:rmse}).
    Overall, the performance between the different models
    and the choice of Combi as the base predictor is 
    essentially driven by its robustness across the hierarchy,
    as it consistently ranks among the top two models at each level.
    Note that the random forest is the worst-performing model at the national level
    by a significant amount. Still, it is included in the Combi predictor,
    which slightly degrades its overall performance.
    
    Given this choice of base predictor, 
    the panels (b) and (c) in Figure~\ref{fig:rmse}
    report the RMSEs of the reconciled forecasts
    respectively at the national, regional, and station levels,
    and from a global perspective, 
    where the RMSE is computed as the square root of the average 
    MSE over all nodes in the hierarchy.
    The different reconciliation procedures 
    exhibit similar performance at the 
    regional and station levels,
    but rOLS is the best-performing procedure
    both at the national level and from a global perspective.
    In particular, from a global perspective, 
    enforcing coherence through OLS degrades 
    the performance of the base predictor,
    while rOLS improves it.

    % --- Posterior validation of the choice ---
\begin{figure}[t]
    \centering
    \subcaptionbox{}{%
        \includegraphics[width=0.15\linewidth,
            trim=11.5 0 4.5 0,clip]{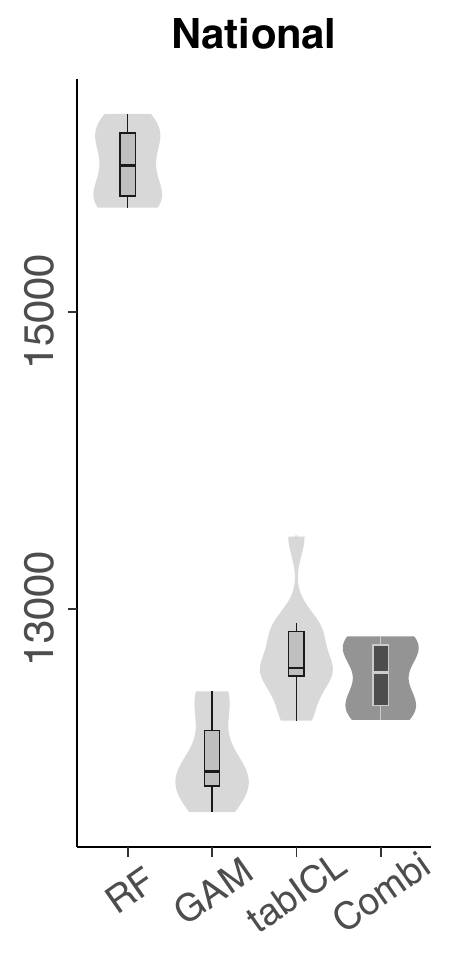}
        \includegraphics[width=0.15\linewidth,
            trim=11.5 0 4.5 0,clip]{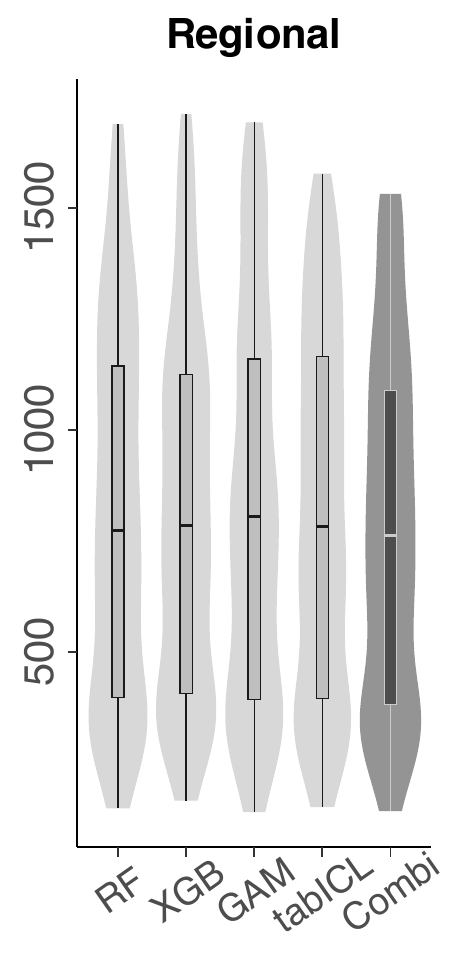}
        \includegraphics[width=0.15\linewidth,
            trim=11.5 0 4.5 0,clip]{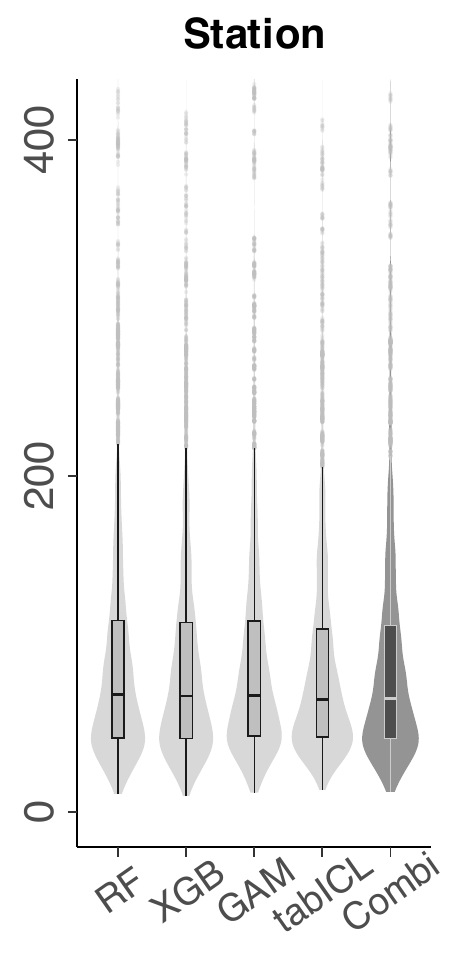}
    }
    \hfill
    \subcaptionbox{}{%
        \includegraphics[width=0.11\linewidth,
            trim=10 0 5 0,clip]{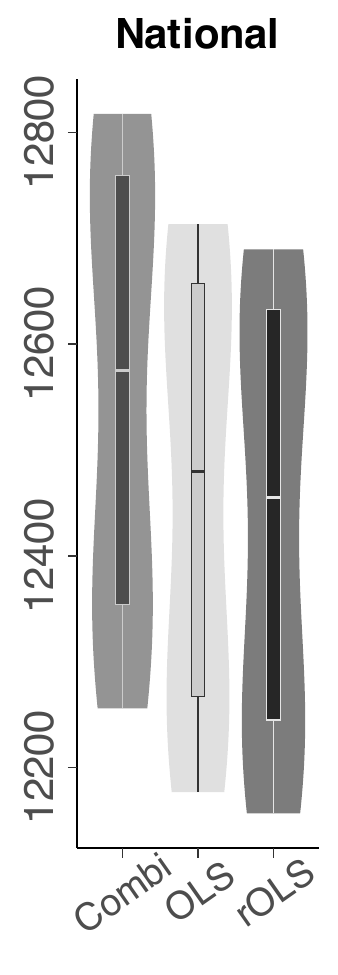}
        \includegraphics[width=0.11\linewidth,
            trim=10 0 5 0,clip]{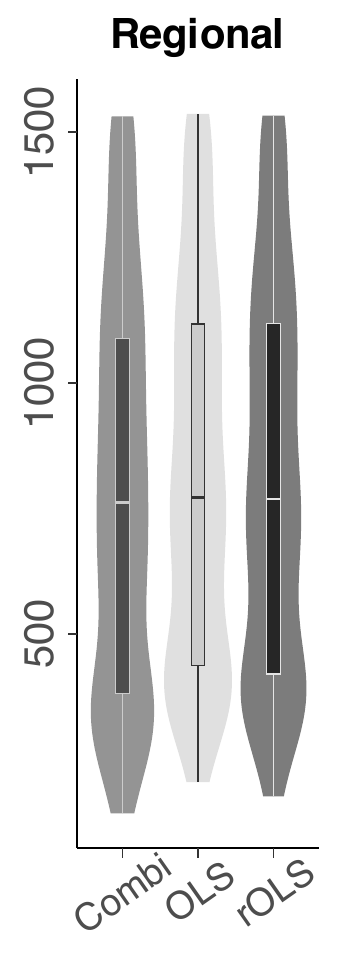}
        \includegraphics[width=0.11\linewidth,
            trim=10 0 5 0,clip]{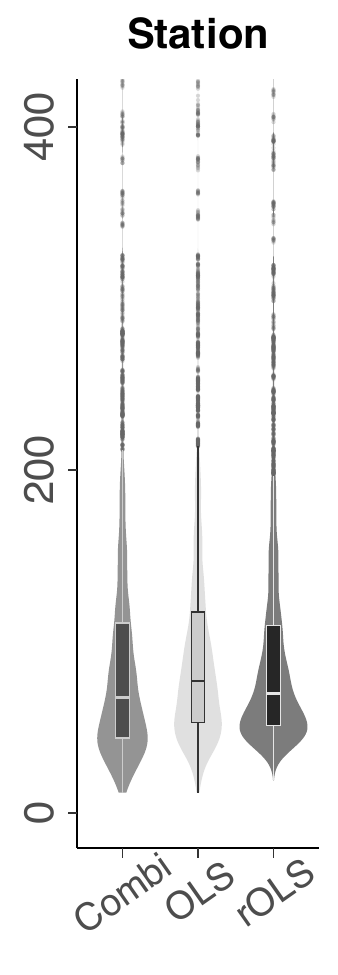}
    }
    \hfill
    \subcaptionbox{}{%
        \includegraphics[width=0.11\linewidth,
            trim=10 0 5 0,clip]{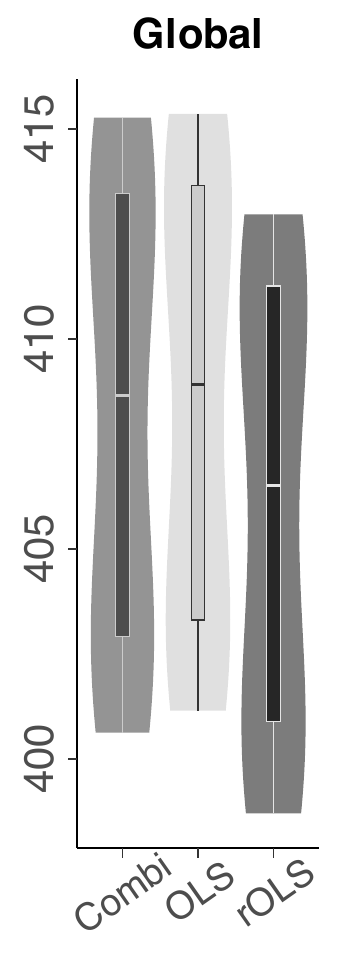}
    }
    \caption{Boxplots of the node-wise RMSEs 
    of the base forecasts (a) 
    and reconciled forecasts (b) 
    across aggregation levels. 
    The Global entry (c) corresponds 
    to the square root of the average MSE over all nodes in the hierarchy.}
    \label{fig:rmse}
\end{figure}

    % --- It benefits in terms of size of the prediction sets ---
    For the sake of completeness, the different reconciliation procedures 
    are also compared in terms of probabilistic forecasts.
    The analysis, deferred to the Supplementary Material~\ref{app:additional_results},
    confirms the choice of rOLS as the reconciliation procedure,
    and shows that it yields smaller prediction intervals than 
    both Direct and OLS, while maintaining the desired coverage level.

    % \smallskip

    % The subsequent analysis is conducted
    % with Combi at every node as a base predictor that is reconciled with rOLS.

    \subsection{Validity rates across all nodes within the hierarchy}
    \label{sec:validity}
    This section reports the evaluation of the procedures 
    from a global perspective through validity rates, 
    defined as proportions of nodes at a given
    level of the hierarchy for which the empirical coverage 
    exceed a specified coverage level.
    The validity rates are computed either conditional on the node being active
    or conditional on the node being active and the neighboring nodes
    being in a certain activity regime, which will be defined in due course.

    \subsubsection{Validity conditional on the node being active}
    For each node $i \in [m]$, empirical coverage estimates 
    are built conditional on the node being active.
    The estimates computed on less than $20$ test points
    are excluded from the validity rate computation.
    Validity rates are thus computed, for each run of the experiments,
    over~$1$ node at the national level, $32$ nodes at the 
    regional level, and $1159$ of the $1398$ nodes at the station level.
    Given a nominal coverage level $1-\alpha = 90\%$,
    the primary objective (depicted in Section~\ref{sec:objcov})
    is to have valid nodes at this coverage level.
    Hence, higher values for the validity rates generally indicate better calibration. 
    However, a validity rate close to $100\%$ is not necessarily desirable, 
    as it may simply reflect systematic overcoverage 
    due to excessively conservative prediction intervals. 
    As a consequence, one expects a well-performing 
    method to achieve a validity rate at the nominal level 
    above $50\%$ without necessarily approaching $100\%$.
    A less a ambitious validity criterion is also considered,
    namely the validity rates at the relaxed level $1-2\alpha = 80\%$.
    The choice to include a relaxed level is motivated by the fact that 
    methods that cover $89\%$ of the test points are not valid at 
    the nominal coverage level $1-\alpha=90\%$, but are more reliable
    than methods that cover only $70\%$ of the test points.
    The validity rates at the relaxed level thus distinguish methods 
    that are close to the nominal level and those that are not.
    In that case, a validity rate close to $100\%$ is desirable.

        \begin{table}[t]
        \centering
        \caption{Validity rates at two coverage levels:
        the nominal level $1-\alpha=90\%$ and a relaxed level $1-2\alpha=80\%$.
        \newline Results are grouped by aggregation level/conformal procedure, 
        and reported as $\operatorname{mean} \pm 1.96 \times \operatorname{standard \, error}$.
        % The best-performing procedure is shown in bold.
        }

        \begin{tabular}{llcccc}
        \toprule
        Node type & Level & CP & CP-MNR & CP-MNR Nested$^{\star}$ & Adaptive CP-MNR \\
        \midrule

        \multirow{2}{*}{National}
        & $1-\alpha$
        & 40.0\% $\pm$ 16.3\%
        & 40.0\% $\pm$ 16.3\%
        & \textbf{90.0\% $\pm$ 10.0\%}
        & 0.0\% $\pm$ 0.0\% \\

        & $1-2\alpha$
        & \textbf{100.0\% $\pm$ 0.0\%}
        & \textbf{100.0\% $\pm$ 0.0\%}
        & \textbf{100.0\% $\pm$ 0.0\%}
        & \textbf{100.0\% $\pm$ 0.0\%} \\

        \addlinespace

        \multirow{2}{*}{Regional}
        & $1-\alpha$
        & 51.9\% $\pm$ 1.2\%
        & 51.9\% $\pm$ 1.2\%
        & \textbf{71.2\% $\pm$ 1.2\%}
        & 47.8\% $\pm$ 0.9\% \\

        & $1-2\alpha$
        & 98.8\% $\pm$ 0.5\%
        & 98.8\% $\pm$ 0.5\%
        & 99.1\% $\pm$ 0.5\%
        & \textbf{100.0\% $\pm$ 0.0\%} \\

        \addlinespace

        \multirow{2}{*}{Station}
        & $1-\alpha$
        & 38.1\% $\pm$ 0.4\%
        & 56.2\% $\pm$ 0.4\%
        & \textbf{73.6\% $\pm$ 0.6\%}
        & 57.6\% $\pm$ 0.6\% \\

        & $1-2\alpha$
        & 67.7\% $\pm$ 0.3\%
        & 93.6\% $\pm$ 0.3\%
        & 90.8\% $\pm$ 0.4\%
        & \textbf{98.0\% $\pm$ 0.1\%} \\

        % \addlinespace

        % \multirow{2}{*}{Mean rank}
        % & $1-\alpha$
        % & 2.67
        % & 2.33
        % & \textbf{1.00}
        % & 3.33 \\

        % & $1-2\alpha$
        % & 3.33
        % & 2.00
        % & 2.00
        % & \textbf{1.00} \\

        \bottomrule
        \end{tabular}

        \label{tab:validity}
    \end{table}

    Table~\ref{tab:validity} reports the validity rates for each
    conformal procedure and at two coverage levels: 
    the nominal level $1-\alpha=90\%$ and a relaxed level $1-2\alpha=80\%$.
    At the nominal level, CP-MNR-Nested$^{\star}$ is the best-performing 
    procedure at each level of the hierarchy by a significant amount, 
    with an average validity rate of $90\%$ at the Scotland node, 
    of $71.2\%$ at the regional level nodes, 
    and of $73.6\%$ at the station level nodes. 
    At the relaxed level, all four procedures are always valid 
    at the national node, and almost always valid at the regional level, 
    with the lowest average validity rate being $98.8\%$ for CP and CP-MNR.
    At the station level, however, Adaptive CP-MNR outperforms the other procedures, 
    with a validity rate of $98\%$ at the relaxed level,
    compared to $90.8\%$ for CP-MNR-Nested$^{\star}$.
    Hence, fewer nodes suffer from a substantial lack of coverage
    with Adaptive CP-MNR,
    which is a desirable robustness property 
    from a practical standpoint.
    A mild undercoverage, say $89\%$ instead of the 
    nominal $90\%$, has limited operational consequences. 
    A severe one, say $70\%$, does not:
    it repeatedly under-estimates the required charging capacity, 
    triggers costly last-minute network reinforcements,
    and erodes the operator's trust in the forecasts. 
    Since the identity of the poorly covered station 
    is not known in advance, avoiding such worst-case 
    failures anywhere in the network matters more 
    in practice than achieving the nominal level everywhere.
    The CP and CP-MNR procedures coincide at the
    national and regional levels, as indicated in Remark~\ref{rk:coincides},
    so that they exhibit the same validity rates at these levels.
    Nevertheless, they differ at the station level, 
    where CP-MNR exhibits a substantially higher validity rate than CP 
    (with $56.2\%$ vs $38.1\%$ at the nominal level,
    and $93.6\%$ vs $67.7\%$ at the relaxed level).
    This confirms the benefit of accounting 
    for the activity of the node under consideration.

    \subsubsection{Validity conditional on the activity of the neighboring nodes}
    % --- Conditional validity of the CP-MNR-Nested$^{\star}$ procedure ---
    In addition to the validity criterion discussed above, 
    that is (only) conditional on the node being active,
    we are interested in validity conditional on the 
    overall infrastructure activity as motivated in Section~\ref{sec:objcov}.
    The station level nodes exhibit
    the importance of conditional validity for
    operational decision making since the underlying uncertainty
    depends on the number of active stations in their neighborhood
    that may affect the accuracy of the forecaster.
    For each active station, the test period is decomposed into two regimes,
    depending on the number of active stations in their local authority: 
    a low-activity regime, when the number of active stations is below 
    the median number of active stations over the test period, 
    and a high-activity regime otherwise. 
    This partitioning provides a simple way to demonstrate 
    the conditional validity of the prediction intervals 
    with respect to the activity of the charging infrastructure. 
    Figure~\ref{fig:conditional} reports the validity rates at 
    level $1-\alpha = 90\%$ respectively marginally and conditional 
    on the low-activity and high-activity regimes.
    Again, only empirical coverages computed on more than $20$ test points
    are included in the validity rates.

        \begin{figure}[t]
    \centering

    % ===================== GAUCHE =====================
    \begin{minipage}[t]{0.75\textwidth}
    \vspace{0pt}
        \centering

    \includegraphics[
        width=\linewidth,
        trim=50 0 50 0,
        clip
    ]{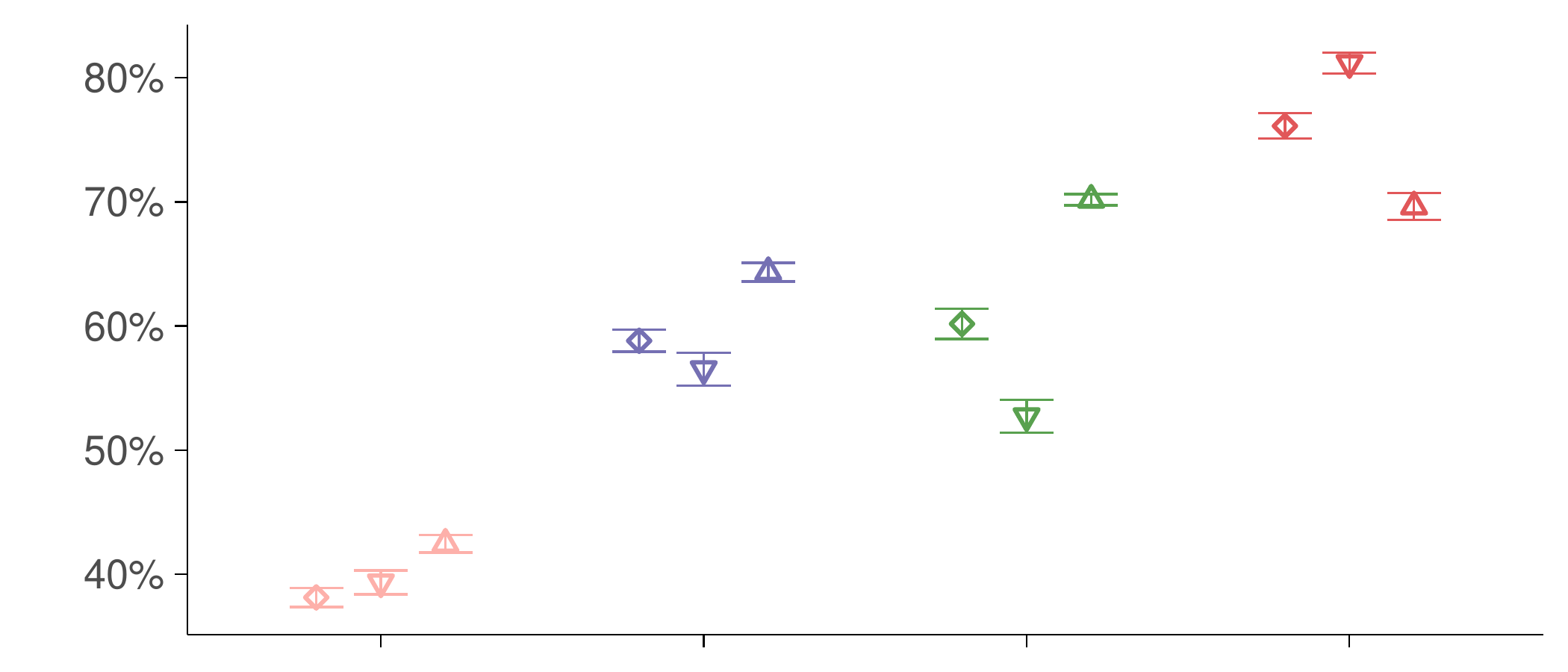}

    % \includegraphics[width=\linewidth,
    %     trim=100 50 200 65,clip]{Figures/conditional_legend_bottom.pdf}
    \end{minipage}
    % ===================== DROITE =====================
    \begin{minipage}[t]{0.15\textwidth}
    \vspace{0pt}
        \centering
    \includegraphics[
        width=\linewidth,
        trim=5 300 10 350,
        clip
    ]{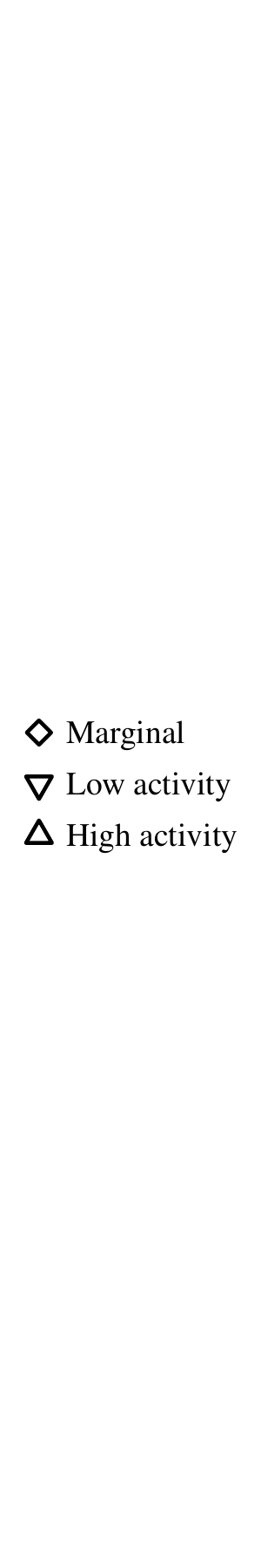}
    \end{minipage}
        \includegraphics[width=0.145\linewidth,
        trim=175 49.5 710 65,clip]{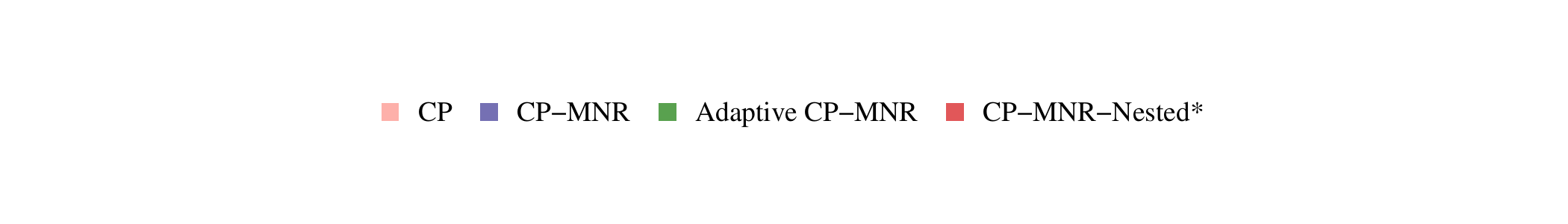}
        \hspace{1.5em}
        \includegraphics[width=0.1275\linewidth,
        trim=300 50 600 65,clip]{Figures/conditional_legend_bottom.pdf}
        \hspace{0.25em}
        \includegraphics[width=0.2385\linewidth,
        trim=410 50 400 65,clip]{Figures/conditional_legend_bottom.pdf}
        \includegraphics[width=0.25\linewidth,
        trim=600 50 200 65,clip]{Figures/conditional_legend_bottom.pdf}
    % \vspace{-0.25cm}

    \caption{Validity rates at the station-level,
    respectively marginally and conditionally 
    on the activity regime (Low or High) 
    within the region of the station under scrutiny.}
    \label{fig:conditional}    
\end{figure}

    The results indicate that the proportion of valid nodes is 
    systematically lower in the low-activity regime 
    than in the high-activity regime
    for CP, CP-MNR, and Adaptive CP-MNR. 
    This suggests that the uncertainty around the forecast is
    higher when the number of neighboring node is low and that
    these three conformal procedures fail to account for 
    the infrastructure activity. On the contrary, CP-MNR-Nested$^{\star}$
    demonstrates a similar validity rate than Adaptive CP-MNR 
    in the high-activity regime (both with $70\%$) 
    but a significantly higher validity rate in the low-activity regime ($81\%$ against $53\%$).
    It emphasizes that the construction of CP-MNR-Nested$^\star$
    successfully account for the whole infrastructure activity
    when producing the prediction intervals.
    The comparison between CP-MNR and Adaptive CP-MNR in
    Figure~\ref{fig:conditional} also illustrates the fact that
    the online extension may hinder conditional coverage 
    (as discussed in Section~\ref{sec:adaptive}). 
    Indeed, the gap in validity rates between
    the low-activity and high-activity regimes is widened 
    by the adaptive extension from $57\%$--$64\%$ to $52\%$--$70\%$.

    % --- Conclure sur ce que cette station dit et ce qu'elle ne dit pas ---
    The numerical results obtained so far indicate that 
    CP-MNR-Nested$^\star$ provides the best overall performance 
    in terms of the two validity measures.
    We now examine whether these empirical guarantees translate into 
    prediction intervals that are also meaningful from a practical standpoint.
    To this end, prediction intervals are displayed for a selection 
    of representative nodes from the hierarchy. 
    Besides illustrating the methods, these examples provide 
    a node-level assessment of the coverage-length trade-off 
    and reveal how each method
    adapts to different infrastructure activity patterns.

\subsection{Focus on Scotland, Glasgow, and two stations in Glasgow}

        \begin{figure}[b]
        \centering
            \includegraphics[width=0.396\linewidth,
                trim=8 22 15 5,clip]{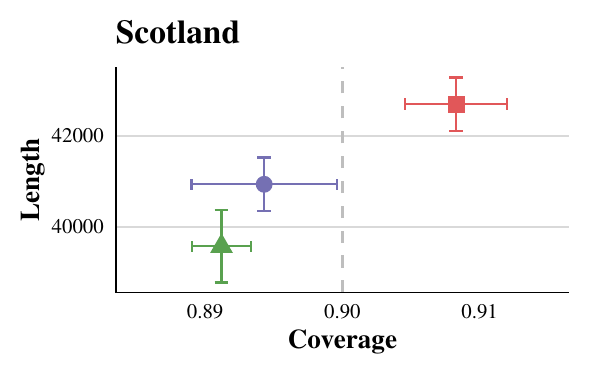}
            % \vspace{-0.65em}
            \includegraphics[width=0.372\linewidth,
                trim=22 22 15 5,clip]{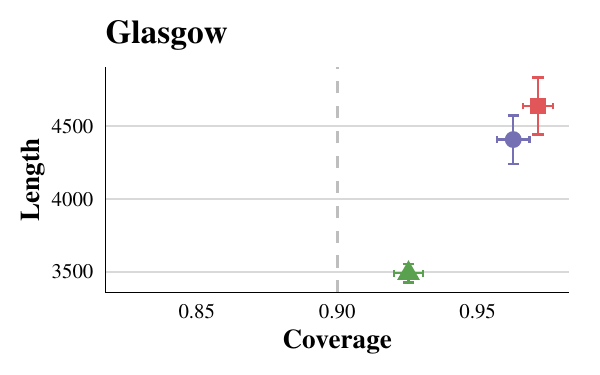}
            \includegraphics[width=0.412\linewidth,
                trim=0 0 15 5,clip]{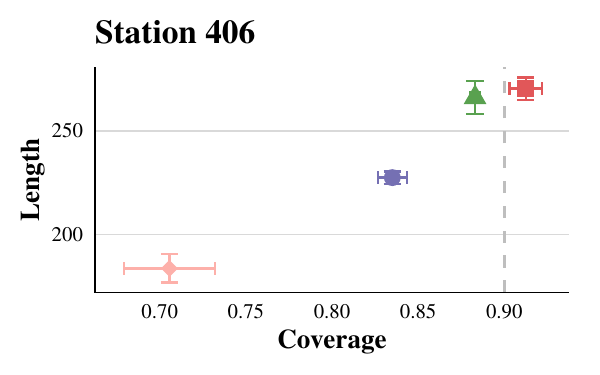}
            \vspace{-0.65em}
            \includegraphics[width=0.380\linewidth,
                trim=22 0 15 5,clip]{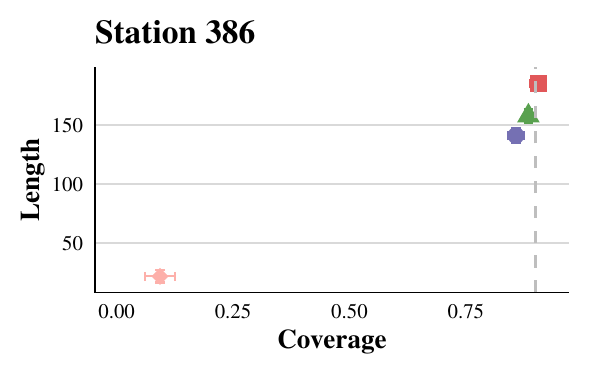}
            \hspace{0.5em}
            \includegraphics[width=0.65\linewidth,
                trim=80 20 80 10,clip]{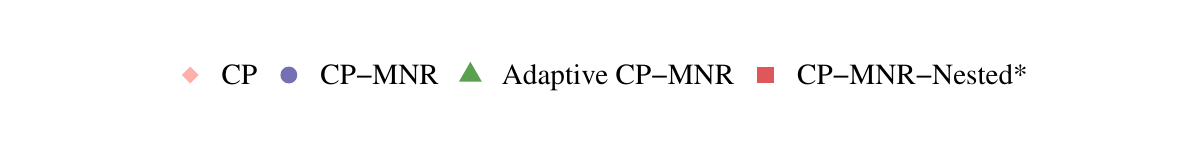}
        \caption{Coverage and average length of the prediction intervals for 
        Scotland (top left), Glasgow (top right),
        Station 406 (bottom left), and Station 386 (bottom right).
        The results are reported as $\operatorname{mean} \pm 1.96 \times \operatorname{standard \, error}$.}
        \label{fig:coverage_length}

    \end{figure}

    % --- Presentation of the results through 1 node per level ---
    The Scotland node, depicted in Figure~\ref{fig:aggregatedseries}, 
    corresponds to the highest aggregation level of the hierarchy. 
    Its demand profile exhibits a pronounced downward trend over the study 
    period, mainly driven by the gradual introduction of charging 
    tariffs across Scotland. 
    As an aggregated node, it remains only weakly affected 
    by changes in the activity of individual charging stations. 
    The Glasgow node, also depicted in Figure~\ref{fig:aggregatedseries}, 
    corresponds to a large urban area with a mature yet continuously 
    evolving charging infrastructure. 
    At the station level, Stations 406 and 386 are two representative nodes in Glasgow,
    displayed in the right-hand panel of Figure~\ref{fig:glasgow_evolvingstructure}. 
    Station 406 became operational only a few months before the beginning of the test period, 
    whereas Station 386 reappeared halfway through the test period  
    after a prolonged period of inactivity following 
    a brief operational phase in 2022.
    These two station-level nodes are expected to constitute more challenging 
    forecasting tasks, especially Station 386, 
    since the calibration set may contain only a small number of observations 
    for which the stations are active.

    \subsubsection{Node-wise coverages and lengths}
    Figure~\ref{fig:coverage_length} reports the empirical coverage 
    and average length for the four nodes under scrutiny.
    A procedure that is valid at the nominal level $1-\alpha=90\%$
    is on the right of the dashed line in Figure~\ref{fig:coverage_length},
    and the efficiency of a procedure is granted by a shorter average length 
    of the prediction intervals, i.e., located at the bottom of the plots.
    The prediction intervals produced by the CP-MNR and Adaptive CP-MNR procedures 
    exhibit a satisfying coverage at all nodes, with a minimal empirical 
    coverage of $84\%$ for CP-MNR and $88\%$ for Adaptive CP-MNR,
    both for Station 406. It confirms the findings of Table~\ref{tab:validity} 
    obtained with validity rates at the relaxed coverage level:
    the above procedures are closely aligned with the nominal 
    coverage level of $90\%$, even if not necessarily valid.
    The only procedure that is valid 
    for all four nodes is CP-MNR-Nested$^{\star}$,
    with a coverage of $91\%$ for Scotland, of $97\%$ for Glasgow,
    and of $91\%$ both for Station 406 and Station 386. 
    Again, it corroborates the fact that CP-MNR-Nested$^{\star}$ 
    provides the best validity guarantees at all levels of the hierarchy.
    The CP procedure, on the other hand, suffers from a significant 
    lack of coverage at the station level, with an empirical coverage of $70\%$ 
    for Station 406 and of (only) $9\%$ for Station 386. 
    This highlights the importance of accounting for the activity 
    of the node under consideration when constructing the prediction intervals,
    as done by the CP-MNR, Adaptive CP-MNR, and CP-MNR-Nested$^{\star}$.
    In terms of efficiency, the Adaptive CP-MNR procedure yields significantly 
    smaller prediction intervals in the case of Glasgow 
    (a relative improvement of at least $20\%$), 
    while being valid, suggesting that Adaptive CP-MNR is 
    the best compromise between validity and efficiency.
    By contrast, CP-MNR-Nested$^{\star}$ produces the largest prediction intervals
    on average, for all four nodes, which emphasizes the trade-off between validity and efficiency.

    % The results reported in Figure~\ref{fig:coverage_length}
    % are consistent with the validity rates reported in Table~\ref{tab:validity},
    % and provide a more detailed view of the coverage-length trade-off
    % at the node level, confirming that CP-MNR-Nested$^{\star}$ 
    % is the only procedure that is valid at all nodes, 
    % but at the cost of larger prediction intervals.

    \subsubsection{Visual assessment of the prediction intervals}    
        \begin{figure}[t]
        \centering
        \begin{subfigure}[b]{0.37\textwidth}
            \centering
            \includegraphics[
            width=\linewidth,
            trim=8 35 20 0,
            clip]{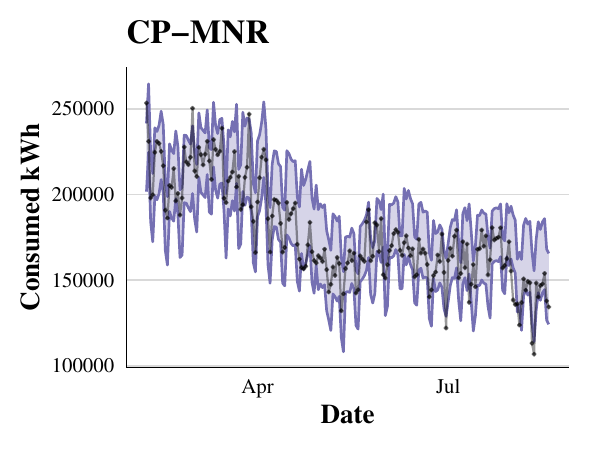}
        \end{subfigure}
        \begin{subfigure}[b]{0.3025\textwidth}
            \centering
            \includegraphics[
            width=\linewidth,
            trim= 55 35 20 0,
            clip]{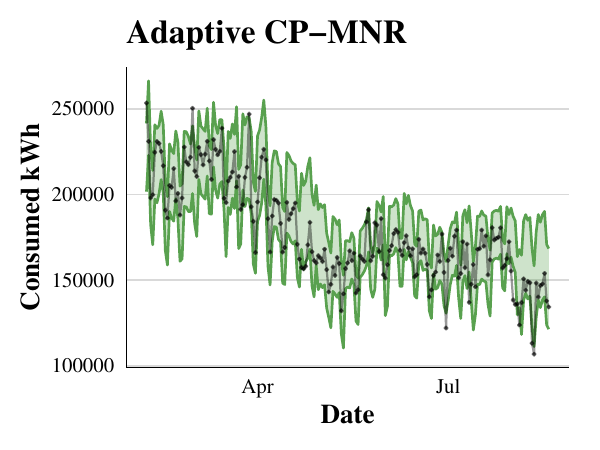}
        \end{subfigure}
        \begin{subfigure}[b]{0.3025\textwidth}
            \centering
            \includegraphics[
            width=\linewidth,
            trim= 55 35 20 0,
            clip]{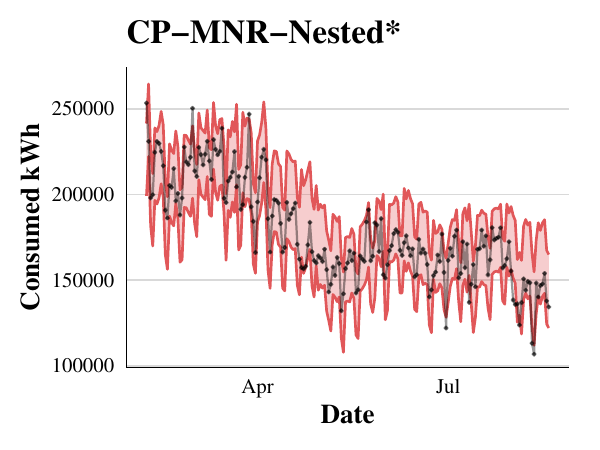}
        \end{subfigure}

        \vspace{0.05cm}

        \begin{subfigure}[b]{0.38\textwidth}
            \centering
            \includegraphics[
            width=\linewidth,
            trim=0 0 20 25,
            clip]{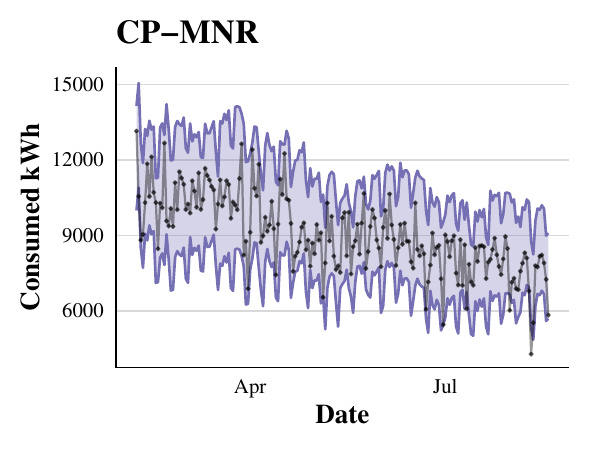}
        \end{subfigure}
        \begin{subfigure}[b]{0.3025\textwidth}
            \centering
            \includegraphics[
            width=\linewidth,
            trim=55 0 20 25,
            clip]{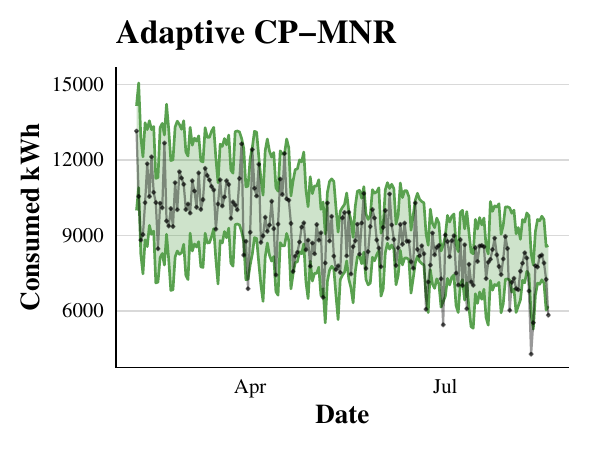}
        \end{subfigure}
        \begin{subfigure}[b]{0.3025\textwidth}
            \centering
            \includegraphics[
            width=\linewidth,
            trim=55 0 20 25,
            clip]{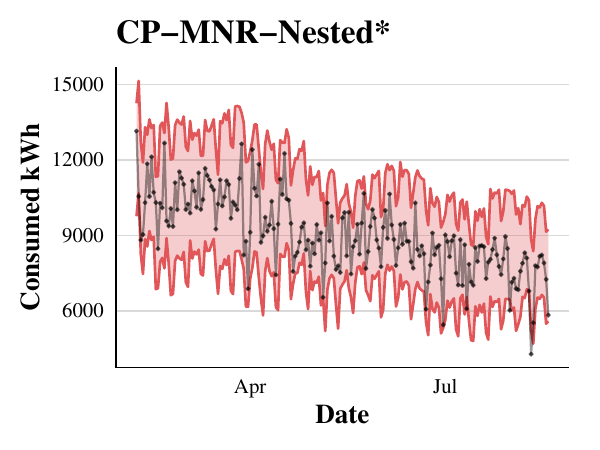}
        \end{subfigure}
        \caption{Prediction intervals for Scotland (top) and Glasgow (bottom).
        }
        \label{fig:intervals}
    \end{figure}

    To complement the numerical results, the performances 
    are visually assessed by reporting in 
    Figures~\ref{fig:intervals}--\ref{fig:intervals_station406}--\ref{fig:intervals_station386} 
    the prediction intervals output by all procedures for one run of the experiment.
    
    At the aggregated levels, namely in Figure~\ref{fig:intervals},
    the visual inspection of the prediction intervals indicates that 
    the procedures are able to handle the temporal evolution of the data. 
    Adaptive CP-MNR is designed to adapt to the temporal evolution of the data
    but the fact that CP-MNR and CP-MNR-Nested$^{\star}$ also exhibit this behavior
    suggests that updating the underlying reconciled predictors
    every two weeks is sufficient to capture this evolution.
    The prediction intervals produced by CP-MNR and 
    CP-MNR-Nested$^{\star}$ seem overly conservative 
    in the case of Glasgow, especially at the end of the test period, 
    whereas Adaptive CP-MNR adjusts to this accurately predicted 
    period, and produces smaller prediction intervals, while remaining valid.
    This illustrates the benefit of the adaptive extension,
    that progressively reduces the size of the prediction intervals if 
    the coverage level is above the nominal level. 

        \begin{figure}[t]
        \centering
        \begin{subfigure}[b]{0.38\textwidth}
            \centering
            \includegraphics[
            width=\linewidth,
            trim=0 35 20 0,
            clip]{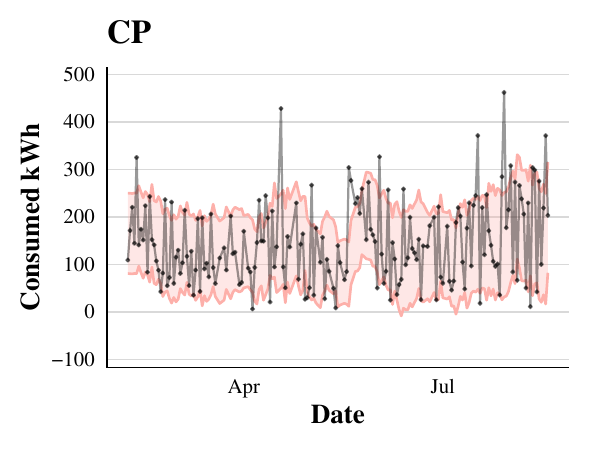}
        \end{subfigure}
        \hspace{2em}
        \begin{subfigure}[b]{0.3125\textwidth}
            \centering
            \includegraphics[
            width=\linewidth,
            trim=48 35 20 0,
            clip]{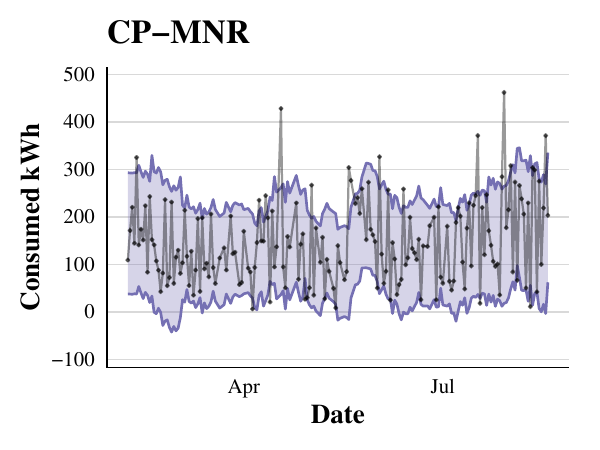}
        \end{subfigure}

        \vspace{0.05cm}

        \begin{subfigure}[b]{0.38\textwidth}
                \centering
                \includegraphics[
                width=\linewidth,
                trim=0 0 20 0,
                clip]{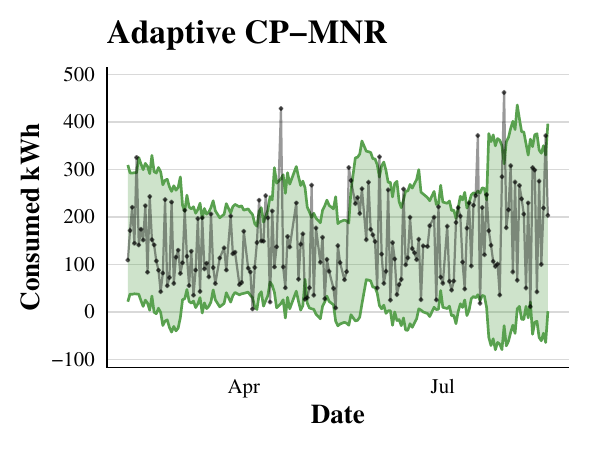}
        \end{subfigure}
            \hspace{2em}
        \begin{subfigure}[b]{0.3125\textwidth}
                \centering
                \includegraphics[
                width=\linewidth,
                trim=48 0 20 0,
                clip]{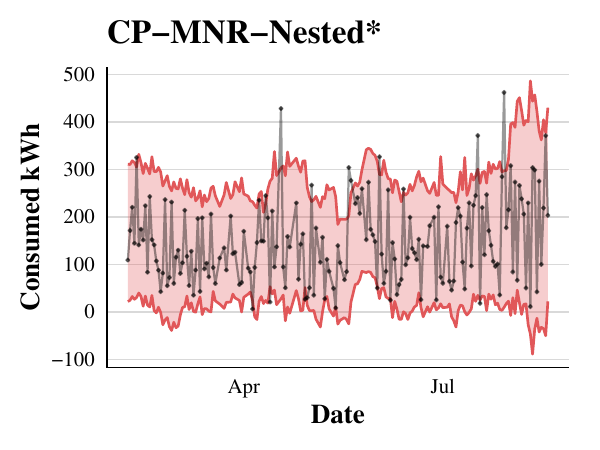}
            \end{subfigure}
        \caption{Prediction intervals for Station 406.
        }
        \label{fig:intervals_station406}
    \end{figure}

        \begin{figure}[b]
        \centering
            \begin{subfigure}[b]{0.375\textwidth}
            \centering
            \includegraphics[
            width=\linewidth,
            trim=8 35 20 0,
            clip]{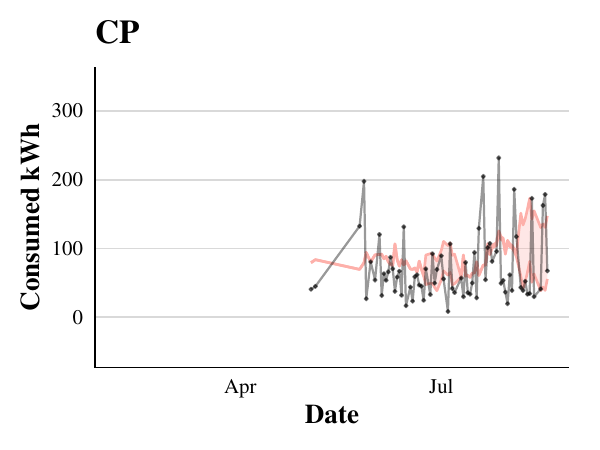}
        \end{subfigure}
                \hspace{2.5em}
        \begin{subfigure}[b]{0.3125\textwidth}
            \centering
            \includegraphics[
            width=\linewidth,
            trim=45 35 20 0,
            clip]{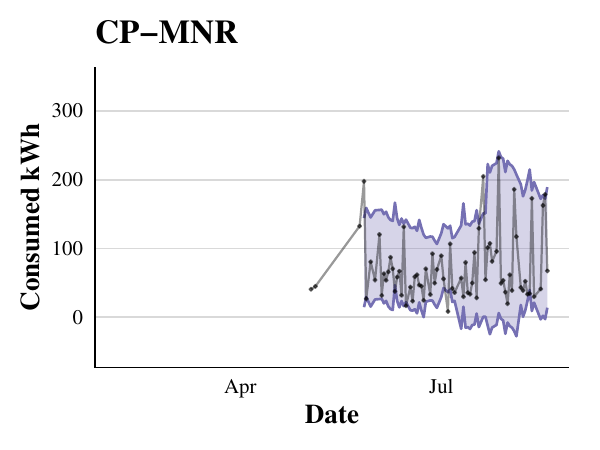}
        \end{subfigure}

        \vspace{0.05cm}

        \begin{subfigure}[b]{0.37\textwidth}
                \centering
                \includegraphics[
                width=\linewidth,
                trim=8 0 20 0,
                clip]{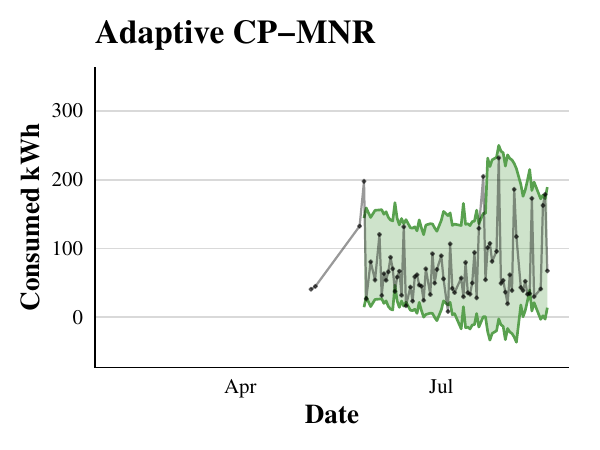}
        \end{subfigure}
                \hspace{2.5em}
        \begin{subfigure}[b]{0.3125\textwidth}
                \centering
                \includegraphics[
                width=\linewidth,
                trim=45 0 20 0,
                clip]{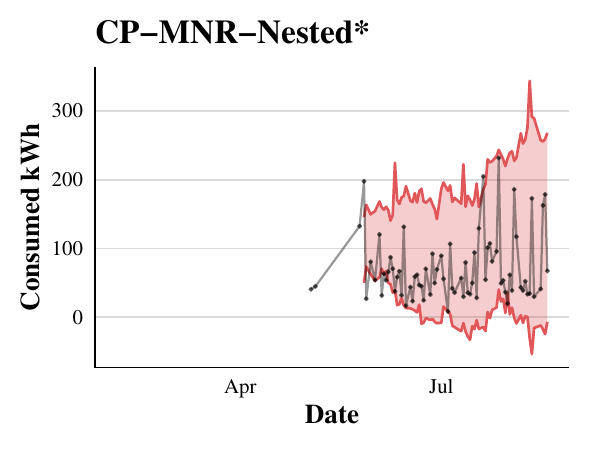}
        \end{subfigure}
    \caption{Prediction intervals for Station 386.}
    \label{fig:intervals_station386}
    \vspace{-1em}
    \end{figure}
    % --- Focus on the station level and the impact of missing node ---
    At the station level, Station 406 (Figure~\ref{fig:intervals_station406}) and
    Station 386 (Figure~\ref{fig:intervals_station386}) both 
    illustrate two typical traits of the prediction intervals.
    The first attribute is that the CP procedure produces prediction intervals 
    that are systematically too narrow, leading to a significant 
    undercoverage of the test points,
    whereas the other three procedures produce prediction intervals 
    that are close to the nominal coverage level in terms of empirical coverage.
    This is particularly visible in Figure~\ref{fig:intervals_station386},
    where the energy demand is zero for most of the calibration points.
    As a consequence, the CP-MNR, Adaptive CP-MNR, and CP-MNR-Nested$^{\star}$ 
    procedures does not produce prediction intervals until 
    some non-zero calibration points are observed 
    (this is why only CP output prediction intervals for the first three 
    non-zero observations of the test period). 
    The same phenomenon is observed for Station 406, 
    but to a smaller extent: the proportion of
    zero energy demands is lower than for Station 386,
    so that they have a smaller impact on the prediction 
    intervals produced by CP.

    The second attribute arises at the end of the test period and
    is related to the activity of neighboring stations.
    On August 10th and 11th, the number of active stations in Glasgow 
    is significantly lower than usual, 
    namely $45$ and $46$ compared to $67$ active 
    stations on average over the test period. At these two dates, 
    CP-MNR-Nested$^{\star}$ produces significantly 
    larger prediction intervals than it does for the rest of the test period,
    which is not the case for the other procedures.    
    This observation suggests that the CP-MNR-Nested$^{\star}$ 
    procedure accounts for the charging infrastructure activity 
    in a more sensitive way than the other procedures, since it 
    detects that the uncertainty is higher 
    when the number of active stations is lower than usual.
    
    The above phenomenon are consistent with the findings of Figure~\ref{fig:conditional},
    where every other procedure exhibits a lower validity rate 
    in the low-activity regime than in the high-activity regime.
    Note that the Adaptive CP-MNR procedure also produces 
    larger prediction intervals at the end of the test period,
    but this behavior is driven by the fact that it fails to
    cover two consecutive at the end of July,
    rather than by some adaptive consideration of the 
    activity of the charging infrastructure.

\section{Discussion and future work}
We successfully combined forecast reconciliation
and conformal prediction in the time-varying hierarchical structure
of the EV charging network from the ChargePlace Scotland data.
The resulting forecasts and prediction intervals are valuable
assets (e.g., as entries of an optimization algorithm) 
that accurately reflect the uncertainty of the forecasts
for the current state of the charging infrastructure.
The numerical experiment demonstrates that CP-MNR-Nested$^{\star}$ 
produces valid prediction intervals for different
configurations of the charging infrastructure,
whereas CP-MNR and Adaptive CP-MNR, yield prediction intervals
that are valid conditionally on the activity of the node under consideration,
but that are less sensitive to the activity of the whole charging infrastructure.

% --- Discussion of the other source of adaptivity ---
A natural direction for future work is to increase
the adaptivity of the methodology through 
online extensions of the learning and reconciliation procedures.
In this direction, one can build on the methodology of
\citet{bregere2022online}, that combines aggregation of experts with OLS reconciliation,
and of \citet{di2021online,hihat2024multivariate} that
propose online multivariate reconciliation procedures
based respectively on a least-squares and 
a regularized least-squares online estimation.

%\section*{APPENDICES MUST BE MOVED TO A SUPPLEMENT FILE}
%%%%%%%%%%%%%%%%%%%%%%%%%%%%%%%%%%%%%%%%%%%%%%
%% Support information, if any,             %%
%% should be provided in the                %%
%% Acknowledgements section.                %%
%%%%%%%%%%%%%%%%%%%%%%%%%%%%%%%%%%%%%%%%%%%%%%
\begin{acks}[Acknowledgments]
The authors would like to thank Margaux Zaffran for the fruitful discussions on her work on conformal prediction for missing data.
\end{acks}

%%%%%%%%%%%%%%%%%%%%%%%%%%%%%%%%%%%%%%%%%%%%%%
%% Funding information, if any,             %%
%% should be provided in the                %%
%% funding section.                         %%
%%%%%%%%%%%%%%%%%%%%%%%%%%%%%%%%%%%%%%%%%%%%%%
\begin{funding}
This work was conducted as part of the Inria-EDF challenge.
\end{funding}

%%%%%%%%%%%%%%%%%%%%%%%%%%%%%%%%%%%%%%%%%%%%%%
%% Supplementary Material, including data   %%
%% sets and code, should be provided in     %%
%% {supplement} environment with title      %%
%% and short description. It cannot be      %%
%% available exclusively as external link.  %%
%% All Supplementary Material must be       %%
%% available to the reader on Project       %%
%% Euclid with the published article.       %%
%%%%%%%%%%%%%%%%%%%%%%%%%%%%%%%%%%%%%%%%%%%%%%
\bigskip

\begin{supplement}
\stitle{Connection with CP-MDA and Theoretical Results} 
\sdescription{This supplement gives mathematical insights that
motivate the construction of
CP-MNR and CP-MNR-Nested$^{\star}$. 
It includes the conceptual connection between our
conformal prediction with missing node reconciliation (MNR) setting
and the conformal prediction with missing data (MDA) framework 
developed by \citet{zaffran2024predictive} that guided our approach.
}
\end{supplement}
\begin{supplement}
\stitle{Implementation of the Algorithms}
\sdescription{This supplement provides 
further information on the implementation of the proposed algorithms.
It includes (i) the formal description of the learning
and reconciliation algorithms that are used in
Algorithms~\ref{algo:CP-MNR} and~\ref{algo:CP-MNR-Nested*};
(ii) the implementation of the conformal step in Algorithm~\ref{algo:CP-MNR-Nested*};
(iii) the implementation and theoretical analysis 
of the Adaptive CP-MNR procedure introduced in Section~\ref{sec:adaptive}.}
\end{supplement}
\begin{supplement}
\stitle{Additional Numerical Results} 
\sdescription{This supplement focuses on the comparison
of the different forecast reconciliation procedures 
in terms of the prediction interval coverage and length.
It advocates for the use of rOLS, which is the focus of the main text.
}
\end{supplement}

%%%%%%%%%%%%%%%%%%%%%%%%%%%%%%%%%%%%%%%%%%%%%%%%%%%%%%%%%%%%%
%%                  The Bibliography                       %%
%%                                                         %%
%%  imsart-nameyear.bst  will be used to                   %%
%%  create a .BBL file for submission.                     %%
%%                                                         %%
%%  Note that the displayed Bibliography will not          %%
%%  necessarily be rendered by Latex exactly as specified  %%
%%  in the online Instructions for Authors.                %%
%%                                                         %%
%%  MR numbers will be added by VTeX.                      %%
%%                                                         %%
%%  Use \cite{...} to cite references in text.             %%
%%                                                         %%
%%%%%%%%%%%%%%%%%%%%%%%%%%%%%%%%%%%%%%%%%%%%%%%%%%%%%%%%%%%%%

%% if your bibliography is in bibtex format, uncomment commands:
\bibliographystyle{imsart-nameyear} % Style BST file
\bibliography{aoas-article}       % Bibliography file (usually '*.bib')

%%%%%%%%%%%%%%%%%%%%%%%%%%%%%%%%%%%%%%%%%%%%%%
%% Temporary Appendix section is placed     %%
%% here. Its content should be moved into   %%
%% another file.                            %%     
%%%%%%%%%%%%%%%%%%%%%%%%%%%%%%%%%%%%%%%%%%%%%%
\clearpage
\appendix

\section{Connection with CP-MDA and Theoretical Results} 
\label{app:missing}
This Supplementary Material presents
the heuristic theoretical coverage guarantees 
of the CP-MNR and CP-MNR-Nested$^{\star}$ procedures.
The analysis is based on the connection between the CP-MNR setting
and the conformal prediction with missing data framework 
of \citet{zaffran2023conformal,zaffran2024predictive}, which guided the approach.
We first establish this connection.

\subsection{Connection with the conformal prediction with missing data framework}
\label{sec:connection}
% --- Connection to the missing-data framework ---
The incentive for the CP-MNR and 
CP-MNR-Nested$^{\star}$ procedures
is to construct prediction intervals 
that are valid conditionally 
on the state of the infrastructure, 
represented by a binary activity vector. 
This is closely related to the objective of the
missing data augmentation (MDA) framework of
\citet{zaffran2023conformal,zaffran2024predictive},
where validity is conditioned on a binary mask 
indicating which features are missing.
Nevertheless, the binary vector of infrastructure 
activity does not encode missingness
but instead identifies inactive nodes 
in the hierarchy, whose zero-valued 
constraints directly affect 
the forecast of every node of the hierarchy, 
through the reconciliation procedure.

The two settings are thus conceptually different,
but the following observation establishes 
a connection between them:

\smallskip

``Base forecasts are covariates
in the reconciliation step,
and enforcing the infrastructure activity amounts 
to setting some base forecasts to zero,
which corresponds to a specific imputation strategy
in the framework of conformal prediction with missing data.''

\smallskip

% --- Details on the connection with CP-MDA ---
This observation explicit is made explicit
by building on the interpretation 
from~\citet{hollyman2021understanding} of
forecast reconciliation as forecast combination.
This interpretation is illustrated on a simple example, 
omitting algebraic details for brevity.

Consider the toy-example used throughout the main text 
in Sections~\ref{sec:hierarchical_structure} 
and~\ref{sec:forecastreconciliation},
with $n = 3$ stations, $m=4$ nodes, 
infrastructure activity vector 
$\ba = (1,0,1,1)^{\top}$, 
and structural matrix $H$ 
defined in~\eqref{eq:example_refinement}.
Let $\widehat{\by}$ denote a base forecasts vector
and denote by $\widetilde{\by}$ 
the associated reconciled forecasts vector
obtained by applying a projection $P$ 
onto the image of $H$ to the base forecasts vector.
Note that $P$ can be the orthogonal projection 
as in the OLS reconciliation procedure,
but it can also be any other projection onto the image of $H$.

Following \citet{hollyman2021understanding}, 
the reconciled forecasts $\widetilde{\by}$ can be expressed
as a constrained linear combination of direct forecast of the node under
scrutiny, and an indirect one, constructed from the remaining nodes:
\begin{align*}
        \widetilde{y}_4 &= (3 - \lambda_1 - \lambda_2 - \lambda_3) \, \widehat{y}_4 + 
    (\lambda_1 + \lambda_2 + \lambda_3 - 2) \bigl(\widehat{y}_1 + \widehat{y}_2 + \widehat{y}_3\bigr) \,; \\
    \widetilde{y}_1 &= \lambda_1 \, \widehat{y}_1 + 
    (1 - \lambda_1) \bigl(\widehat{y}_4 - \widehat{y}_2 - \widehat{y}_3\bigr) \,; 
    \quad \widetilde{y}_2 = \lambda_2 \, \widehat{y}_2 + 
    (1 - \lambda_2) \bigl(\widehat{y}_4 - \widehat{y}_1 - \widehat{y}_3\bigr) \,; \\
    \widetilde{y}_3 &= \lambda_3 \, \widehat{y}_3 + 
    (1 - \lambda_3) \bigl(\widehat{y}_4 - \widehat{y}_1 - \widehat{y}_2\bigr) \,, 
    \quad \mbox{where} \quad \lambda_1, \lambda_2, \lambda_3 \quad \mbox{depend on the projection } P \,.
\end{align*}
Now, the reconciled forecasts 
$\widetilde{\by}^{\,\prime}$ produced by rOLS
can be expressed in a similar way:
\begin{align*} 
    \widetilde{y}^{\,\prime}_4 &= (2 - \lambda'_1 - \lambda'_3) \, \widehat{y}_4 + 
    (\lambda'_1 + \lambda'_3 - 1) \bigl(\widehat{y}_1 + 0 + \widehat{y}_3\bigr) \,; \\
    \widetilde{y}^{\,\prime}_1 &= \lambda'_1 \, \widehat{y}_1 + 
    (1 - \lambda'_1) \bigl(\widehat{y}_4 - 0 - \widehat{y}_3\bigr) \,; 
    \quad
    \widetilde{y}^{\,\prime}_2 = 0 \,; \quad
    \widetilde{y}^{\,\prime}_3 = \lambda'_3 \, \widehat{y}_3 + 
    (1 - \lambda'_3) \bigl(\widehat{y}_4 - \widehat{y}_1 - 0 \bigr) \,,
\end{align*}
with $\lambda'_1=\lambda'_3=2/3$ in this case. 

This formulation makes it clear that 
the base forecasts of inactive nodes 
no longer contribute to the forecast combination: 
$\widehat{y}_2$ is systematically replaced by zero. 
This mechanism can be interpreted 
as a particular imputation strategy, 
where the base forecasts of the inactive nodes are deterministically 
imputed by zero before constructing the indirect forecasts.

This perspective reveals a close connection 
with the conformal prediction with missing data 
framework of \citet{zaffran2024predictive}, 
allowing the theoretical analysis to follow 
essentially the same arguments.

\subsection{Theoretical Result for CP-MNR}
\label{app:thCP-MNR}
In this section, we present the theoretical coverage guarantee for CP-MNR,
which is conditional on the activity of the node under scrutiny.

\smallskip

Following the procedure described 
in Sections~\ref{sec:hierarchicalconformal} 
and~\ref{sec:methodology}, 
we consider a calibration set 
of observations indexed by $\Calib$ 
with cardinality $\#\Calib$, and a test point 
$(\bx^{(t)}, \ba^{(t)})$ with unknown target value $\by^{(t)}$.
Recall that the non-conformity score function
introduced in~\eqref{eq:conformity_score_i} and 
considered throughout the theoretical analysis 
is defined as
\begin{equation}
    \label{eq:scoreapp}
    \forall i \in [m]\,, \qquad s_i : \Bigl((\bx, \ba), \by; \widetilde{A} \Bigr) 
    \longmapsto \Bigl|\Pi_i\Bigl(\by - \widetilde{A}(\bx, \ba)\Bigr)\Bigr|\,.
\end{equation}
As in classic conformal prediction, 
the theoretical framework is built on 
an exchangeability assumption.
The latter may not hold in practice, 
but provides a heuristic justification 
for the validity of the proposed procedures.
\begin{assumption}\label{ass:exch}
       The random variables $\bigl\{\bigl(\bx^{(k)}, \ba^{(k)}, \by^{(k)}\bigr)\bigr\}_{k \in \Calib \cup \{t\}}$ are exchangeable, i.e., their joint distribution is invariant under permutations.
\end{assumption}

The following theorem provides a theoretical coverage guarantee for CP-MNR, 
that is conditional on the activity of the node under scrutiny.

\begin{theorem}\label{th:weak-conditional}
    Consider the nominal coverage level $1-\alpha \in [0,1]$.
    Under Assumption~\ref{ass:exch}, the CP-MNR procedure of Algorithm~\ref{algo:CP-MNR}, 
    run with any learning and reconciliation algorithms $\cA$, $\cR$, satisfies
    \begin{equation} \label{eq:weak-conditional}
        \forall i \in [m] \,,  \qquad \P\Bigl(y_i^{(t)} \in \, \cC_i\bigl(\bx^{(t)}, \ba^{(t)}\bigr) \Big| \, a^{(t)}_i=1 \Bigr) \geq 1-\alpha \,.
    \end{equation}
\end{theorem}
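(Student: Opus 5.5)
The plan is the standard Mondrian conformal argument: condition on the training data and on the activity configuration of node $i$ over the calibration and test points, then use exchangeability inside the cell where node $i$ is active. Fix $i\in[m]$. Since $\widetilde{A}$ is fitted on $\Train$ only, we work conditionally on the training data. The map $(\bx,\ba,\by)\mapsto s_i((\bx,\ba),\by;\widetilde{A})$ is then a fixed measurable function, and Assumption~\ref{ass:exch} transfers to the triples $\bigl(\bx^{(k)},\ba^{(k)},s_i^{(k)}\bigr)$, $k\in\Calib\cup\{t\}$.

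Next, condition on the event $E_S$ that the set of indices $k\in\Calib\cup\{t\}$ with $a_i^{(k)}=1$ equals a given $S\ni t$. Write $S=\tCalib(i)\cup\{t\}$ and $N=\#\tCalib(i)=\#S-1$. The key step is to show that, given $E_S$, the scores $\{s_i^{(k)}\}_{k\in S}$ are exchangeable. For any permutation $\sigma$ of $S$, extend it by the identity outside $S$. The joint law of the full sequence is invariant under this extended permutation, and the event $E_S$ is invariant as well, because it depends only on which positions carry $a_i=1$, and $\sigma$ permutes those positions among themselves. Hence the conditional law given $E_S$ is $\sigma$-invariant.

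With this conditional exchangeability, the usual rank argument applies. Step 5 of Algorithm~\ref{algo:CP-MNR} gives $y_i^{(t)}\in\cC_i$ if and only if $s_i^{(t)}\le\widetilde q_i$, where $\widetilde q_i$ is the $\lceil(1-\alpha)(N+1)\rceil$-th smallest of the $N$ calibration scores in $S$; here the interval should be centered at the test-point prediction. It follows that $s_i^{(t)}$ lies among the $\lceil(1-\alpha)(N+1)\rceil$ smallest of the $N+1$ scores in $S$, with ties handled conservatively. By exchangeability this happens with conditional probability at least $\lceil(1-\alpha)(N+1)\rceil/(N+1)\ge1-\alpha$. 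If $(1-\alpha)(N+1)>N$, the min in step 4 is over an empty set, so $\widetilde q_i=+\infty$ and coverage is trivial. Averaging over all $S\ni t$, i.e. over $\{a_i^{(t)}=1\}$, and then over the training data yields~\eqref{eq:weak-conditional}.

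The main obstacle is making the conditional exchangeability step rigorous. One must check that conditioning on the activity pattern of node $i$ alone, rather than on the whole vector $\ba$, is compatible with the permutation invariance. One must also check that the score keeps the full $\ba^{(k)}$ as input: it is a fixed function of the whole triple, so this causes no problem. Finally, $\P(a_i^{(t)}=1)>0$ should be assumed implicitly so that the conditional statement is well-defined.
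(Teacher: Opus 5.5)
Your proposal is correct and uses the same Mondrian (group-conditional) argument as the paper. Like the paper, you fix $i$, use that $\widetilde{A}$ depends only on the training data, get exchangeability of the scores over $\{k \in \Calib : a_i^{(k)}=1\}\cup\{t\}$ given that node $i$ is active at $t$, and finish with the rank argument; the paper imports that last step from Lemma~1 of \citet{romano2020malice}, whereas you write it out more carefully by conditioning on the exact active set $S$, and your remark that step~5 of Algorithm~\ref{algo:CP-MNR} should center the interval at $\Pi_i\bigl(\widetilde{A}(\bx^{(t)},\ba^{(t)})\bigr)$ is also right.
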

\begin{proof}
    The result follows from the fact that Algorithm~\ref{algo:CP-MNR} 
    outputs infrastructure activity-conditional 
    Mondrian conformal predictors \citep{vovk2005algorithmic}.
    Precisely, our formulation corresponds to the
    group-conditional framework of \citet{romano2020malice}.
    For the sake of completeness, we report their arguments.

    \smallskip
    Fix $i \in [m]$. 
    Conditional on $a_i^{(t)} = 1$, 
    the grouping of calibration indices 
    in Algorithm~\ref{algo:CP-MNR} is of the form
    $\I_1 = \bigl\{k \in \Calib : a_i^{(k)} = 1 \bigr\}$. 
    Thus, the exchangeability property of Assumption~\ref{ass:exch}
    implies that 
    $\bigl\{\bigl(\bx^{(k)}, \ba^{(k)}, \by^{(k)}\bigr)\bigr\}_{k \in \I_1 \cup \{t\}}$ 
    is exchangeable conditional on $a_i^{(t)} = 1$.  

    \smallskip

    The reconciled predictor $\widetilde{A}$ 
    only depends on data from the training set, 
    so that the non-conformity scores 
    $\bigl\{s_i^{(k)}\bigr\}_{k \in \I_1 \cup \{t\}}$
    is exchangeable conditional on $a_i^{(t)} = 1$.

    \smallskip

    Finally, resorting to \citealp[Lemma~1]{romano2020malice}, which is essentially the 
    same argument as in the classical conformal prediction setting adapted to conditional exchangeability,
    we obtain the infrastructure activity-conditional guarantee in~\eqref{eq:weak-conditional}, as announced.
\end{proof}

\begin{remark}
% --- Intermediate conditioning is meaningful here ---
The CP-MNR procedure of Algorithm~\ref{algo:CP-MNR}
has no direct counterpart in \citet{zaffran2024predictive}.
For instance, in MDA, there is no particular 
reason to assess validity conditional on missingness 
of one specific covariate rather than another.
By contrast, in MNR, the activity of the node under consideration 
is a particularly meaningful conditioning variable, 
as it directly shapes the demand behavior of the node.
\end{remark}

\subsection{Theoretical Results for CP-MNR-Nested$^{\star}$}
\label{app:thCP-MNR-Nested*}
We show two theoretical coverage guarantees 
for the CP-MNR-Nested$^{\star}$ procedure of Algorithm~\ref{algo:CP-MNR-Nested*}.
The first one is a marginal guarantee, obtained
under the mild exchangeability assumption of Assumption~\ref{ass:exch}.
The second one is a conditional guarantee,
obtained under a stronger assumption, as described next.

\smallskip

The first result is a weaker result 
than the one of Theorem~\ref{th:weak-conditional} in two ways: 
it is marginal rather than locally conditional, 
and the coverage level is $1-2\alpha$ rather than $1-\alpha$.

\begin{theorem} \label{th:marginalnested}
    Consider the nominal coverage level $1-\alpha \in [0,1]$.
    Under Assumption~\ref{ass:exch}, 
    the CP-MNR-Nested* procedure of Algorithm~\ref{algo:CP-MNR-Nested*}, 
    run with $\tCalib = \Calib$, any learning 
    and reconciliation algorithms $\cA$, $\cR$, 
    satisfies
\begin{equation} \label{eq:marginalvalidity}
    \forall i \in [m] \,,  \qquad \P\Bigl(y_i^{(t)} \in \, \cC_i\bigl(\bx^{(t)}, \ba^{(t)}\bigr) \Bigr) \geq 1-2\alpha \,.
\end{equation}
\end{theorem}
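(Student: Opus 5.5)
The plan mirrors the proof of the marginal guarantee of CP-MDA-Nested$^\star$ in \citet[Theorem~1]{zaffran2024predictive}, which itself follows the jackknife+ style argument of \citet{barber2021predictive}. Fix $i \in [m]$, take $\tCalib = \Calib$, write $N = \#\Calib$, and index the $N+1$ exchangeable points of Assumption~\ref{ass:exch} by $\Calib \cup \{t\}$. The starting point is a symmetric comparison matrix. For every ordered pair $(k,\ell)$ of distinct indices, set $\widetilde{\ba}^{(k,\ell)} = \min\bigl(\ba^{(k)}, \ba^{(\ell)}\bigr)$, which is symmetric in $(k,\ell)$. Then define
\begin{equation*}
E_{k\ell} = \indic\Bigl\{ s_i\bigl((\bx^{(k)}, \widetilde{\ba}^{(k,\ell)}), \by^{(k)}; \widetilde{A}\bigr) < s_i\bigl((\bx^{(\ell)}, \widetilde{\ba}^{(k,\ell)}), \by^{(\ell)}; \widetilde{A}\bigr) \Bigr\}.
\end{equation*}
Since $\widetilde{A}$ is fitted on $\Train$ only, this matrix is a fixed measurable function of the $N+1$ points. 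By exchangeability, its distribution is invariant under simultaneous permutation of rows and columns. Because both scores in a pair are evaluated under the \emph{same} pattern $\widetilde{\ba}^{(k,\ell)}$, the matrix is a strict tournament: $E_{k\ell} + E_{\ell k} \leq 1$. This is exactly the reason for the shared pattern in Algorithm~\ref{algo:CP-MNR-Nested*}.

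The second step rewrites miscoverage. By definition of $\cC_i$ in Algorithm~\ref{algo:CP-MNR-Nested*}, taking the candidate $\by = \by^{(t)}$, we have $y_i^{(t)} \notin \cC_i\bigl(\bx^{(t)},\ba^{(t)}\bigr)$ if and only if $\sum_{k \in \Calib} E_{kt} \geq (1-\alpha)(N+1)$. For a candidate $\by$, only its $i$-th coordinate enters $s_i$, so the projection $\Pi_i$ causes no ambiguity. Call an index $\ell$ \emph{strange} if $\sum_{k \neq \ell} E_{k\ell} \geq (1-\alpha)(N+1)$. Then miscoverage is exactly the event that $t$ is strange.

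The third step is the counting argument. Let $S$ be the set of strange indices. Each $\ell \in S$ beats at least $(1-\alpha)(N+1)$ other indices. Among the $\#S$ strange indices, at most $\binom{\#S}{2}$ comparisons are internal to $S$, so there are at least $\#S\,(1-\alpha)(N+1) - \binom{\#S}{2}$ wins against indices outside $S$. This quantity cannot exceed $\#S\,(N+1-\#S)$. Rearranging gives $\#S \leq 2\alpha(N+1)$, up to the usual $\pm 1$ bookkeeping, which I would check carefully to land exactly on $2\alpha(N+1)$, as in \citet{barber2021predictive}. By exchangeability of the matrix, $\P(t \in S) = \E[\#S]/(N+1) \leq 2\alpha$, which yields~\eqref{eq:marginalvalidity}.

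I expect the main obstacle to be the justification that the pairwise comparison is genuinely symmetric in our setting. Two points need care. First, $\widetilde{A}(\cdot,\widetilde{\ba})$ must be a deterministic function of $(\bx,\widetilde{\ba})$, including the rOLS refinement through $H^{(t)}$ computed from $\widetilde{\ba}$. Second, the test-side score in Algorithm~\ref{algo:CP-MNR-Nested*} must coincide with the $E_{kt}$ entry, which holds because $\widetilde{\ba}^{(k)} = \min(\ba^{(k)},\ba^{(t)}) = \widetilde{\ba}^{(k,t)}$. Ties are handled by the strict inequality in the definition of $E_{k\ell}$, which keeps $E_{k\ell} + E_{\ell k} \leq 1$.
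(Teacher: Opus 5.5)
Your proposal is correct and matches the paper's proof essentially step for step. It uses the same shared-pattern comparison matrix (your $E_{k\ell}$ is the paper's $M_i(\ell,k)$), identifies miscoverage with $t$ being strange, uses exchangeability to get $\P(t\in S)=\E[\#S]/(N+1)$, and bounds the tournament, which you do by direct counting where the paper cites Landau's theorem. Your counting also closes without extra bookkeeping: dividing by $\#S>0$ gives $(1-\alpha)(N+1)\leq N+1-(\#S+1)/2$, hence $\#S\leq 2\alpha(N+1)-1$.
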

\begin{proof}
    The proof follows the lines and arguments 
    from the proof of \citealp[Theorem 4.2]{zaffran2024predictive},
    that is itself inspired from the jackknife+ conformal predictors 
    proof from~\citealp[Theorem~1]{barber2021predictive}.
    For the sake of completeness, we report their proofs to our setting and notations.

    \smallskip

    Fix $i \in [m]$, and denote by $\C_i\bigl(\bx^{(t)}, \ba^{(t)}\bigr)$ 
    the corresponding conformal prediction set output
    by Algorithm~\ref{algo:CP-MNR-Nested*}, when run with $\tCalib = \Calib$.
    By construction, we have that

    \begin{equation} \label{eq:loo}
        \Bigl\{
        y_i^{(t)} \notin \C_i\bigl(\bx^{(t)}, \ba^{(t)}\bigr) \Bigr\} =
        \Bigl\{
        (1 - \alpha)\bigl(\#\Calib +1 \bigr) \leq \sum_{k \in \Calib} \indic_{S_i^{(k),t} < S_i^{(t),k}} \Bigr\} \,,
    \end{equation}
    where $S_i^{(p),q}$ denotes the non-conformity score of a calibration point $p$ 
    when the infrastructure activity pattern of the point $q$ is applied on top of its own 
    infrastructure activity pattern, for all $p,q \in \Calib \cup \{t\}$.

    \smallskip

    Following \citet{barber2021predictive}, 
    we define a comparison matrix $M_i$, 
    indexed by $\Calib \cup \{t\}$, with entries 
    $M_i(p,q) = \indic_{S_i^{(p),q} > S_i^{(q),p}}$ for $p,q \in \Calib \cup \{t\}$,
    and consider the set
    \begin{equation} \label{eq:WMi}
    W(M_i) = \biggl\{p \in \Calib \cup  \{t\} : \!\!\! \sum_{q \in \Calib \, \cup \{t\}} \!\!\! M_i(p,q) \geq (1-\alpha)\bigl(\#\Calib+1\bigr) \biggr\} \,.
    \end{equation}
    Thereof, we can rewrite~\eqref{eq:loo} as
    $\bigl\{y_i^{(t)} \notin \C_i\bigl(\bx^{(t)}, \ba^{(t)}\bigr) \bigr\} 
    = \bigl\{t \in W(M_i) \bigr\}$, so that it suffices 
    to upper bound the probability of the event $\bigl\{t \in W(M_i)\bigr\}$.

    \smallskip

    Denote by $\M_i$ the function that maps 
    $\bigl\{\bigl(\bx^{(k)}, \ba^{(k)}, \by^{(k)}\bigr)\bigr\}_{k \in \Calib \cup \{t\}}$
    to the comparison matrix $M_i$ defined above, and fix
    $p, \,q \in \Calib \cup \{t\}$.
    For any permutation $\sigma$ on $\Calib \cup \{t\}$,
    the symmetrical construction of the reconciled forecast, 
    and thus of the scores, ensures that
    \begin{align}\label{eq:symmetry}
    M_i\bigl(\sigma(p),\sigma(q)\bigr) &= \M_i\Bigl(\bigl(\bx^{(k)}, \ba^{(k)}, \by^{(k)}\bigr)_{k \in \Calib \cup \{t\}}\Bigr)\bigl(\sigma(p),\sigma(q)\bigr) \nonumber \\
    &= \M_i\Bigl(\bigl(\bx^{(\sigma(k))}, \ba^{(\sigma(k))}, \by^{(\sigma(k))}\bigr)_{k \in \Calib \cup \{t\}}\Bigr)(p,q) \,.
    \end{align}
    Therefore, we combine~\eqref{eq:symmetry} with Assumption~\ref{ass:exch}, 
    and get that $M_i^{\sigma}\overset{d}{=} M_i$, where
    $M_i^{\sigma}$ is the permuted matrix $\bigl(M_i\bigl(\sigma(p),\sigma(q)\bigr)\bigr)_{p,q \in \Calib \cup \{t\}}$.
    As a consequence, $\P\bigl(p \in W(M_i)\bigr)$ 
    does not depend on the specific value of $p \in \Calib \cup \{t\}$, which implies that
    \begin{align*} 
    \P\bigl(t \in W(M_i)\bigr) = 
    \frac{1}{\#\Calib + 1} \sum_{p \in \Calib \, \cup \{t\}} \!\!\! \P\bigl(p \in W(M_i)\bigr) 
    = \frac{1}{\#\Calib + 1} \E\Bigl[ \bigl| W(M_i) \bigr| \Bigr] \,.
    \end{align*}
    The proof concludes by using 
    a deterministic upper bound 
    on the cardinality of $W(M_i)$,
    obtained by Landau's theorem 
    for tournaments \citep{landau1953dominance}, 
    namely,
    \[
    \#W(M_i) \leq 2\alpha \bigl(\#\Calib + 1\bigr)\,.
    \]\end{proof}

We now turn to the stronger conditional coverage 
guarantee of CP-MNR-Nested$^{\star}$, obtained
under the following stronger assumptions.
\begin{assumption}\label{ass:iid}
       The random variables $\bigl\{\bigl(\bx^{(k)}, \ba^{(k)}, \by^{(k)}\bigr)\bigr\}_{k \in \Calib \cup \{t\}}$ are drawn i.i.d.
\end{assumption}

\begin{assumption}\label{ass:mask}
    We assume that the infrastructure activity pattern is independent of the observations, i.e., 
    for all $k \in \Calib \cup \{t\}$, 
    we have that $\ba^{(k)} \perp \bigl(\bx^{(k)},\by^{(k)}\bigr)$.
\end{assumption}

Assumption~\ref{ass:iid} is a slightly 
stronger version of Assumption~\ref{ass:exch},
but the main difference is that Assumption~\ref{ass:mask}
is a very restrictive assumption, 
which does not hold in our setting.
Indeed, the infrastructure activity vector is deterministically 
induced by the EV charging energy demand vector,
which breaks the independence assumption of Assumption~\ref{ass:mask}.
Nevertheless, the encouraging simulation evidence from \citealp[Section 5.2 and 5.3]{zaffran2024predictive},
together with our empirical findings in Section~\ref{sec:experiments},
indicate that the above Theorem~\ref{th:strong-conditional} and the associated
conditional guarantee of CP-MNR-Nested$^{\star}$ 
may still hold in practice, even when Assumption~\ref{ass:mask} is violated.

\begin{theorem}\label{th:strong-conditional}
    Consider the nominal coverage level $1-\alpha \in [0,1]$.
    Under Assumption~\ref{ass:iid} and~\ref{ass:mask}, 
    the CP-MNR-Nested$^{\star}$ procedure of Algorithm~\ref{algo:CP-MNR-Nested*}, 
    run with $\tCalib = \Calib$, any learning 
    and reconciliation algorithms $\cA$, $\cR$, 
    satisfies
    \begin{equation} \label{eq:strong-conditional}
        \forall \ba \in \{0,1\}^m \!, \, i \in [m] \,, 
        \qquad \P\Bigl(y_i^{(t)} \in \, \C_i\bigl(\bx^{(t)}, \ba^{(t)}\bigr) \Big| \, \ba^{(t)} = \ba \Bigr) \geq 1-2\alpha \,.
    \end{equation}
\end{theorem}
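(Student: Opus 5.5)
The plan is to reduce the conditional statement to the marginal guarantee of Theorem~\ref{th:marginalnested}, applied to a modified exchangeable sequence in which the test mask is frozen at $\ba$. This mirrors the proof of \citealp[Theorem~4.3]{zaffran2024predictive} for CP-MDA-Nested$^{\star}$.

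Fix $\ba \in \{0,1\}^m$ with $\P(\ba^{(t)}=\ba)>0$ and fix $i \in [m]$. First, by Assumption~\ref{ass:iid} the test triple is independent of the calibration triples. By Assumption~\ref{ass:mask}, conditioning on $\ba^{(t)}=\ba$ therefore leaves the law of $(\bx^{(t)},\by^{(t)})$ and of all calibration triples unchanged. Consequently, the conditional probability in~\eqref{eq:strong-conditional} equals the unconditional probability of coverage computed on a modified sample. In that sample the calibration points $(\bx^{(k)},\ba^{(k)},\by^{(k)})_{k\in\Calib}$ are kept as they are, and the test point is $(\bx^{(t)},\ba,\by^{(t)})$ with $(\bx^{(t)},\by^{(t)})$ drawn from its marginal law.

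This modified sample is not exchangeable, since the test mask is deterministic. The remedy is the MDA idea. Replace every calibration mask by $\widetilde{\ba}^{(k)}=\min(\ba^{(k)},\ba)$. Replace the test mask by $\ba$, and note that $\min(\ba,\ba^{(k)})=\widetilde{\ba}^{(k)}$, so the test point carries exactly the pattern the algorithm assigns it. Define the extended comparison matrix $M_i$ of~\eqref{eq:WMi} on $\Calib\cup\{t\}$ using these masks.

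The key observation to verify is that the miscoverage event~\eqref{eq:loo} is unchanged: the test-versus-$k$ comparison in Algorithm~\ref{algo:CP-MNR-Nested*} uses $\widetilde{\ba}^{(k)}$ on both sides. We then show that, under Assumptions~\ref{ass:iid}--\ref{ass:mask}, the matrix $M_i$ has a permutation-invariant law. Write each point as $(\bx^{(k)},\by^{(k)})$ together with an augmented mask $\min(\ba^{(k)},\ba)$ for calibration and $\ba$ for the test point. Each pairwise entry depends only on $(\bx^{(p)},\by^{(p)})$, $(\bx^{(q)},\by^{(q)})$ and the pointwise minimum of their masks, which is symmetric in $(p,q)$. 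Because masks are independent of $(\bx,\by)$ and the $(\bx,\by)$ are i.i.d., the proof of~\eqref{eq:symmetry} goes through for the $(\bx,\by)$ components.

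The main obstacle is the asymmetry in the masks: the test mask is $\ba$ while calibration masks are $\min(\ba^{(k)},\ba)$. The plan here follows Zaffran et al. The entries $M_i(p,q)$ only ever involve $\min(\text{mask}_p,\text{mask}_q)$, and for any pair this equals $\min(\ba^{(p)},\ba^{(q)},\ba)$ with the convention $\ba^{(t)}:=\ba$. So the comparison matrix is a function of i.i.d. pairs $(\bx^{(k)},\by^{(k)})$ and of masks that enter only through $\min(\cdot,\ba)$.

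The step I expect to be delicate is justifying exchangeability of $M_i$ in its index despite the test mask being fixed. I would try to handle it by noting that on the set where all comparisons use masks $\le \ba$, replacing $\ba^{(t)}$ by an independent copy $\ba'$ is harmless, because only $\min(\ba',\ba)$ would matter there. If that fails, I would condition on the entire mask vector and use exchangeability of $(\bx^{(k)},\by^{(k)})$ given masks. Once exchangeability of $M_i$ is established, the remainder is identical to the proof of Theorem~\ref{th:marginalnested}. We have $\P(t\in W(M_i))=\E[\#W(M_i)]/(\#\Calib+1)$, and Landau's bound $\#W(M_i)\le 2\alpha(\#\Calib+1)$ then gives~\eqref{eq:strong-conditional}.
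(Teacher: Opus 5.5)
Your reduction to a fixed test mask $\ba$ and your endgame are fine. The endgame is anti-symmetry of the comparison matrix, then $\P(t\in W(M_i))=\E[\#W(M_i)]/(\#\Calib+1)$, then Landau. The gap is the step everything hinges on: permutation invariance of the law of $M_i$. It is not merely unproven; it is false for the matrix you build.

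With per-point masks $\ba$ for $t$ and $\min(\ba^{(k)},\ba)$ for $k\in\Calib$, and the pair mask taken as the minimum of the two, the pairs $\{t,k\}$ are compared under $\min(\ba^{(k)},\ba)$. The pairs $\{k,l\}\subseteq\Calib$, however, are compared under $\min(\ba^{(k)},\ba^{(l)},\ba)$. Take a node $\ell$ with $a_\ell=1$ and $\P(a_\ell^{(k)}=1)=p\in(0,1)$: it is active with probability $p$ in the first pattern and $p^2$ in the second. Transposing $t$ with a calibration index therefore changes the law of $M_i$, so $\P(t\in W(M_i))$ need not equal $\P(k\in W(M_i))$.

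Your two fallbacks do not repair this. Giving the test point an independent copy $\ba'$ turns its comparisons into ones under $\min(\ba',\ba^{(k)},\ba)$, which analyses a different procedure from Algorithm~\ref{algo:CP-MNR-Nested*}. Conditioning on all masks freezes an index-dependent assignment of patterns. Permuting the i.i.d.\ pairs $(\bx^{(k)},\by^{(k)})$ then no longer corresponds to permuting rows and columns of $M_i$.

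The missing idea, which is the content of the paper's Steps~1--2, is to stop carrying the calibration masks as part of the data points. Given $\ba^{(t)}=\ba$, the substituted points $(\bx^{(j)},\ba,\by^{(j)})$, $j\in\Calib\cup\{t\}$, all share the test mask and are i.i.d. By Assumptions~\ref{ass:iid}--\ref{ass:mask}, the family $(\ba^{(k)})_{k\in\Calib}$ is i.i.d.\ and independent of them. The masks can therefore be reinterpreted as external random seeds of the randomized predictor $(\bx,\ba)\mapsto\widetilde{A}(\bx,\min(\ba,\bnu))$. The procedure then becomes a randomized nested conformal method run on exchangeable data.

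For the tournament argument the seeds should be indexed by pairs rather than points. Draw $\bnu^{\{p,q\}}$ independently from the mask law for every unordered pair $\{p,q\}\subseteq\Calib\cup\{t\}$, and compare $p$ and $q$ under $\min(\bnu^{\{p,q\}},\ba)$. The resulting matrix has three properties:
\begin{itemize}
\item it is anti-symmetric;
\item its law is invariant under simultaneous permutations of rows and columns, since the data are i.i.d.\ and the pair seeds are i.i.d.;
\item its row $t$ has the same conditional law as the actual indicators $\indic\{S_i^{(k),t}<S_i^{(t),k}\}$, $k\in\Calib$, because the pairs $\{t,k\}$ are distinct and the $\ba^{(k)}$ are i.i.d.\ and independent of everything else.
\end{itemize}
Since the miscoverage event depends only on row $t$, your final step then applies and gives the bound $2\alpha$ conditionally on $\ba^{(t)}=\ba$.
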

\begin{proof}
    As for the proof of Theorem~\ref{th:marginalnested}, 
    the proof essentially follows the lines and arguments of \citet{zaffran2024predictive}, 
    namely of Theorem~4.3. 
    We only report our own formulation of their proof with our setting and notations.

    \medskip

    \emph{Step 1: conditional exchangeability.}
    By Assumptions~\ref{ass:iid}~and~\ref{ass:mask}, the test infrastructure activity pattern 
    is independent from $\bigl(\bx^{(k)}, \by^{(k)}\bigr)$ for all
    $k \in \Calib \cup \{t\}$ so that the set 
    of observations with substituted infrastructure activity
    $\bigl\{\bigl(\bx^{(k)}, \ba^{(t)}, \by^{(k)} \bigr)_{k \in \Calib}, \bigl(\bx^{(t)}, \ba^{(t)}, \by^{(t)} \bigr)\bigr\}$
    is exchangeable conditional on $\ba^{(t)}$.
        
    \medskip
    
    \emph{Step 2: from Algorithm~\ref{algo:CP-MNR-Nested*} to Algorithm~\ref{algo:randomCP-MNR-Nested*}.}
    We recast CP-MNR-Nested$^{\star}$ as a procedure that performs the conformal step 
    on the observations with substituted infrastructure activity patterns, for which we have conditional exchangeability by Step~1.
    To do so, we introduce Algorithm~\ref{algo:randomCP-MNR-Nested*}, and show that the CP-MNR-Nested* procedure
    corresponds to a particular instance of Algorithm~\ref{algo:randomCP-MNR-Nested*} run with calibration inputs:
    \[
        \Bigl\{ {\bigl(\bu^{(k)}, \bv^{(k)}, \bz^{(k)} \bigr)}\Bigr\}_{k \in \Calib}
        = \Bigl\{ \bigl(\bx^{(k)}, {\ba^{(t)}}, \by^{(k)} \bigr)\Bigr\}_{k \in \Calib} \,.
    \] 

    First, we use that 
    $\widetilde{A}(\cdot,\cdot)=\widehat{R}\bigl(\widehat{A}(\cdot, \cdot), \cdot\bigr)$
    so that, for any observation $(\bu, \bv, \bz)$, 
    its non-conformity score at node $i \in [m]$ 
    with augmented infrastructure activity 
    $\widetilde{\ba} = \min \bigl( \bv, \ba^{(t)} \bigr)$
    of Algorithm~\ref{algo:CP-MNR-Nested*} rewrites as
    \[
    s_i\Bigl((\bu, \widetilde{\ba}), \bz; \widehat{R}\bigl(\widehat{A}(\cdot, \cdot), \cdot\bigr)\Bigr) 
    = s_i\Bigl((\bu, \ba^{(t)}), \bz; \widehat{R}^{\star}\bigl(\widehat{A}^{\star}(\cdot, \cdot; \bv), \cdot; \bv\bigr)\Bigr) \,,
    \]
    where $\widehat{A}^\star(\cdot, \cdot; \bv) : (\bx, \ba) \mapsto \widehat{A}\bigl(\bx, \min(\ba, \bv)\bigr)$ is a modified base predictor,
    and similarly, $\widehat{R}^\star\bigl(\widehat{A}^\star(\cdot, \cdot; \bv), \cdot; \bv\bigr) : (\bx, \ba) \mapsto \widehat{R}\bigl(\widehat{A}^\star(\bx, \ba; \bv), \min(\ba, \bv)\bigr)$
    is its reconciled version.

    Second, instead of considering the modified 
    reconciled predictor evaluated
    with infrastructure activity $\bv^{(k)}$,
    we introduce a reconciled predictor 
    with random infrastructure activity 
    \[
    \widetilde{g}^{(k)}(\cdot, \cdot) = \mathcal{G}\bigl((\bu^{(k)}, \bv^{(k)}, \bz^{(k)})_{k \in \Calib}, (\bx^{(t)}, \ba^{(t)}, \by^{(t)}); \bnu^{(k)}\bigr) = \widehat{R}^{\star}(\widehat{A}^{\star}(\cdot, \cdot ; \, {\bnu^{(k)}}), \cdot ; \, {\bnu^{(k)}}) \,,
    \]
    where $\bnu^{(k)}$ is drawn independently from 
    $\bigl\{(\bu^{(k)}, \bv^{(k)}, \bz^{(k)})_{k \in \Calib}, (\bx^{(t)}, \ba^{(t)}, \by^{(t)}) \bigr\}$,
    and $\mathcal{G}$ is a randomized algorithm\footnote{The randomization requires to adapt the argument
    from the proof of Theorem~\ref{th:marginalnested} to the case of randomized predictors, 
    which we omit for the sake of simplicity (see \citealp[Appendix~C]{zaffran2024predictive} for more details).} 
    that outputs a pre-determined function that is independent from the input data.
    Hence, setting $\bnu^{(k)} = \ba^{(k)}$ for each $k \in \Calib$, 
    which is a valid choice since $\ba^{(k)}$ is independent 
    from $\bigl\{(\bx^{(k)}, \, {\ba^{(t)}},\, \by^{(k)}) \bigr\}_{k \in \Calib \cup \{t\}}$     
    by Assumption~\ref{ass:iid}~and~\ref{ass:mask}, we can rewrite 
    the term involved in the third step of Algorithm~\ref{algo:CP-MNR-Nested*} as
    \begin{equation} \label{eq:rewriting}
    \#\biggl\{k \in \Calib : 
    s_i\Bigl(\bigl({\bu^{(k)}}, {\ba^{(t)}}\bigr), {\bz^{(k)}}; \, \widetilde{g}^{(k)}(\cdot, \cdot) \Bigr)
    < \, s_i\Bigl(\bigl({\bx^{(t)}}, {\ba^{(t)}}\bigr), \by; \, \widetilde{g}^{(k)}(\cdot, \cdot) \Bigr) \biggr\} \,.
    \end{equation}
    As a consequence, this specific instance of Algorithm~\ref{algo:randomCP-MNR-Nested*} 
    is equivalent to Algorithm~\ref{algo:CP-MNR-Nested*}.

        \begin{algorithm}[t]
	\caption{\label{algo:randomCP-MNR-Nested*} CP--MNR-Nested$^\star$ at level $1-\alpha$ with random infrastructure activity patterns}
	\begin{algorithmic}
    \STATE\textbf{Input:} Predictor $\widehat{R}^{\star}\bigl(\widehat{A}^{\star}(\cdot, \cdot; \cdot), \cdot; \cdot\bigr)$;
    dataset $\Bigl\{ {\bigl(\bu^{(k)}, \, \bv^{(k)}, \, \bz^{(k)}\bigr)} : k \in \Calib \Bigr\}$; 
    test point $\bigl({\bx^{(t)}}, \, {\ba^{(t)}}\bigr)$.
    \vspace{0.25em}
	\STATE 1: \textbf{for} $k \in \Calib$ \textbf{do} consider a {random} augmented nested infrastructure activity pattern \\
    \vspace{0.25em}
    \STATE \qquad {Randomly draw  $\ba^{(k)}$ independently from 
        $\bigl\{\bigl({\bu^{(k)}}, {\bv^{(k)}}, {\bz^{(k)}}\bigr)_{k \in \Calib}, \bigl({\bx^{(t)}}, {\ba^{(t)}}, {\by^{(t)}}\bigr) \bigr\}$}\\
    \vspace{0.15em}
        \STATE \qquad {Define accordingly a modified reconciled predictor, \quad
       $\widetilde{g}^{(k)}(\cdot, \cdot) = \widehat{R}^{\star}\Bigl(\widehat{A}^{\star}\bigl(\cdot, \cdot; \ba^{(k)}\bigr), \cdot; \ba^{(k)}\Bigr)$}
       \STATE \textbf{end for} \\
    \vspace{0.25em}
       \STATE 3: \textbf{for} $i \in [m]$ \textbf{do} build the prediction set $\C_i\bigl({\bx^{(t)}}, {\ba^{(t)}}\bigr)$ defined as \\
       \STATE \qquad $\cC_i\bigl({\bx^{(t)}}, {\ba^{(t)}}\bigr) = \Biggl\{ \Pi_i(\by) \in \R : (1-\alpha)\bigl(\#\Calib+1\bigr) >$ \\
       \vspace{-0.5em}
       \STATE \hspace{9.3em} $\displaystyle{
        \#\biggl\{k \in \Calib : s_i\Bigl(\bigl({\bu^{(k)}}, {\ba^{(t)}}\bigr), {\bz^{(k)}}; \, \widetilde{g}^{(k)}(\cdot, \cdot) \Bigr)
        < \, s_i\Bigl(\bigl({\bx^{(t)}}, {\ba^{(t)}}\bigr), \by; \, \widetilde{g}^{(k)}(\cdot, \cdot) \Bigr) \biggr\}
       } \Biggr\}$ \\
        \vspace{.25em}
       \STATE \textbf{end for} \\
    \STATE \textbf{Output:} Predictions sets $\Bigl\{\cC_i\bigl({\bx^{(t)}}, {\ba^{(t)}}\bigr): i \in [m] \Bigr\}$.
\end{algorithmic}
\end{algorithm}

    \medskip

    \emph{Step 3: Conditional Coverage.}
    The same argument and notations as 
    in the proof of Theorem~\ref{th:marginalnested} conclude. 
    Indeed, since $\mathcal{G}$ treats symmetrically 
    all the observations within 
    the set of observations with
    substituted infrastructure activity, 
    we obtain that 
    \begin{align} \label{eq:symmetry2}
    \nonumber M^{\star}_i\bigl(\sigma(p),\sigma(q)\bigr) &= 
    \M^{\star}_i\bigl((\bx^{(k)}, {\ba^{(t)}}, \by^{(k)})_{k \in \Calib \cup \{t\}}; (\ba^{(k)})_{k \in \Calib} \bigr)(\sigma(p),\sigma(q)) \\
    &= \M^{\star}_i\bigl((\bx^{(\sigma(k))}, {\ba^{(t)}}, \by^{(\sigma(k))})_{k \in \Calib \cup \{t\}}; (\ba^{\sigma(k)})_{k \in \Calib} \bigr)(p,q) \,,
    \end{align}
    where $\M^{\star}_i$ the function that maps the observations with substituted infrastructure activity 
    and the true random infrastructure activity patterns to a comparison matrix $M^{\star}_i$, 
    based on the equivalence obtained in~\eqref{eq:rewriting}, 
    in a similar way as in the proof of Theorem~\ref{th:marginalnested}.

    Therefore, the independence of the true infrastructure activity patterns 
    $\bigl\{\ba^{(k)}\bigr\}_{k \in \Calib}$ from the observations with substituted infrastructure activity
    $\bigl\{(\bx^{(k)}, \, {\ba^{(t)}},\, \by^{(k)}) \bigr\}_{k \in \Calib \cup \{t\}}$,
    the exchangeability of the latter conditional on $\ba^{(t)}$, and 
    the expression obtained in~\eqref{eq:symmetry2},
    entails that, conditional on $\ba^{(t)}$, we have
    \[
    \bigl(M^{\star}_i\bigl(\sigma(p),\sigma(q)\bigr)\bigr)_{p,q \in \Calib \cup \{t\}} 
    \overset{d}{=} \M^{\star}_i\bigl((\bx^{(k)}, {\ba^{(t)}}, \by^{(k)})_{k \in \Calib \cup \{t\}}; (\ba^{(k)})_{k \in \Calib} \bigr)
    = M^{\star}_i \,.
    \]
    Finally, we conclude by using the same argument as in the proof of Theorem~\ref{th:marginalnested},
    and get the infrastructure activity-conditional guarantee in~\eqref{eq:strong-conditional}, as announced.
\end{proof}

\begin{remark} \label{rk:assertion}
    The conditional coverage guarantee of Theorem~\ref{th:strong-conditional} 
    holds for any reconciliation Algorithm $\cR$, 
    including the trivial one that does not perform any reconciliation step.
    Therefore, the assertion of Theorem~\ref{th:strong-conditional} is
    not that conditional coverage is guaranteed 
    thanks to the infrastructure-aware reconciliation step,
    but rather that conditional coverage is a 
    consequence of the assumptions of Theorem~\ref{th:strong-conditional}, 
    and is preserved even
    when incorporating such a reconciliation step.
\end{remark}
% \begin{remark}
% % --- Imputation is mandatory, not forecast reconciliation ---
% In MDA, imputation is mandatory, 
% as standard learning algorithms cannot operate directly on incomplete inputs.
% By contrast, in MNR, infrastructure-aware reconciliation is optional:
% it can improve forecast accuracy, but it is not required to produce forecasts.
% In particular, the conditional coverage guarantee of CP-MNR-Nested$^{\star}$ 
% in Theorem~\ref{th:strong-conditional} remains valid even 
% in the absence of any reconciliation step.
% While this may appear surprising, it follows directly from Assumptions~\ref{ass:iid} and~\ref{ass:mask}, 
% which imply conditional validity for forecasts that do not depend on the infrastructure activity pattern.
% The significance of Theorem~\ref{th:strong-conditional} is therefore that 
% conditional validity is preserved even when incorporating an infrastructure-aware reconciliation step.
% \end{remark}

\section{Implementation of the Algorithms} \label{app:implementation}
This Supplementary Material provides a detailed description 
of the practical implementation of the proposed algorithms.
First, Section~\ref{app:implementation:predict-then-reconcile} describes
the construction of the base and reconciled predictors, 
involved in the first step of both Algorithms~\ref{algo:CP-MNR} and~\ref{algo:CP-MNR-Nested*},
from Algorithm~\ref{algo:predict-then-reconcile}, 
which is summarized in Algorithm~\ref{algo:predict-then-reconcile},
Then, Section~\ref{app:implementation:CP-MNR-Nested*}
presents the implementation of the CP-MNR-Nested$^{\star}$ procedure.
Finally, Section~\ref{app:implementation:Adaptive-CP-MNR}
is devoted to the implementation of the Adaptive CP-MNR procedure.

\subsection{Construction of the base and reconciled predictors}
\label{app:implementation:predict-then-reconcile}
The construction of the base and reconciled predictors
is summarized in Algorithm~\ref{algo:predict-then-reconcile}.
The formulation encompasses 
the three predictors introduced 
in Section~\ref{sec:forecastreconciliation}
and considered in the experiments of Section~\ref{sec:experiments},
namely the Direct (no reconciliation), OLS and rOLS predictors.

In Algorithm~\ref{algo:predict-then-reconcile}, 
the learning algorithm $\cA$ (e.g., independent
instances of a regression algorithm for each node) 
is fitted on the subset indexed by $\Train$ 
of the available data $\D$, and outputs 
a predictor $\widehat{A} \in \mathcal{F}$,
where $\mathcal{F} = \{f : \X \times \{0,1\}^m \to \R^m\}$.
% --- Forecast reconciliation ---
Given the base predictor $\widehat{A}$,  
the reconciliation algorithm $\cR$ (e.g., rOLS)
is fitted on the same subset of $\D$ 
indexed by $\Train$, and outputs 
a reconciliation function $\widehat{R} : \R^m \times \{0,1\}^m \to \R^m$.
The combination of a base predictor and a reconciliation function
yields a reconciled predictor $\widetilde{A} \in \mathcal{F}$, defined as
$\widetilde{A} : (\bx, \ba) \longmapsto \widehat{R}\bigl(\widehat{A}(\bx, \ba), \ba\bigr)$.

The reconciliation algorithms considered in this article
are constant, in the sense that they 
do not depend on the training data. 
The above formalism thus covers these three cases,
but also allows for more general reconciliation algorithms 
that may depend on the training data.
For instance the MinT reconciliation procedure \citep{wickramasuriya2019optimal} 
consists of a projection matrix that is computed from the data by
computing the empirical covariance matrix of the forecast error.
The explicit form of the reconciliation function 
$\widehat{R}$ for the three considered strategies is as follows.

% --- Concrete reconciliation strategies ---
The Direct strategy leaves base forecasts unchanged. 
In this case, the reconciliation algorithm $\cR$ is constant and outputs
\begin{equation} \label{eq:Direct}
    \widehat{R} : \bigl(\widehat{\by}, \ba\bigr) \longmapsto \widehat{\by} \,. \tag{Direct}
\end{equation}
The OLS strategy reconciles forecasts by projecting the base forecasts onto the coherent subspace $\Im(H)$.
In this case, let $P = H\bigl(H^{\transp}H\bigr)^{-1}H^{\transp}$ be the orthogonal projection 
onto $\Im(H)$, and consider the constant reconciliation algorithm $\cR$ that outputs 
\begin{equation} \label{eq:OLS}
    \widehat{R} : \bigl(\widehat{\by}, \ba\bigr) \longmapsto P \, \widehat{\by} \,. \tag{OLS}
\end{equation}
The rOLS strategy reconciles forecasts by projecting the base forecasts onto 
the refined coherent subspace $\Im\bigl(H^{(\ba)}\bigr)$.
In this case, the reconciliation algorithm $\cR$ is constant and outputs the reconciliation function
\begin{equation} \label{eq:rOLS}
    \widehat{R} : \bigl(\widehat{\by}, \ba\bigr) \longmapsto P^{(\ba)} \, \widehat{\by} \,, \tag{rOLS} 
\end{equation}
where  $P^{(\ba)} = H^{(\ba)}\bigl((H^{(\ba)})^{\transp}H^{(\ba)}\bigr)^{\dagger}(H^{(\ba)})^{\transp}$ 
with $(\cdot)^{\dagger}$ being the Moore--Penrose pseudo-inverse.

\begin{algorithm}[t]
	\caption{\label{algo:predict-then-reconcile} Construction of the predictors}
	\begin{algorithmic}
		\STATE\textbf{Input:} Structural matrix $H$; dataset $\D = \bigl\{ \bigl(\bx^{(k)}, \ba^{(k)}, \by^{(k)}\bigr) : k \in \Train \cup \Calib \bigr\}$;
        set of indices $\Train$ for training; learning algorithm $\cA$; reconciliation algorithm $\cR$.
		\STATE 1: Fit $\cA$: $\widehat{A}(\cdot, \cdot) \leftarrow \cA\Bigl(\bigl\{ \bigl(\bx^{(k)}, \ba^{(k)}, \by^{(k)}\bigr) : k \in \Train \bigr\}\Bigr) $ \\
		\STATE 2: Fit $\cR$: $\widehat{R}(\cdot, \cdot) \leftarrow \cR\Bigl(\Bigl\{ \bigl(\widehat{A}\bigl(\bx^{(k)}, \ba^{(k)}\bigr), \ba^{(k)}, \by^{(k)}\bigr) : k \in \Train\Bigr\}, \, H\Bigr)$ \\
              \textbf{Output:} Fitted predictor $\widehat{A}$, fitted reconciliation function $\widehat{R}$, and fitted reconciled predictor $\widetilde{A} = \widehat{R}(\widehat{A}(\cdot, \cdot), \cdot)$.
	\end{algorithmic}
\end{algorithm}

\subsection{Implementation of CP-MNR-Nested$^{\star}$}
\label{app:implementation:CP-MNR-Nested*}
First, the computation of the 
conformal prediction set in
Algorithm~\ref{algo:CP-MNR-Nested*}
is made explicit from a practical standpoint. 
Then, several practical aspects of the implementation are discussed,
including the nature of the resulting prediction set
and the choice of the calibration indices.

This implementation follows the efficient implementation
of cross-conformal prediction sets (\citealp[Algorithm 1]{gupta2022nested}).
Without loss of generality, the conformal prediction sets 
produced by CP-MNR-Nested$^{\star}$ are of the form
\begin{equation} \label{eq:general_form}
    \C = \Bigl\{ c \in \R : \#\bigl\{ k : r_k < |c_k - c| \bigr\} < (1-\alpha)(N + 1) \Bigr\} \,, 
\end{equation}
for some $N \in \mathbb{N}$, some non-negative scalars $r_1, \ldots, r_N$, and scalars $c_1, \ldots, c_N$.

The above expression does not yield 
a direct implementation,
but can be reformulated in terms of a family of overlapping intervals.
For each $k \in [N]$, let $I_k = [c_k - r_k, c_k + r_k]$,
and note that $\{r_k < |c_k - c|\} = \{c \notin I_k\}$ for all $c \in \R$.
Thus, it follows that
\begin{equation} \label{eq:sweep_line_form}
    \C = \Bigl\{ c \in \R : \#\{k : c \in I_k\}  > \alpha(N+1) - 1 \Bigr\} \,.
\end{equation}
From~\eqref{eq:sweep_line_form}, 
the computation of the conformal set
reduces to identifying the regions of $\R$ 
where the function $c \mapsto \sum_{k=1}^N \indic_{c \in I_k}$ 
exceeds a fixed threshold. 
This function is piecewise constant with changes only at the interval endpoints 
$\{c_k - r_k,, c_k + r_k\}_{k=1}^N$. 
Consequently, sorting these $2N$ breakpoints 
and performing a single pass that updates the active interval count, 
thereby yield all connected components of the set $\mathcal{C}$.

\begin{remark}
    The computation of the prediction sets itself
    in CP-MNR-Nested$^{\star}$ is not computationally intensive.
    On the contrary, the computation of the reconciled forecasts
    for each calibration point and each test point
    can be computationally expensive, as it requires 
    to compute a projection matrix for each calibration point 
    and each test point.
\end{remark}

% --- Discussion of the calibration indices ---
\begin{remark}
In Algorithm~\ref{algo:CP-MNR-Nested*}, 
the practitioner can pick any subset 
of calibration indices $\tCalib \subseteq \Calib$
to mitigate the issue that would arise from 
the potential sparsity of the 
augmented infrastructure activity patterns.
\end{remark}

\begin{remark}
In the experiments, the subsampling is performed by selecting
the calibration observations with infrastructure activity vector
having at most $600$ different components with respect to the test one.
\end{remark}

\begin{remark}
The CP-MNR-Nested$^{\star}$ procedure does not 
necessarily yield intervals.
For simplicity, the convex hull of the resulting prediction set
is considered in Section~\ref{sec:experiments}.
\end{remark}

\subsection{Implementation of Adaptive-CP-MNR}
\label{app:implementation:Adaptive-CP-MNR}
\begin{algorithm}[t]
	\caption{\label{algo:AdaptiveCP-MNR} Adaptive CP-MNR at level $1-\alpha$ with step size $\gamma$}
	\begin{algorithmic}
    \STATE\textbf{Input:} Dataset $\D = \Bigl\{ \bigl(\bx^{(k)}, \ba^{(k)}, \by^{(k)}\bigr) : k \in \Train \cup \Calib \Bigr\}$; 
        algorithms $\cA$ and $\cR$; 
        set of indices $\Test$.
	\STATE \textbf{Initialization:} Set $\tau = 1$; $\D^{(\tau)} = \D$; $\Train^{(\tau)} = \Train$; $\Calib^{(\tau)} = \Calib$; $\balpha^{(\tau)} = (\alpha, \ldots, \alpha)^{\top} \in [0,1]^m$. \\
    \vspace{0.5em}
    \STATE \textbf{for} $t \in \Test$ \textbf{do}  \\
    \STATE \qquad 1: \textbf{for} $i \in [m]$ \textbf{do} run CP-MNR at level $1-\alpha_i^{(\tau)}$ on $\D^{(\tau)}$ and get $\cC_i^{(t)} = \C_i\bigl(\bx^{(t)}, \ba^{(t)}\bigr)$ \\
    \vspace{0.15em}
    \STATE \qquad \qquad 1.1: Observe $\by^{(t)}$ and compute \quad \qquad $\err_i^{(\tau)} = \indic_{y_i^{\!(t)} \notin \cC_i^{(t)}} \times \indic_{a^{\!(t)}_i= \, 1}$
    \vspace{0.05em}
    \STATE \qquad \qquad 1.2: Update miscoverage level  \qquad $\alpha_i^{(\tau+1)} = \alpha_i^{(\tau)} + \gamma \times \err_i^{(\tau)}$
    \vspace{0.15em}
    \STATE \qquad \textbf{end for} \\
    \vspace{0.15em}
    \STATE \qquad 2: Update dataset $\D^{(\tau+1)} = \D^{(\tau)} \cup \bigl\{ \bigl(\bx^{(t)}, \ba^{(t)}, \by^{(t)}\bigr) \bigr\}$
    \STATE \qquad 3: Split  $\D^{(\tau+1)}$ into $\Train^{(\tau+1)}$ and $\Calib^{(\tau+1)}$ \\
    \STATE \qquad 4: Update counter $\tau = \tau + 1$.   
    \STATE \textbf{end for} \\
    \STATE \textbf{Output:} Predictions sets $\Bigl\{\cC_i^{(t)} : i \in [m], \, t \in \Test \Bigr\}$.
\end{algorithmic}
\end{algorithm}
% --- Formalization of the adaptive extension ---
This section describes Adaptive-CP-MNR,
the online extension of CP-MNR
introduced in Section~\ref{sec:adaptive}.
The procedure, summarized in Algorithm~\ref{algo:AdaptiveCP-MNR},
is a straightforward combination of CP-MNR 
from Algorithm~\ref{algo:CP-MNR} 
with the adaptive conformal inference framework of \citet{gibbs2021adaptive}.

In Algorithm~\ref{algo:AdaptiveCP-MNR}, a 
loop is performed over the test points $t \in \Test$ in chronological order
and a counter $\tau \geq 1$ is used to denote the current round of the loop.
The first step of the procedure consists in running 
CP-MNR with node-specific coverage levels that are updated sequentially.
The sequence is initialized at each node by the nominal level $\alpha \in [0, 1]$.
For each node $i \in [m]$, we perform CP-MNR at the current level $1-\alpha_i^{(\tau)}$,
instead of the fixed nominal level $1-\alpha$, 
which yields the prediction set $\C_i^{(t)}$.
When the true value $y_i^{(t)}$ is observed, 
the miscoverage level is updated with the following rule:
\begin{equation}\label{eq:ACI_update}
\forall \tau \geq 1 \,, \qquad  \alpha_i^{(\tau+1)} = \alpha_i^{(\tau)} + \gamma \Bigl(\alpha - \indic_{y_i^{(t)} \notin \cC_i^{(t)}}\Bigr) \, \indic_{a^{(t)}_i = \,1}  \,,
\end{equation}
where $\indic_{(\cdot)}$ is the indicator function and 
$\gamma > 0$ is a step size parameter.
The step size $\gamma$ controls the adaptation speed, 
and is set to $\gamma = 0.02$ in our experiments.
A small value of $\gamma$ yield a slow adaptation to temporal drifts, 
while a large value of $\gamma$ may lead to oscillations 
outside of the $[0, 1]$ interval \citep{gibbs2021adaptive}.
The second, third and fourth steps of the procedure 
consist in updating the dataset, 
the training and calibration sets of indices, 
and the counter $\tau$ for the next round of the loop.
\begin{remark}
% --- Comments on the update rule ---
By convention, $\alpha_i^{(\tau)}\leq0$ yields the full space and $\alpha_i^{(\tau)}\geq1$ the empty set.
\end{remark}

% --- Theoretical guarantee ---
Adaptive CP-MNR inherits 
the classic asymptotic coverage guarantee 
of adaptive conformal inference (\citealp[Proposition 4.1]{gibbs2021adaptive}).
Precisely, at each node $i \in [m]$, 
the only difference with the original adaptive conformal inference framework 
is that the miscoverage level is only updated 
if the node is active, i.e., if $a_i^{(t)} = 1$. 
Therefore, the following proposition is a direct application of 
the classic asymptotic coverage guarantee of ACI to the Adaptive CP-MNR procedure.

\begin{proposition}
    The Adaptive CP-MNR procedure of Algorithm~\ref{algo:AdaptiveCP-MNR} satisfies
    \begin{equation} \label{eq:propositionACI}
        \forall i \in [m]\,, \qquad \biggl|\frac{1}{T_i} \sum_{t \in \Test} 
        \indic_{y_i^{(t)} \notin \cC_i^{(t)} } \indic_{a_i^{\!(t)} = \, 1}  - \alpha \biggr| \leq \frac{\max\{\alpha, 1-\alpha\} + \gamma}{\gamma \, T_i} \as,
    \end{equation}
    where ${T_i = \sum_{t \in \Test} \indic_{a^{\!(t)}_i = \, 1}}$ 
    denotes the number of active test points for node $i$.
    In particular, if $T_i$ goes to infinity as $\#\Test$ goes to infinity, 
    then \eqref{eq:propositionACI} entails that
    \begin{equation}
        \lim_{T_i \to \infty} \frac{1}{T_i} \sum_{t \in \Test} \indic_{y_i^{(t)} \in \cC_i^{(t)}} \indic_{a_i^{\!(t)} = \, 1} = 1-\alpha \as
    \end{equation}
\end{proposition}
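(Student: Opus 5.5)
The plan is to reproduce the deterministic argument behind \citealp[Proposition~4.1]{gibbs2021adaptive} for a fixed node $i \in [m]$, restricted to the subsequence of test rounds at which node $i$ is active. I will work with the update rule~\eqref{eq:ACI_update}, namely $\alpha_i^{(\tau+1)} = \alpha_i^{(\tau)} + \gamma\bigl(\alpha - \err_i^{(\tau)}\bigr)\indic_{a_i^{(t)}=1}$. Here $\err_i^{(\tau)} = \indic_{y_i^{(t)} \notin \cC_i^{(t)}}$, and this is the rule that the bound in~\eqref{eq:propositionACI} is meant for. No probabilistic assumption is needed: the inequality holds for every realization of the data, which is why it holds almost surely.

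\emph{Step 1 (reduction to active rounds).} On rounds with $a_i^{(t)}=0$ the level $\alpha_i^{(\tau)}$ is unchanged, and such rounds contribute nothing to the sum in~\eqref{eq:propositionACI}. I therefore re-index the active rounds as $\tau_1 < \dots < \tau_{T_i}$. Along this subsequence the recursion is exactly the classical ACI recursion $\alpha_{j+1} = \alpha_j + \gamma(\alpha - \err_j)$, with $\alpha_1 = \alpha$. The intermediate CP-MNR runs, the dataset updates and the re-splits affect only \emph{which} set $\cC_i^{(t)}$ is produced. The argument never uses how that set is built, except through the two conventions below.

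\emph{Step 2 (boundedness, the key step).} I show by induction that $\alpha_j \in [-\gamma, 1+\gamma]$ for all $j$, treating three cases.
\begin{itemize}
\item If $\alpha_j \in [0,1]$, a single step moves the level by at most $\gamma\max\{\alpha,1-\alpha\} \leq \gamma$, so $\alpha_{j+1}$ stays in $[-\gamma,1+\gamma]$.
\item If $\alpha_j < 0$, the convention in the remark above makes $\cC_i^{(t)}$ the full space. Hence $\err_j = 0$, and the level increases by $\gamma\alpha \leq \gamma$, which keeps it in $[-\gamma, 1+\gamma]$.
\item If $\alpha_j > 1$, the set is empty. Since the node is active the observed value cannot be covered, so $\err_j = 1$, and the level decreases by $\gamma(1-\alpha) \leq \gamma$.
\end{itemize}
This case analysis is the only place where the specific procedure matters. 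The subtle point is to check that activity of the node is exactly the condition under which both the update and the miscoverage indicator are counted, so that the ``empty set forces an error'' step is legitimate.

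\emph{Step 3 (telescoping).} Summing the recursion over the active rounds gives
\[
\alpha_{T_i+1} - \alpha = \gamma\Bigl(\alpha T_i - \sum_{t \in \Test} \indic_{y_i^{(t)} \notin \cC_i^{(t)}}\indic_{a_i^{(t)}=1}\Bigr).
\]
By Step~2, $|\alpha_{T_i+1} - \alpha| \leq \max\{\alpha + \gamma,\, 1-\alpha+\gamma\} = \max\{\alpha,1-\alpha\}+\gamma$. Dividing by $\gamma T_i$ yields~\eqref{eq:propositionACI}. The limit statement follows immediately: the right-hand side of~\eqref{eq:propositionACI} tends to $0$ as $T_i \to \infty$, and coverage equals $1$ minus miscoverage on active rounds.
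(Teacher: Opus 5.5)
Your proposal is correct and follows the same route as the paper's proof, which also invokes \citealp[Proposition~4.1]{gibbs2021adaptive}. Both arguments first bound the level sequence in $[-\gamma, 1+\gamma]$, the paper by contradiction and you by induction on the full-space/empty-set conventions, and then telescope over the rounds where node $i$ is active. Your explicit re-indexing on active rounds, and your use of the update rule~\eqref{eq:ACI_update} rather than the literal update line of Algorithm~\ref{algo:AdaptiveCP-MNR} (which is inconsistent with it), supply exactly the details that the paper's two-line sketch leaves implicit.
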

\begin{proof}
    The proof is a direct application of \citealp[Proposition 4.1]{gibbs2021adaptive}.

    We first get by contradiction that $\alpha_i^{(\tau)} \in [-\gamma, 1+\gamma]$ 
    almost surely for all $\tau \geq 1$. Then use a telescopic sum argument to 
    obtain the bound in~\eqref{eq:propositionACI}.
\end{proof}

\section{Additional Numerical Results} \label{app:additional_results}

    \subsection{Comparison of the reconciliation procedures 
    in terms of efficiency}
    The performance of the different reconciliation procedures 
    in terms of prediction interval length are reported in Table~\ref{tab:winrate_efficiency},
    and confirm the choice of rOLS as the reconciliation procedure.
    The average length of the prediction intervals is computed
    conditionally on the node being active for each nodes, 
    and the win rate of each reconciliation procedure 
    is computed with respect to the Direct baseline.
    Hence, a win rate significantly above $50\%$ indicates 
    that the reconciliation procedure yields shorter prediction intervals than the Direct baseline,
    advocating for its use in practice if the validity rates are similar.
    The results are gathered by node type to exhibit differences in performance across the hierarchy.

    OLS yields, at best, a slight improvement over the Direct baseline: 
    it shortens intervals at the national and regional levels, 
    but it is not consistently beneficial at the station level.
    By contrast, rOLS is uniformly better, 
    with a clear efficiency gain across the hierarchy;
    for instance with a win rate of $60\%$, $61.6\%$, 
    and $59.4\%$ at the national, regional, 
    and station levels, respectively, for CP-MNR.
    Note that CP-MNR-Nested$^{\star}$ is not included in the comparison, 
    since its construction is specific to the rOLS setting
    (the infrastructure activity vector is not taken in 
    consideration in Direct and OLS).
    At the station level, the highest efficiency gain is obtained for 
    the CP procedure, with a win rate of $81\%$. 
    This efficiency gain is actually due to the fact
    that CP does not achieve the nominal coverage level, 
    as discussed next.

    \subsection{Comparison of the reconciliation procedures 
    in terms of validity}

    % --- It does not come at the cost of coverage (except for CP) ---
    Any improvement in terms of efficiency is only meaningful 
    if it does not come at the cost of coverage.
    In this regard, Tables~\ref{tab:validity_all} and~\ref{tab:validity_relaxed_all} 
    reports the validity rates (as defined in Section~\ref{sec:validity}) 
    at the nominal level $1-\alpha = 90\%$ 
    and at the relaxed level $1-2\alpha = 80\%$, respectively.
    Both tables demonstrate that the rOLS reconciliation procedure
    does not compromise the validity of the prediction intervals.
    It stems from improved point forecast accuracy,
    both on the calibration and test sets, 
    recovering the theoretical insights from~\citet{principato2024conformal}.
    In essence, they show that, under the assumption of i.i.d.\ and 
    elliptical non-conformity scores, forecast reconciliation improves 
    the efficiency of conformal prediction sets without compromising their validity.
    The only exception is the CP procedure, 
    which does not subsample the calibration data according to node activity.
    Therefore, the rOLS forecast errors are set to zero 
    in the calibration set for inactive nodes, 
    leading to a lack of coverage conditional 
    on the corresponding node being active in the test set.
    
    \clearpage
    \begin{table}[t]
        \centering
        \caption{Win rate of OLS and rOLS with respect to Direct
        in terms of prediction interval average lengths.
        Results are reported as $\operatorname{mean} \pm 1.96 \times \operatorname{standard \, error}$.
        % The best-performing reconciliation procedure is shown in bold.
        }

        \begin{tabular}{llccc}
        \toprule
        Node type & Reconciliation & CP & CP-MNR & Adaptive CP-MNR \\
        \midrule

        \multirow{2}{*}{National}
        & OLS
        & 50.0\% $\pm$ 16.7\%
        & 50.0\% $\pm$ 16.7\%
        & 80.0\% $\pm$ 13.3\% \\

        & rOLS
        & \textbf{60.0\% $\pm$ 16.3\%}
        & \textbf{60.0\% $\pm$ 16.3\%}
        & \textbf{90.0\% $\pm$ 10.0\%} \\

        \addlinespace

        \multirow{2}{*}{Regional}
        & OLS
        & 55.9\% $\pm$ 2.2\%
        & 55.9\% $\pm$ 2.2\%
        & 46.9\% $\pm$ 1.5\% \\

        & rOLS
        & \textbf{61.6\% $\pm$ 2.4\%}
        & \textbf{61.6\% $\pm$ 2.4\%}
        & \textbf{47.5\% $\pm$ 1.5\%} \\

        \addlinespace

        \multirow{2}{*}{Station}
        & OLS
        & 48.0\% $\pm$ 0.4\%
        & 42.4\% $\pm$ 0.3\%
            & 37.7\% $\pm$ 0.3\% \\

            & rOLS
            & \textbf{81.0\% $\pm$ 0.2\%}
            & \textbf{59.4\% $\pm$ 0.3\%}
            & \textbf{57.4\% $\pm$ 0.2\%} \\

            \bottomrule
            \end{tabular}

            \label{tab:winrate_efficiency}
        \end{table}

    \begin{table}[t]
    \centering
    \caption{Validity rates at level $1-\alpha = 90\%$ 
    for the different reconciliation procedures.
    Results are reported as $\operatorname{mean} \pm 1.96 \times \operatorname{standard \, error}$.
    The best-performing reconciliation procedure is shown in bold.}
    \label{tab:validity_all}
    \begin{tabular}{llccc}
    \toprule
    Node type & Reconciliation & CP & CP-MNR & Adaptive CP-MNR \\
    \midrule

    \multirow{3}{*}{National}
    & Direct
    & 10.0\% $\pm$ 10.0\%
    & 10.0\% $\pm$ 10.0\%
    & \textbf{0.0\% $\pm$ 0.0\%} \\

    & OLS
    & 30.0\% $\pm$ 15.3\%
    & 30.0\% $\pm$ 15.3\%
    & \textbf{0.0\% $\pm$ 0.0\%} \\

    & rOLS
    & \textbf{40.0\% $\pm$ 16.3\%}
    & \textbf{40.0\% $\pm$ 16.3\%}
    & \textbf{0.0\% $\pm$ 0.0\%} \\

    \addlinespace

    \multirow{3}{*}{Regional}
    & Direct
    & \textbf{67.2\% $\pm$ 0.8\%}
    & \textbf{67.2\% $\pm$ 0.8\%}
    & \textbf{62.8\% $\pm$ 1.4\%} \\

    & OLS
    & 53.8\% $\pm$ 1.7\%
    & 53.8\% $\pm$ 1.7\%
    & 49.1\% $\pm$ 1.6\% \\

    & rOLS
    & 51.9\% $\pm$ 1.2\%
    & 51.9\% $\pm$ 1.2\%
    & 47.8\% $\pm$ 0.9\% \\

    \addlinespace

    \multirow{3}{*}{Station}
    & Direct
    & \textbf{55.6\% $\pm$ 0.3\%}
    & \textbf{56.9\% $\pm$ 0.2\%}
    & \textbf{61.8\% $\pm$ 0.3\%} \\

    & OLS
    & 51.6\% $\pm$ 0.2\%
    & 55.8\% $\pm$ 0.1\%
    & 60.4\% $\pm$ 0.3\% \\

    & rOLS
    & 38.1\% $\pm$ 0.4\%
    & 56.2\% $\pm$ 0.4\%
    & 57.6\% $\pm$ 0.6\% \\
    \bottomrule
    \end{tabular}
    \end{table}

\begin{table}[t]
\centering
\caption{Validity rates at the relaxed level $1-2\alpha = 80\%$ 
    for the different reconciliation procedures.
    Results are reported as $\operatorname{mean} \pm 1.96 \times \operatorname{standard \, error}$.
    The best-performing reconciliation procedure is shown in bold.}
\label{tab:validity_relaxed_all}
\begin{tabular}{llccc}
\toprule
Node type & Reconciliation & CP & CP-MNR & Adaptive CP-MNR \\
\midrule

\multirow{3}{*}{National}
& Direct
& \textbf{100.0\% $\pm$ 0.0\%}
& \textbf{100.0\% $\pm$ 0.0\%}
& \textbf{100.0\% $\pm$ 0.0\%} \\

& OLS
& \textbf{100.0\% $\pm$ 0.0\%}
& \textbf{100.0\% $\pm$ 0.0\%}
& \textbf{100.0\% $\pm$ 0.0\%} \\

& rOLS
& \textbf{100.0\% $\pm$ 0.0\%}
& \textbf{100.0\% $\pm$ 0.0\%}
& \textbf{100.0\% $\pm$ 0.0\%} \\

\addlinespace

\multirow{3}{*}{Regional}
& Direct
& 97.2\% $\pm$ 0.7\%
& 97.2\% $\pm$ 0.7\%
& \textbf{100.0\% $\pm$ 0.0\%} \\

& OLS
& \textbf{98.8\% $\pm$ 0.5\%}
& \textbf{98.8\% $\pm$ 0.5\%}
& \textbf{100.0\% $\pm$ 0.0\%} \\

& rOLS
& \textbf{98.8\% $\pm$ 0.5\%}
& \textbf{98.8\% $\pm$ 0.5\%}
& \textbf{100.0\% $\pm$ 0.0\%} \\

\addlinespace

\multirow{3}{*}{Station}
& Direct
& \textbf{88.7\% $\pm$ 0.4\%}
& 94.2\% $\pm$ 0.3\%
& 98.3\% $\pm$ 0.1\% \\

& OLS
& 84.4\% $\pm$ 0.4\%
& \textbf{94.6\% $\pm$ 0.3\%}
& \textbf{98.5\% $\pm$ 0.1\%} \\

& rOLS
& 67.7\% $\pm$ 0.3\%
& 93.6\% $\pm$ 0.3\%
& 98.0\% $\pm$ 0.1\% \\

\bottomrule
\end{tabular}
\end{table}

\vspace*{\fill}

\clearpage

\end{document}

%% file: aoas-article-tikz.tex
\usetikzlibrary{calc}
\usetikzlibrary{positioning, fit}
\tikzset{
    softbox/.style={
        draw,
        rounded corners,
        inner sep=2pt,
        font=\small, 
        minimum width=1.cm,
        minimum height=0.55cm,
    }
}
\begin{tikzpicture}[
    node distance=1.6cm and 2.2cm,
    box/.style={draw, rectangle, rounded corners, align=center, minimum width=2.cm, minimum height=0.8cm},
    smallbox/.style={draw, rectangle, align=center, minimum width=3.2cm, minimum height=0.9cm},
    arrow/.style={->, thick, shorten >=2pt, shorten <=2.75pt},
    dashedbox/.style={draw, dashed, rounded corners, inner sep=0.5cm}
]

% ---------------- Left part ----------------
\node[box] (split) {Split historical data $\D$};

\node[right=2.3cm of split] (calset) {\!$\Bigl\{\bigl(\bx^{(k)}, \ba^{(k)}, \by^{(k)}\bigr) : k \in \Calib\Bigr\}$};

\node[below=1.35cm of split] (trainset) {$\Bigl\{\bigl(\bx^{(k)}, \ba^{(k)}, \by^{(k)}\bigr) : k \in \Train\Bigr\}$};

\node[box, below= 1.35cm of trainset] (base) {Base\\Predictor $\widehat{A}$};

\node[box, right = 0.75cm of base] (recon) {Reconciled\\Predictor $\widetilde{A}$};

% Arrows left part
\draw[arrow, dashed] (split) -- (calset);
\draw[arrow, dashed] (split) -- (trainset);
\draw[arrow] ($(trainset)+(0cm,-0.2cm)$) -- (base);
\draw[arrow] (base) -- (recon);
\draw[arrow, dotted] ($(trainset)+(0.85cm,-0.25cm)$) -- (recon);

\node[font=\scriptsize, gray] at (split.south) [xshift=-0.8cm, yshift=-0.75cm] {Training set};
\node[font=\scriptsize, gray] at (split.east) [xshift=1.0cm, yshift=+0.25cm] {Calibration set};

\node[font=\scriptsize, gray] at (base.east) [xshift=-0.025cm, yshift=+0.75cm] {Training phase};
\node[font=\scriptsize, gray] at (trainset.east) [xshift=1.1cm, yshift=+0.1cm] {Predict};

% ---------------- Right top ----------------
\node[right=1.15cm of calset] (htop) {$\Bigl\{\bigl(\bx^{(t)}, \ba^{(t)}\bigr)\Bigr\}$};

% ---------------- Inside conformal step ----------------
\node[softbox, right=0.75cm of recon, yshift=2.65cm] (S1b) {$\Bigl\{s_1^{(k)} : k \in \Calib \Bigr\}$};
\node[softbox, below=0.5cm of S1b] (SNb) {$\Bigl\{s_m^{(k)} : k \in \Calib \Bigr\}$};

\node[softbox, right=0.85cm of S1b, yshift=0cm] (S1h) {$q_1$};
\node[softbox, below=0.7cm of S1h] (SNh) {$q_m$};

% vertical dots
\node (vdots) at ($(S1b)!0.4!(SNb)$) {$\vdots$};
\node at ($(S1h)!0.4!(SNh)$) {$\vdots$};

\coordinate[left=1.25cm of vdots, yshift=-0.1cm] (recmid);
\draw[arrow] (recon.north) -- ++(0,0.5) |- (recmid);

\coordinate[below=-1.3cm of vdots] (reccal);
\draw[arrow, dashed] ($(calset.south)+(0.18cm,0.15cm)$) -- ($(reccal)+(0+0.35cm,0cm)$);

% \node[font=\scriptsize, gray] at (calset.south) [xshift=-0.215cm, yshift=-0.4cm] {Calibration \, scores};
\node[font=\scriptsize, gray] at (calset.south) [xshift=-0.875cm, yshift=-0.45cm] {Compute scores};
\node[font=\scriptsize, gray] at (calset.south) [xshift=2.4cm, yshift=-0.75cm] {Get quantiles};

% \node[font=\tiny, gray] at (calset.south) [xshift=4.55cm, yshift=-0.05cm] {(for all candidate values $\by$)};

% outputs
\node[right=0.75cm of S1h] (C1) {$\Bigl[\widetilde{y}_1^{\,(t)} \!\!\pm \, q_1\Bigr]$};
\node[below=0.45cm of C1] (CN) {$\Bigl[\widetilde{y}_{m}^{\,(t)} \!\!\pm \, q_m\Bigr]$};
\node (vdots) at ($(C1)!0.4!(CN)$) {$\vdots$};

\draw[arrow, dashed] ($(htop.south)+(0-0.15cm,0.15cm)$) -- ($(C1)+(-0.335,0.35cm)$);

\node[inner sep=0pt] (padbottom) at ($(CN.south)+(0,-0.25cm)$) {};
\node[inner sep=0pt] (padtop) at ($(C1.north)+(0,0.25cm)$) {};

\draw[arrow] (S1b) -- (S1h);
\draw[arrow] (SNb) -- (SNh);

\draw[arrow] (S1h) -- (C1);
\draw[arrow] (SNh) -- (CN);

% \draw[gray, opacity=0.5, very thick]
%     ($(calset.east)!0.45!(htop.west)+(0,0.4cm)$) --
%     ($(calset.east)!0.45!(htop.west)+(0,-3.5cm)$);
% \node[gray, font=\scriptsize] at (S1b.north) [yshift=0.25cm] {Calibration scores};

\node[font=\scriptsize, gray] at (CN.south) [xshift=0.25cm, yshift=-0.4cm] {Predict};
\draw[arrow] ($(recon.east)+(0cm, -0.cm)$) -- ++(5cm,0.) -| ($(CN.south)+(-0.335cm,0cm)$);

\coordinate (rightfit) at ($(C1.east)+(-0.25cm,0)$);

% ---------------- Conformal block ----------------
\node[dashedbox, fit=(S1b)(SNh)(rightfit)(padbottom)(padtop)] (conf) {};
\node[font=\normalsize] at (conf.south) [xshift = -1.2cm, yshift=0.3cm] {Component-wise conformal step};

\end{tikzpicture}